\documentclass[10pt,runningheads]{article}
\usepackage[utf8]{inputenc}
\usepackage{amsmath,amssymb,hyperref,array,multicol,verbatim,mathpazo,dsfont,physics,authblk,amsthm,multirow}
\usepackage[a4paper, top=2cm, bottom=2cm, left=2.2cm, right=2.2cm]{geometry}
\usepackage[normalem]{ulem}
\usepackage[numbers,sort&compress]{natbib} 
\usepackage[pdftex]{graphicx}
\usepackage[dvipsnames]{xcolor} 
\usepackage[title]{appendix}
\usepackage{mathtools}
\mathtoolsset{showonlyrefs}

\newtheorem{theorem}{Theorem}
\newtheorem{lemma}[theorem]{Lemma}
\newtheorem{corollary}[theorem]{Corollary}
\newtheorem{definition}{Definition}
\newtheorem{proposition}[theorem]{Proposition}



\newcommand{\id}{\mathds{1}}
\newcommand{\sdp}{\operatorname{sdp}}
\newcommand{\hil}{\mathcal{H}}
\newcommand{\uni}{\mathcal{U}}

\DeclareMathOperator*{\E}{\mathbb{E}}
\newcommand{\SH}[1]{\mathcal{S}\left({#1}\right)}

\newcommand{\Trt}[1]{\tr\left[{#1}\right]}
\newcommand{\Trp}[2]{\tr_{#1}\left[{#2}\right]}

\newcommand{\supp}{\mathrm{supp}}
\newcommand{\npa}{\operatorname{NPA}}

\newcommand{\ns}{\operatorname{NS}}

\begin{document}

\title{\textbf{Quasi-polynomial time algorithms for free quantum games in bounded dimension}}

\author[1]{Hyejung H.~Jee}
\author[1,2]{Carlo Sparaciari}
\author[3]{Omar Fawzi}
\author[1]{Mario Berta}
\affil[1]{\small{Department of Computing, Imperial College London, London SW7 2AZ, UK}}
\affil[2]{\small{Department of Physics and Astronomy, University College London, London WC1E 6BT, UK}}
\affil[3]{\small{Laboratoire de l’Informatique du Parall\'elisme, ENS de Lyon, France}}

\date{}

\maketitle


\begin{abstract}
We give a converging semidefinite programming hierarchy of outer approximations for the set of quantum correlations of fixed dimension and derive analytical bounds on the convergence speed of the hierarchy. In particular, we give a semidefinite program of size $\exp(\mathcal{O}\big(T^{12}(\log^2(AT)+\log(Q)\log(AT))/\epsilon^2\big))$ to compute additive $\epsilon$-approximations on the values of two-player free games with $T\times T$-dimensional quantum assistance, where $A$ and $Q$ denote the numbers of answers and questions of the game, respectively. For fixed dimension $T$, this scales polynomially in $Q$ and quasi-polynomially in $A$, thereby improving on previously known approximation algorithms for which worst-case run-time guarantees are at best exponential in $Q$ and $A$. For the proof, we make a connection to the quantum separability problem and employ improved multipartite quantum de Finetti theorems with linear constraints.
We also derive an informationally complete measurement which minimises the loss in distinguishability relative to the quantum side information\,---\,which may be of independent interest.
\end{abstract}



\section{Introduction}

Thanks to the celebrated discovery of John Bell \cite{bell1964einstein}, it is well-known that quantum correlations can be used as a resource to overcome locality constraints. This represents one of the earliest examples of advantages provided by quantum correlations over classical ones and led to the development of a vast number of quantum information processing tasks that make use of quantum correlations. A prominent example is given by the field of device-independent quantum information processing, where the violation of certain locality constraints is used to certify the underlying state describing the system\,---\,even when we have no a priori knowledge about the devices used (see, e.g., \cite{acin2007device}). 

A common way of measuring the strength of correlations is using a two-player game $G$ as illustrated in Figure~\ref{fig:TwoPlayerGame}.
The classical value $\omega_{C}(G)$ of $G$ is the maximum winning probability that can be achieved using shared randomness between the players, and the quantum value $\omega_{Q}(G)$ is the maximum winning probability that can be achieved using arbitrary quantum states shared between the players. Given the description of a two-player game $G$, it is in general hard to compute $\omega_{C}(G)$ and $\omega_{Q}(G)$. 
In fact, the celebrated PCP theorem~\cite{arora1998proof,arora1998probabilistic} shows that approximating $\omega_{C}(G)$ within some constant multiplicative factor is NP-hard, even when the number of possible answers is restricted to a constant. Computing $\omega_{Q}(G)$ turns out to be much more difficult in general: culminating a long series of works starting from~\cite{cleve2004consequences,kempe2011entangled}, it was recently shown that approximating $\omega_{Q}(G)$ is not possible for an algorithm running in finite time~\cite{ji2020mip}. Despite this general undecidability result, there are some special classes of games for which approximations of $\omega_{Q}(G)$ can be computed in polynomial time (for example XOR games~\cite{cleve2004consequences} or unique games~\cite{kempe2010unique}), and the Navascu\'es-Pironio-Ac\'in (NPA) hierarchy \cite{navascues2008convergent,pironio2010convergent} provides general semidefinite programming (SDP) upper bounds on $\omega_{Q}(G)$, which are widely used in practice and give approximately tight bounds for many specific games of interest. For the classical value $\omega_{C}(G)$, if we restrict ourselves to free games, i.e., games where the questions of the players are chosen independently of each other, there is a quasi-polynomial time algorithm that can approximate $\omega_{C}(G)$ within any constant $\epsilon > 0$ additive error~\cite{aaronson2014multiple,brandao2017quantum}.

\begin{figure*}[t!]
\hspace{3.2cm}
\includegraphics[width=11cm]{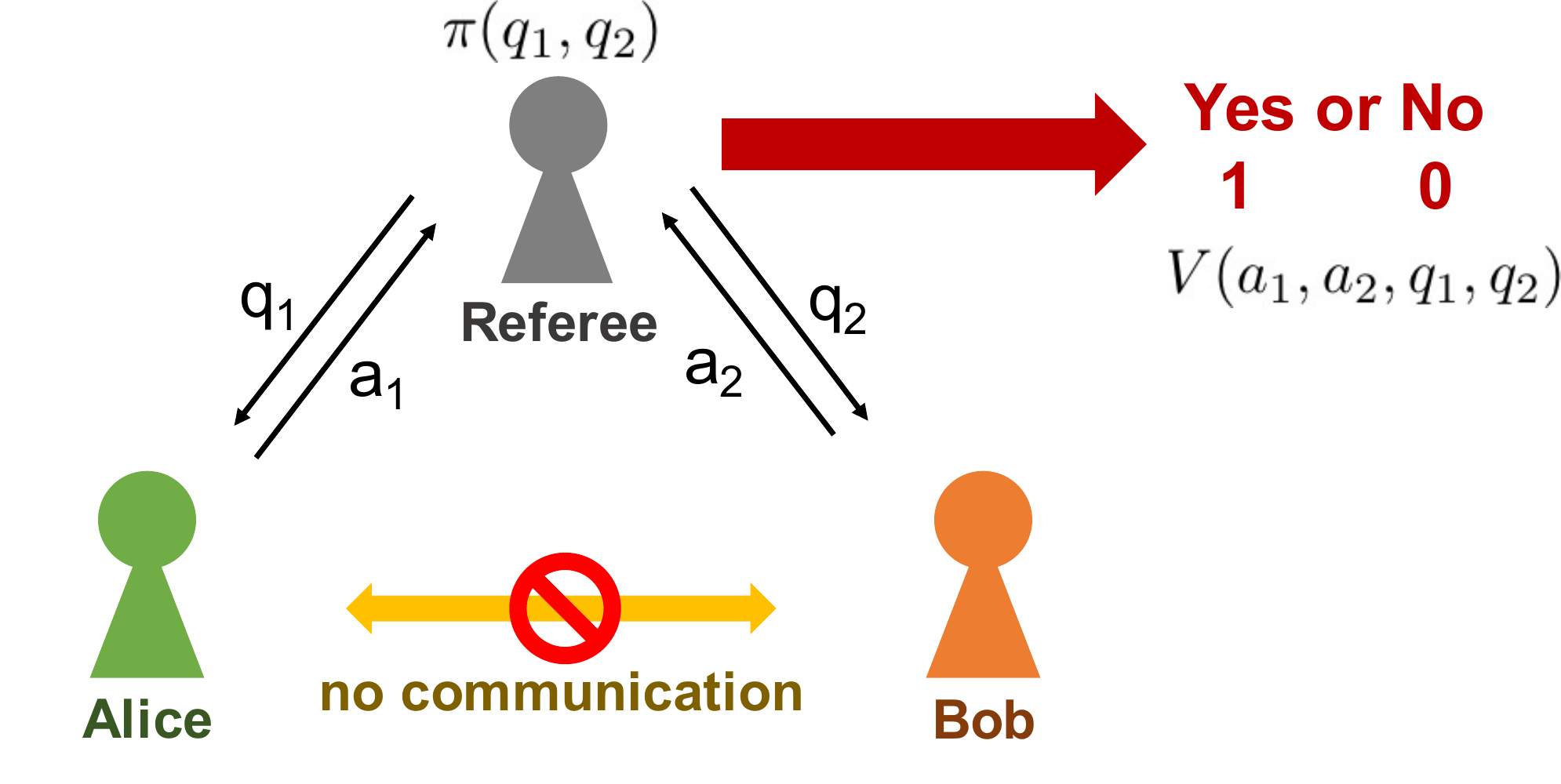}
\caption
{Two-player games. The referee gives Alice and Bob questions $q_1\in Q_1$ and $q_2\in Q_2$, and then Alice and Bob give answers $a_1\in A_1$ and $a_2\in A_2$ back to the referee depending on the questions they received. The referee decides whether Alice and Bob win or lose according to the rule function $V: A_1 \times A_2 \times Q_1 \times Q_2 \rightarrow \{0,1\}$, where $0$ denotes losing the game, and $1$ denotes winning the game. Alice and Bob cannot communicate with each other during the game, but they can agree on a strategy beforehand. We are interested in determining the value of the game, i.e., the maximum achievable winning probability, for different classes of strategies. Of particular interest is the classical value $\omega_{C}$ of the game, corresponding to the classical strategies where Alice and Bob only share randomness, and the quantum value $\omega_{Q}$ of the game, corresponding to general quantum strategies where Alice and Bob are allowed to share an arbitrary quantum state. The focus of this paper is the study of the dimension-bounded quantum value $\omega_{Q(T)}$, corresponding to quantum strategies where Alice and Bob are allowed to share a quantum state of restricted local dimension $T$. For the introduction, we assume that $|Q_1| = |Q_2| = Q$ and $|A_1| = |A_2| = A$. 
}
\label{fig:TwoPlayerGame}
\end{figure*}

In this paper, we study the dimension-bounded quantum value $\omega_{Q(T)}(G)$, which is the maximum winning probability that can be achieved using quantum states of dimension $T \times T$ shared between the players. It is simple to see that $\omega_{Q(1)}(G) = \omega_{C}(G)$ and $\omega_{Q}(G) = \sup_{T \geq 1} \omega_{Q(T)}(G)$. Determining the value of $\omega_{Q(T)}(G)$ for some fixed $T$ is particularly important for establishing \emph{dimension witnesses} of the underlying system (see, e.g., \cite{gallego2010device}).
Hierarchies of SDP upper bounds on $\omega_{Q(T)}(G)$ have been constructed in \cite{navascues2014characterization} where the authors exploit a connection to the {\it quantum separability problem}. Also, in \cite{navascues2015bounding,navascues2015characterizing}, the authors employ a moment matrix technique similar to the NPA hierarchy to derive SDP relaxations with improved practical performance compared to \cite{navascues2014characterization}. However, for these works the running time is either not analytically quantified or is at best exponential in the number of questions $Q$ and the number of answers $A$ of the game $G$.


In this work, we construct algorithms for computing $\omega_{Q(T)}(G)$ whose running time has a significantly improved dependence on $A$ and $Q$. More specifically, we construct a new hierarchy of SDP upper bounds on $\omega_{Q(T)}(G)$ and 
derive analytical bounds on the convergence speed. This gives bounds on the computational complexity of calculating $\omega_{Q(T)}(G)$ as a function of the size of the game $G$. For the case of two-player free games, we provide a semidefinite program of size
\begin{align}
\exp\left(\mathcal{O}\left(\frac{T^{12}}{\epsilon^2}\log(AT)(\log(Q)+\log(AT))\right)\right)
\end{align}
for computing additive $\epsilon$-approximations of $\omega_{Q(T)}(G)$
where $A$ and $Q$ denote the numbers of answers and questions of $G$, respectively (see Figure~\ref{fig:TwoPlayerGame}). Note that the dependence is quasi-polynomial in $A$ and polynomial in $Q$. As previously mentioned, the term \emph{free game} means that the choice of the questions for the two players is not correlated. If we set $T=1$, our result recovers the quasi-polynomial time approximation scheme for computing the classical value of two-player free games \cite{brandao2017quantum,aaronson2014multiple} (see Section~\ref{sec:discussion} for details). In the classical setting, there is also a matching hardness result assuming the Exponential Time Hypothesis~\cite{aaronson2014multiple}.
We also give a program for general games, which is still quasi-polynomial in $A$ but exponential in $Q$. Our approach for deriving the SDP relaxations is to make a connection to a variant of the quantum separability problem where the states involved are subject to some linear constraints. This variant of the quantum separability problem was first studied in~\cite{berta2018semidefinite}, and in our work we obtain improved multipartite quantum de Finetti theorems with linear constraints. This allows us to quantify the convergence speed of our hierarchy of SDP relaxations.


The remainder of this manuscript is structured as follows. In Section~\ref{sec:overview}, we give an overview of our results and techniques. In Section~\ref{sec:SDP}, we derive the constraints for characterising fixed-dimensional quantum correlations and state a new hierarchy of SDP relaxations for $\omega_{Q(T)}(G)$ of two-player free games.
In Section~\ref{sec:convergence}, we analytically prove the convergence rate of the derived hierarchy exploiting tripartite quantum de Finetti theorems with linear constraints. Then, in Section~\ref{sec:NPA}, we discuss how we can combine our derived constraints with the existing dimension agnostic NPA hierarchy. Lastly, in Section~\ref{sec:discussion} we conclude with several remarks on our results. Various technical arguments are deferred to appendices.


\section{Overview}\label{sec:overview}

\paragraph{Two-player games.} There exist various mathematical formulations of the correlations between distant parties, but in this paper we will use the non-local game scenario. Specifically, we will consider \emph{two-player games} where two distant parties are considered. In this formulation, we consider the correlation between two parties as a resource to win specific games. 

Let us assume that there are two spatially separated players, which we will conventionally call Alice and Bob, and a referee. When the game starts, the referee chooses questions $q_1$ and $q_2$ from the question set $Q_1 \times Q_2$ according to a given probability distribution $\pi(q_1, q_2)$ and sends them to Alice and Bob, respectively. Then, Alice and Bob must provide answers for their questions, $a_1$ and $a_2$, to the referee (see Figure~\ref{fig:TwoPlayerGame}). The correct answers are determined by a given rule function
\begin{align}
V: A_1 \times A_2 \times Q_1 \times Q_2 \rightarrow \{0,1\};
\end{align}
$0$ means the answers are wrong, and $1$ means the answers are correct. A specific two-player game $G$ can be represented by a question probability distribution $\pi(q_1,q_2)$ and a rule fuction $V(a_1,a_2,q_1,q_2)$, and hereafter we will denote a two-player game $G$ as $(V,\pi)$. Alice and Bob cannot communicate with each other during the game, but they can agree on a strategy beforehand, which can be described by a conditional probability distribution $p(a_1,a_2|q_1,q_2)$. Depending on the resources the players can access, this conditional probability describes different kinds of correlations. When no resources are shared, each player's answer only depends on the question they receive, so that the distribution takes the form
\begin{align}
p(a_1,a_2|q_1,q_2) = e(a_1|q_1)d(a_2|q_2),
\end{align}
where $e(a_1|q_1)$ and $d(a_2|q_2)$ are conditional probability distributions representing the strategy of Alice and Bob, respectively. When quantum resources are allowed, the distribution takes the more general form
\begin{align}
p(a_1,a_2|q_1,q_2) = \tr\left[ \rho_{T\hat{T}} \left( E_T(a_1|q_1)\otimes D_{\hat{T}}(a_2|q_2) \right) \right],
\end{align}
where $\rho_{T\hat{T}}$ is an entangled quantum state shared between Alice and Bob, and for fixed questions $q_1,q_2$ to Alice and Bob, $\{ E_T(a_1|q_1) \}_{a_1}$ and $\{ D_{\hat{T}}(a_2|q_2) \}_{a_2}$ are positive-operator valued measurements (POVMs) performed by Alice and Bob, respectively. Then, the difference in performance between classical and quantum correlations can be quantified in terms of a non-zero gap between the maximum winning probabilities achieved using the two different resources. For a given game $G=(V,\pi)$, the classical value, the classical maximum winning probability, is\footnote{We do not include shared randomness in \eqref{eq:CWinP} as it is not helpful in this scenario \cite{bell1964einstein}.}
\begin{align}\label{eq:CWinP}
\omega_C(V,\pi) := \max_{(e,d)} \sum_{q_1,q_2} \pi(q_1, q_2) \sum_{a_1, a_2} V(a_1,a_2,q_1,q_2) e(a_1|q_1) d(a_2|q_2),
\end{align}
where $\sum_{a_1}e(a_1|q_1)=1 \;\forall q_1$ and $\sum_{a_2}d(a_2|q_2)=1 \;\forall q_2$, while the quantum value is given by
\begin{align}\label{eq:QWinP}
\omega_Q(V,\pi) := \sup_{(E\otimes D,\rho)\;\text{on}\;\mathcal{H}_{T\hat{T}}} \sum_{q_1,q_2} \pi(q_1,q_2) \sum_{a_1,a_2} V(a_1,a_2,q_1,q_2) \tr\left[ \rho_{T\hat{T}} \left( E_T(a_1|q_1)\otimes D_{\hat{T}}(a_2|q_2) \right) \right],
\end{align}
where $\tr[\rho_{T\hat{T}}]=1$, $\sum_{a_1} E_T(a_1|q_1)=\id_T\;\forall q_1$, and $\sum_{a_2} D_{\hat{T}}(a_2|q_2)=\id_{\hat{T}}\;\forall q_2$. Now, the supremum is taken over all possible measurements and states as well as the Hilbert spaces $\mathcal{H}_{T\hat{T}}$. Then, denoting the winning probability for a given game $(V,\pi)$ with classical strategies $e,d$ as $p_{\text{win}}(V,\pi,e,d)$, {\it Bell inequalities} for the game can be described as $p_{\text{win}}(V,\pi,e,d)\leq\omega_C(V,\pi)$. Violations of such upper bounds with quantum resources are called violations of Bell inequalities.
For example, the CHSH inequality \cite{clauser1969proposed} can be derived from the CHSH game, which can be specified with a uniform distribution as $\pi(q_1,q_2)$ and $V_{\text{CHSH}}(a_1,a_2,q_1,q_2)$ as the rule function, where
\begin{align}
\text{$V_{\text{CHSH}}(a_1,a_2,q_1,q_2) = 1$ if $q_1\cdot q_2=a_1\oplus a_2$, and $V_{\text{CHSH}}(a_1,a_2,q_1,q_2) = 0$ otherwise.}
\end{align}

We can also define the maximum winning probability for a given two-player game $(V,\pi)$ with quantum assistance of fixed dimension $\mathbb{C}^T\otimes\mathbb{C}^T$, as
\begin{align}\label{eq:QWinP_T}
\omega_{Q(T)}(V,\pi) := \max_{(E\otimes D,\,\rho)\,\text{on}\;\mathbb{C}^T\otimes\mathbb{C}^T} \sum_{q_1,q_2} \pi(q_1,q_2) \sum_{a_1,a_2} V(a_1,a_2,q_1,q_2) \tr \left[ \rho_{T\hat{T}} \left( E_T(a_1|q_1)\otimes D_{\hat{T}}(a_2|q_2) \right) \right].
\end{align}
In this paper, we derive SDP upper bounds for calculating $\omega_{Q(T)}(V,\pi)$.

If not stated otherwise, we assume that the choice of questions for Alice and Bob are not correlated, i.e.~$\pi(q_1,q_2) = \pi_1(q_1)\pi_2(q_2)$, which corresponds to \emph{free games}.\footnote{We treat general games in Appendix \ref{app:GeneralGames}.} Henceforth, we denote $\mathcal{H}_A^{\otimes n}$ as $A_1^n=A^n$, and $\operatorname{dim}(\hil_A)$ as $|A|$. For simplicity, we also assume that the number of answers and questions for Alice and Bob are the same, denoted by $|A_1|=|A_2|=|A|$, and $|Q_1|=|Q_2|=|Q|$, as well as that the shared entanglement between Alice and Bob has fixed dimension $\dim\left(\mathcal{H}_{T\hat{T}}\right)=|T|^2$.


\paragraph{Connection with quantum separability.} Quantum separability problems are a specific form of optimisation problems, for which the optimisation is over the set of all separable quantum states (see, e.g., \cite{gharibian2008strong}). We find that the dimension-bounded quantm value $\omega_{Q(T)}$ of two-player free games~\eqref{eq:QWinP_T}, can be converted to an instance of the tripartite quantum separability problem with additional linear constraints, which is a variant of the quantum separability problem.

\begin{lemma}\label{lem:QWinP_TinSP}
For a two-player free game with $V(a_1,a_2,q_1,q_2)$, $\pi(q_1,q_2)=\pi_1(q_1)\pi_2(q_2)$, and $|T|^2$-dimensional assistance, we have
\begin{equation}
\begin{split}\label{eq:QWinP_TSepProb}
& \quad \; \omega_{Q(T)}(V, \pi) = |T|^2\cdot\max_{(E,D,\rho)} \; \tr \left[ \left( V_{A_1A_2Q_1Q_2}\otimes \Phi_{T\hat{T}|S\hat{S}}\right)\left( E_{A_1Q_1T}\otimes D_{A_2Q_2\hat{T}}\otimes\rho_{S\hat{S}} \right) \right]\\
\text{s.t.} \quad &\rho_{S\hat{S}} \geq 0\, , \quad \quad \Tr [\rho_{S\hat{S}}] = 1 \\
& E_{A_1Q_1T} = \sum_{a_1, q_1} \pi_1(q_1) \ket{a_1q_1}\!\!\bra{a_1q_1}_{A_1Q_1} \otimes \frac{E_T(a_1|q_1)}{|T|} \geq 0\, , \quad E_{Q_1T} = \sum_{q_1} \pi_1(q_1)\ket{q_1}\!\!\bra{q_1}_{Q_1} \otimes \frac{\mathds{1}_T}{|T|}\\
& D_{A_2Q_2\hat{T}} = \sum_{a_2,q_2} \pi_2(q_2) \ket{a_2q_2}\!\!\bra{a_2q_2}_{A_2Q_2} \otimes \frac{D_{\hat{T}}(a_2|q_2)}{|T|} \geq 0\, , \quad D_{Q_2\hat{T}} = \sum_{q_2} \pi_2(q_2) \ket{q_2}\!\!\bra{q_2}_{Q_2} \otimes \frac{\mathds{1}_{\hat{T}}}{|T|},
\end{split}
\end{equation}
where $\Phi_{T\hat{T}|S\hat{S}} = \ket{\Phi}\bra{\Phi}_{T\hat{T}|S\hat{S}}\,$, with the (non-normalised) maximally-entangled state $\ket{\Phi}_{T\hat{T}|S\hat{S}}= \sum_i \, \ket{i}_{T\hat{T}} \ket{i}_{S\hat{S}}$, and
$V_{A_1A_2Q_1Q_2}$ is a diagonal matrix whose entries are given by the rule function $V(a_1,a_2,q_1,q_2)$.
\end{lemma}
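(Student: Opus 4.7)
The plan is to expand the right-hand side of \eqref{eq:QWinP_TSepProb}, collapse the diagonal classical registers, and invoke the standard maximally-entangled-state identity to recover the definition in \eqref{eq:QWinP_T}. The key ingredient is the ``transposition trick''
\begin{equation}
\tr\!\left[(M_{T\hat{T}}\otimes \sigma_{S\hat{S}})\,\Phi_{T\hat{T}|S\hat{S}}\right] \;=\; \tr\!\left[M_{T\hat{T}}\,\sigma_{T\hat{T}}^{\top}\right],
\end{equation}
valid whenever $S\hat{S}$ is canonically identified with $T\hat{T}$ and the transpose is taken in the Schmidt basis of $\ket{\Phi}$. Because transposition is a bijection on the set of density operators, this identity lets us replace any trace against an unknown $\rho_{T\hat{T}}$ by a trace involving a free $\rho_{S\hat{S}}$ together with the fixed reference $\Phi_{T\hat{T}|S\hat{S}}$.

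First I would verify that the feasibility conditions on $E_{A_1Q_1T}$ and $D_{A_2Q_2\hat{T}}$ are equivalent to $\{E_T(a_1|q_1)\}_{a_1}$ and $\{D_{\hat T}(a_2|q_2)\}_{a_2}$ being POVMs on $\mathbb{C}^T$ for each question. Positivity of $E_{A_1Q_1T}$ is equivalent to $E_T(a_1|q_1)\geq 0$ for every $(a_1,q_1)$, since the classical registers $A_1Q_1$ are diagonal; and the marginal constraint on $E_{Q_1T}$ rearranges (for each $q_1$ with $\pi_1(q_1)>0$) to the POVM normalisation $\sum_{a_1}E_T(a_1|q_1)=\id_T$. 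The argument for $D$ is identical, and the constraints on $\rho_{S\hat S}$ just single out density matrices.

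Next, since $V_{A_1A_2Q_1Q_2}$, $E_{A_1Q_1T}$, and $D_{A_2Q_2\hat T}$ are jointly diagonal on the classical registers $A_1A_2Q_1Q_2$, tracing those registers collapses the objective into
\begin{equation}
\sum_{a_1,a_2,q_1,q_2} V(a_1,a_2,q_1,q_2)\,\frac{\pi_1(q_1)\pi_2(q_2)}{|T|^2}\,\tr\!\left[\Phi_{T\hat T|S\hat S}\bigl(E_T(a_1|q_1)\otimes D_{\hat T}(a_2|q_2)\otimes \rho_{S\hat S}\bigr)\right].
\end{equation}
Applying the transposition identity to the residual trace over $T\hat{T}S\hat{S}$ yields $\tr[(E_T(a_1|q_1)\otimes D_{\hat T}(a_2|q_2))\,\rho_{T\hat T}^{\top}]$. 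The explicit $|T|^2$ prefactor in \eqref{eq:QWinP_TSepProb} cancels the $1/|T|^2$, the product $\pi_1(q_1)\pi_2(q_2)$ equals $\pi(q_1,q_2)$ by the free-game assumption, and the relabelling $\rho_{T\hat T}^{\top}\mapsto \rho_{T\hat T}$ preserves the feasible set, so the right-hand side exactly reproduces $\omega_{Q(T)}(V,\pi)$ as defined in \eqref{eq:QWinP_T}.

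The main obstacle is purely bookkeeping: aligning the tensor factor ordering so that $\ket{\Phi}$ maximally entangles the intended copies, tracking the $1/|T|$ normalisation inserted into $E_{A_1Q_1T}$ and $D_{A_2Q_2\hat{T}}$ so that it plays its dual role in the marginal constraints and in the objective prefactor, and noting that the transpose on $\rho_{S\hat S}$ is invisible to the maximisation. No deeper step is required; the substance of the lemma is the recognition that the shared entanglement can be ``externalised'' into the fixed $\Phi_{T\hat T|S\hat S}$, leaving a product/separable-type structure on $(E,D,\rho)$ that is amenable to de Finetti-based relaxations later in the paper.
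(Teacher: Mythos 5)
Your proposal is correct and follows essentially the same route as the paper: the paper's proof composes a "swap trick" $\Trt{N M}=\Trt{F\,(N\otimes M)}$ with the identity $F_{A|\hat A}=\Phi_{A|\hat A}^{T_A}$ and then absorbs the transpose by redefining $\rho\mapsto\rho^{T}$, which is exactly your single "transposition trick" $\tr[(M\otimes\sigma)\Phi]=\tr[M\sigma^{\top}]$ followed by the same relabelling. The only cosmetic difference is that you expand the separability-form side and reduce it to \eqref{eq:QWinP_T}, while the paper derives the separability form starting from \eqref{eq:QWinP_T}; the bookkeeping of the $\pi$ and $1/|T|$ factors and the POVM feasibility equivalence match the paper's construction.
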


The proof can be found in Section~\ref{subsec:SeparabilityProblem}. Now $\omega_{Q(T)}(V,\pi)$ is expressed as an optimisation problem over product states with some additional linear constraints on the optimisation variable. Since product states are extreme points in the set of separable states, we can equivalently think of the above as a maximisation over mixtures of product states which satisfy the stated linear constraints.


\paragraph{Hierarchy of semidefinite programming relaxations.} 
Quantum separability conditions have well-known relaxations called the Doherty-Parrilo-Spedalieri (DPS) hierarchy \cite{doherty2004complete}.
In the DPS hierarchy, the notion of extendible states is employed; A state $\rho_{AB}$ is called an $n$-extendible state if there exists an extension $\rho_{AB^n}$ such that $\tr_{B^{n-1}}\left[ \rho_{AB^n} \right] = \rho_{AB}$, and it is invariant under any permutation of the $B$-registers with respect to $A$. This class of states has two main advantages: Firstly, for any $n$, the set of $n$-extendible states can be efficiently identified by a semidefinite program, which makes them a simple set for optimisation. Secondly, they are good approximations of separable states as a state is $n$-extendible for every $n\geq2$ if and only if it is separable \cite{fannes1988symmetric,raggio1989quantum}. This is related to a concept known as the monogamy of entanglement \cite{terhal2004entanglement}. As the set of $(n+1)$-extendible states is a subset of the set of $n$-extendible states, the set of $n$-extendible states becomes a better and better approximation for the set of separable states as $n$ increases. Since (\ref{eq:QWinP_TSepProb}) includes tripartite separability conditions, we need to consider the multipartite generalisation of the bipartite extendible states; $(n_1, n_2)$-extendible states $\rho_{ABC}$ which have an extension $\rho_{AB^{n_1}C^{n_2}}$. We refer to Section~\ref{subsubsec:ExtStates} for a more detailed explanation. As in the bipartite case, the set of $(n_1, n_2)$-extendible states converges to the separable set when $n_1\rightarrow\infty$ and $n_2\rightarrow\infty$ \cite{doherty2005detecting}, and thus they are an approximation of tripartite separable states.

In the standard quantum separability problem, by simply replacing the optimisation variable with an $(n,n)$-extendible state with respect to the appropriate partitions, we can derive SDP relaxations for the problem. However, due to the additional linear constraints in \eqref{eq:QWinP_TSepProb}, in addition to the $n$-extendibility conditions enforced by the DPS hierarchy, we need to add the linear constraints to the semidefinite program to guarantee that they are satisfied. The resulting hierarchy of SDP relaxations is given by
\begin{align}\label{eq:SDP}
&\qquad\sdp_n(V,\pi,T) := |T|^2 \max_{\rho} \tr\left[ \left( V_{A_1A_2Q_1Q_2}\otimes \Phi_{T\hat{T}|S\hat{S}} \right) \rho_{(A_1Q_1T)(A_2Q_2\hat{T})(S\hat{S})} \right]\\
\text{s.t.} \,\,\, & \rho_{(A_1Q_1T)(A_2Q_2\hat{T})^{n}(S\hat{S})^{n}} \geq 0\, , \quad \tr \left[ \rho_{(A_1Q_1T)(A_2Q_2\hat{T})^{n}(S\hat{S})^{n}} \right] = 1\\
& \rho_{(A_1Q_1T)(A_2Q_2\hat{T})^n(S\hat{S})^n} \text{ perm.~inv.~on } (A_2Q_2\hat{T})^n \text{ wrt } (A_1Q_1T)(S\hat{S})^n \label{eq:SDPExtConstraints0}\\
& \rho_{(A_1Q_1T)(A_2Q_2\hat{T})^{n}(S\hat{S})^n} \text{ perm.~inv.~on } (S\hat{S})^n \text{ wrt } (A_1Q_1T)(A_2Q_2\hat{T})^n \label{eq:SDPExtConstraints} \\
&\tr_{A_1}[\rho_{(A_1Q_1T)(A_2Q_2\hat{T})^{n}(S\hat{S})^{n}}] = \left( \sum_{q_1}\pi_1(q_1)\ket{q_1}\!\!\bra{q_1}_{Q_1}\otimes\frac{\mathds{1}_T}{|T|} \right) \otimes \tr_{A_1Q_1T}\left[\rho_{(A_1Q_1T)(A_2Q_2\hat{T})^{n}(S\hat{S})^{n}}\right]\label{eq:SDPadditionalLC1} \\
&\tr_{A_2}[\rho_{(A_1Q_1T)(A_2Q_2\hat{T})^{n}(S\hat{S})^{n}}] = \left( \sum_{q_2}\pi_2(q_2)\ket{q_2}\!\!\bra{q_2}_{Q_2}\otimes\frac{\mathds{1}_{\hat{T}}}{|T|} \right) \otimes \tr_{A_2Q_2\hat{T}}\left[\rho_{(A_1Q_1T)(A_2Q_2\hat{T})^n(S\hat{S})^{n}}\right]\label{eq:SDPadditionalLC2}\\
&\rho^{T_{A_1Q_1T}}_{(A_1Q_1T)(A_2Q_2\hat{T})^{n}(S\hat{S})^{n}} \geq 0\, ,\quad \rho^{T_{(A_2Q_2\hat{T})^{n}}}_{(A_1Q_1T)(A_2Q_2\hat{T})^{n}(S\hat{S})^{n}} \geq 0\, ,\quad \rho^{T_{(S\hat{S})^{n}}}_{(A_1Q_1T)(A_2Q_2\hat{T})^{n}(S\hat{S})^{n}} \geq 0\label{eq:SDPConstraints_PPT},\;\ldots\;,
\end{align}
where the last line \eqref{eq:SDPConstraints_PPT} contains all positive partial transpose (PPT) conditions with respect to all the cuts
\begin{align}
A_1Q_1T:A^1_2Q^1_2 \hat{T}^1: \cdots : A^n_2Q^n_2 \hat{T}^n :S^1\hat{S}^{1} : \cdots : S^n \hat{S}^n.
\end{align}
We emphasise that apart from the extendibility conditions \eqref{eq:SDPExtConstraints0} -- \eqref{eq:SDPExtConstraints}, compared to the original DPS hierarchy our SDP relaxations have the additional constraints \eqref{eq:SDPadditionalLC1} -- \eqref{eq:SDPadditionalLC2}. These come from the corresponding additional constraints in the separability problem \eqref{eq:QWinP_TSepProb}. As the set of $n$-extendible states always include the set of separable states, these semidefinite programs give a series of upper bounds for $\omega_{Q(T)}(V,\pi)$. Moreover, we discuss in Section~\ref{sec:NPA} how our relaxations can be combined with the NPA hierarchy to form the SDP relaxations 
\begin{align}\label{eq:SDP_NPA}
\text{$\sdp_n^{\npa}(V,\pi,T)$: $\sdp_n(V,\pi,T)$ with the additional constraints $\Gamma_n\big(\rho_{(A_1Q_1T)(A_2Q_2\hat{T})^{n}(S\hat{S})^{n}}\big)\geq0$,}
\end{align}
where $\Gamma_n(\rho)$ denotes the $n$-th level NPA matrix defined in Section~\ref{subsec:NPA+SDP}.


\paragraph{Multipartite quantum de Finetti theorem with linear constraints.} Quantum de Finetti theorems provide a quantitative bound on how close $n$-extendible states are to the set of separable states in trace distance as a function of $n$ and the dimension. Thus, we can exploit this to upper bound the gap between our SDP relaxations and the actual solution $\omega_{Q(T)}$. However, the two additional linear constraints \eqref{eq:SDPadditionalLC1} -- \eqref{eq:SDPadditionalLC2} make this approach more difficult. Without these linear constraints, the feasible states in our SDP relaxations would exactly be separable states, and hence we could apply the standard quantum de Finetti theorem to prove the convergence of the relaxations. However, in our case, the feasible set is no longer the set of separable states but is the convex hull of product states satisfying the additional linear constraints. As a consequence, we need adapted quantum de Finetti theorems.

In the main body of the paper, we give improved multipartite quantum de Finetti theorems with linear constraints. It is an extension of the bipartite quantum de Finetti theorem with linear constraints from \cite{berta2018semidefinite}, where the authors follow the information-theoretic proof technique introduced in \cite{brandao2016product,brandao2017quantum}. We would like to emphasise that the existence of the additional linear constraints and the use of this adapted quantum de Finetti theorems are crucial to obtain the improved complexity bounds on computation of $\omega_{Q(T)}(V,\pi)$. Here, we state a tripartite version of the theorems, which suits our setting.

\begin{theorem}\label{thm:MulQdeFTwithLC}
Let $\rho_{AB^{n_1}C^{n_2}}$ be a quantum state which is invariant under permutations on $B^{n_1}$ with respect to $AC^{n_2}$ and on $C^{n_2}$ with respect to $AB^{n_1}$, satisfying for linear maps $\mathcal{E}_{A\rightarrow\tilde{A}}$, $\Lambda_{B\rightarrow \tilde{B}}$, and $\Gamma_{C\rightarrow \tilde{C}}$ and operators $\mathcal{X}_{\tilde{A}}$, $\mathcal{Y}_{\tilde{B}}$, and $\mathcal{Z}_{\tilde{C}}$ that
\begin{align}
\left(\mathcal{E}_{A\rightarrow\tilde{A}}\otimes\mathcal{I}_{B^{n_1}C^{n_2}}\right) \left( \rho_{AB^{n_1}C^{n_2}} \right) = \mathcal{X}_{\tilde{A}} \otimes\rho_{B^{n_1}C^{n_2}} \quad & \quad \text{linear constraint on A} \\
\left(\Lambda_{B\rightarrow\tilde{B}}\otimes\mathcal{I}_{B^{n_1-1}C^{n_2}}\right) \left( \rho_{B^{n_1}C^{n_2}} \right) = \mathcal{Y}_{\tilde{B}} \otimes \rho_{B^{n_1-1}C^{n_2}} \quad & \quad \text{linear constraint on $B$}\\
\left(\mathcal{I}_{B^{n_1}C^{n_2-1}}\otimes\Gamma_{C\rightarrow\tilde{C}}\right) \left( \rho_{B^{n_1}C^{n_2}} \right) = \mathcal{Z}_{\tilde{C}} \otimes \rho_{B^{n_1}C^{n_2-1}} \quad & \quad \text{linear constraint on $C$}.
\end{align}
Then, there exist a probability distribution $\{p_i\}_{i\in I}$ and sets of quantum states $\{\sigma_A^i\}_{i\in I}$, $\{\omega_{B}^i\}_{i\in I}$ and $\{\tau_{C}^i\}_{i\in I}$ such that we have that
\begin{equation}\label{eq:MulQdeFTwithLCbound}
\begin{split}
&\Big\|\rho_{ABC} - \sum_{i\in I} p_i \sigma_A^i\otimes\omega_{B}^i\otimes\tau_{C}^i\Big\|_1
\leq O\Bigg( |BC|\Bigg( \sqrt{\frac{\log |A| + \log |B|}{n_2} + \frac{\log |A|}{n_1}} \Bigg)\Bigg)
\end{split}
\end{equation}
\begin{align}
\mathcal{E}_{A\rightarrow\tilde{A}} \left( \sigma_A^i \right) = \mathcal{X}_{\tilde{A}}, \quad 
\Lambda_{B\rightarrow\tilde{B}} \left(\omega_{B}^i\right) = \mathcal{Y}_{\tilde{B}}, \quad \Gamma_{C\rightarrow\tilde{C}}\left(\tau_{C}^i\right) = \mathcal{Z}_{\tilde{C}} \qquad \forall i\in I.
\end{align}
\end{theorem}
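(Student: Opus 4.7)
The plan is to extend the bipartite quantum de Finetti theorem with linear constraints of \cite{berta2018semidefinite} to the tripartite setting by iterating it, once for each symmetric register. This parallels the standard way of deducing multipartite de Finetti statements from bipartite ones, and it is compatible with the information-theoretic proof of \cite{brandao2016product,brandao2017quantum}, in which the controlling quantity is the (conditional) quantum mutual information measured after applying an informationally complete POVM on the copies to be discarded.

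First, I would view $\rho_{AB^{n_1}C^{n_2}}$ as a bipartite permutation-symmetric state on the cut $(AB^{n_1}) : C^{n_2}$ and invoke the bipartite theorem with linear constraints, enforcing the constraint $\Gamma_{C\to\tilde{C}}(\cdot) = \mathcal{Z}_{\tilde{C}}$ on the $C$ factor. Concretely, an informationally complete POVM is applied to $n_2-1$ copies of $C$, producing
\begin{align}
\Big\| \rho_{AB^{n_1}C} - \sum_{i\in I_1} p_i\, \sigma^i_{AB^{n_1}} \otimes \tau^i_C \Big\|_1 \leq \delta_1,
\end{align}
with $\Gamma_{C\to\tilde{C}}(\tau^i_C) = \mathcal{Z}_{\tilde{C}}$ for each $i$ and $\delta_1 = \mathcal{O}\bigl(|C|\sqrt{(\log|A|+\log|B|)/n_2}\bigr)$. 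The numerator $\log|A|+\log|B|$ reflects that, after the POVM measurement, the mutual information that controls the error only involves a single retained copy of $A$ together with a single retained copy of $B$, since the permutation symmetry of $B^{n_1}$ collapses the $B$-contribution to one effective copy. Crucially, because the measurement acts only on the discarded $C$-copies, the permutation symmetry on $B^{n_1}$ and the linear constraints on $A$ and on every $B$-copy are preserved in each conditional state $\sigma^i_{AB^{n_1}}$.

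Next, for each component $\sigma^i_{AB^{n_1}}$ I would apply the bipartite theorem a second time, now on the cut $A : B^{n_1}$, enforcing the linear constraints on both $A$ and $B$. This produces
\begin{align}
\Big\| \sigma^i_{AB} - \sum_{j\in I_2} q_{j|i}\, \sigma^{ij}_A \otimes \omega^{ij}_B \Big\|_1 \leq \delta_2^{(i)},
\end{align}
with $\mathcal{E}_{A\to\tilde{A}}(\sigma^{ij}_A) = \mathcal{X}_{\tilde{A}}$, $\Lambda_{B\to\tilde{B}}(\omega^{ij}_B) = \mathcal{Y}_{\tilde{B}}$, and $\sum_i p_i\,\delta_2^{(i)} = \mathcal{O}\bigl(|B|\sqrt{\log|A|/n_1}\bigr)$ after averaging against $\{p_i\}$. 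Concatenating both decompositions and using the triangle inequality together with the monotonicity of the trace distance under partial trace yields the claimed bound~\eqref{eq:MulQdeFTwithLCbound}, with the two additive square-root terms coming from $\delta_1$ and the averaged $\delta_2^{(i)}$ respectively.

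The main obstacle I expect is the \emph{exact}, rather than on-average, enforcement of the linear constraints on every $\sigma^{ij}_A$, $\omega^{ij}_B$, and $\tau^i_C$: the POVM-based argument naturally produces constraints that hold only in expectation over the classical index, so one needs the rounding/projection step of \cite{berta2018semidefinite} state-by-state, with the rounding deviation absorbed into $\delta_1$ and $\delta_2$. A second subtlety is obtaining the scaling $(\log|A|+\log|B|)/n_2$ rather than the naive $\log|AB^{n_1}|/n_2$ in the first step; this requires choosing the informationally complete POVM so that the post-measurement mutual information only sees a single physical copy of $B$, which is possible precisely because the POVM acts on $C^{n_2-1}$ and therefore respects the $B^{n_1}$-symmetry.
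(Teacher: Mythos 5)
Your two-step architecture (iterate the bipartite de Finetti theorem with linear constraints, first across the cut $(AB^{n_1}):C^{n_2}$ and then across $A:B^{n_1}$ inside each conditional component) is genuinely different from the paper's proof, which instead measures \emph{all} $n_1$ copies of $B$ and all $n_2$ copies of $C$ with informationally complete POVMs at once, bounds a single \emph{tripartite} conditional mutual information $I(A:Y_{\bar m+1}:Z_{\bar\ell+1}\,|\,Y^{\bar m}Z^{\bar\ell})$ via the chain rule (Lemmas~\ref{lem:CMIbound} and \ref{lem:mulCMIbound}), and conditions all three factors on the same joint outcome before lifting back with the distortion lemma. Your sequential route can in principle be made to work and would even give the slightly sharper error $|C|\sqrt{(\log|A|+\log|B|)/n_2}+|B|\sqrt{\log|A|/n_1}$; the exact (not merely on-average) constraint satisfaction you worry about also needs no rounding step, since the linear constraints pass to every conditional state by the direct computation at the end of the paper's proof.

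However, the justification you give for the crucial $(\log|A|+\log|B|)/n_2$ numerator in the first step is not correct. A black-box application of the bipartite theorem to the cut $(AB^{n_1}):C^{n_2}$ controls $I(AB^{n_1}:Z_{\ell+1}|Z^\ell)$, whose telescoped sum is bounded only by $\log|A|+n_1\log|B|$, and neither permutation symmetry nor the choice of POVM on $C^{n_2-1}$ repairs this: permutation invariance of $B^{n_1}$ does \emph{not} collapse $I(B^{n_1}:C^{n_2})$ to $O(\log|B|)$ (a symmetric state in which $C$ records the type, e.g.\ the Hamming weight, of a classical string in $B^{n_1}$ already gives mutual information $\Omega(\log n_1)$), and a measurement acting only on $C$-copies cannot change which $B$-registers the mutual information "sees". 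The correct fix is to run the information-theoretic argument only for the \emph{reduced} register $AB_1$: bound $\sum_\ell I(AB_1:Z_{\ell+1}|Z_1^\ell)=I(AB_1:Z_1^{n_2})\le 2(\log|A|+\log|B|)$, while still defining the conditional components as the full states $\rho_{AB^{n_1}|z^\ell}$ so that the $B^{n_1}$-extension survives for the second application. This is not a black-box invocation of the bipartite theorem as you present it, and it is exactly the step the paper sidesteps by making the $B$-copies classical before any mutual-information bound is taken; without it, your first inequality as written does not hold with the claimed $\delta_1$.
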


The proof is given in Appendix~\ref{app:MulQdeFTwithLC} and contains the following result that might be of independent interest.


\paragraph{Distortion relative to quantum side information.}

As a fundamental step in the proof of Theorem \ref{thm:MulQdeFTwithLC}, we derive an informationally complete measurement $\mathcal{M}_B$ that has optimal loss of distinguishability relative to quantum side information. That is, we have for all quantum states $\rho_{AB},\sigma_{AB}$ that
\begin{align}
\norm{\left(\mathcal{I}_A\otimes\mathcal{M}_B\right) \left(\rho_{AB}-\sigma_{AB}\right)}_1\leq\norm{\rho_{AB}-\sigma_{AB}}_1\leq2|B|\cdot\norm{\left(\mathcal{I}_A\otimes\mathcal{M}_B\right) \left(\rho_{AB}-\sigma_{AB}\right)}_1.
\end{align}
The proof is given in Appendix \ref{app:distortion_sideI} and improves on factor $\sqrt{18}|B|^{3/2}$ given in \cite[Equation (68)]{brandao2017quantum}. Together with the example states given in \cite{matthews2009distinguishability}, this establishes the dimension dependence for minimal distortion relative to quantum side information to be $\Omega(|B|)$.


\paragraph{Convergence rate of the hierarchy.} In our SDP relaxations $\sdp_n(V,\pi,T)$, the linear constraints are partial trace constraints on subsystems. A version of Theorem~\ref{thm:MulQdeFTwithLC} with partial trace constraints\,---\,that we state and prove in Section~\ref{subsec:TriQdeF}\,---\,then immediately gives a bound on the convergence speed of $\sdp_{n}(V,\pi,T)$.

\begin{theorem}\label{thm:Convergence}
Let $\sdp_{n}(V,\pi,T)$ be the $n$-th level SDP relaxation for the two-player free game with rule matrix $V$, probability distribution $\pi(q_1,q_2)=\pi_1(q_1)\pi_2(q_2)$, and quantum assistance of dimension $|T|^2$. Then, we have
\begin{align}
0 \leq \sdp_{n}(V,\pi,T) - \omega_{Q(T)}(V, \pi) \leq O\left( |T|^6\sqrt{\frac{\log|T||A|}{n}} \right).
\end{align}
Hence, we have $\omega_{Q(T)} (V,\pi) = \lim_{n\rightarrow\infty} \sdp_n(V,\pi,T)$.
\end{theorem}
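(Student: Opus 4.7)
The lower bound $\omega_{Q(T)}(V,\pi)\leq\sdp_n(V,\pi,T)$ is immediate: any triple $(E,D,\rho_{S\hat{S}})$ feasible for the separable optimisation in Lemma~\ref{lem:QWinP_TinSP} lifts to the SDP by tensoring $n$ copies on the second and third parties, and every extendibility, PPT and partial-trace constraint is trivially preserved on such a product state with identical objective value.

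For the upper bound, let $\rho^\star$ be an optimal feasible state of $\sdp_n(V,\pi,T)$ and write $X=A_1Q_1T$, $Y=A_2Q_2\hat{T}$, $Z=S\hat{S}$. The marginal $\rho^\star_{XYZ}$ inherits from the SDP both the permutation invariance on the $n$-fold extensions \eqref{eq:SDPExtConstraints0}--\eqref{eq:SDPExtConstraints} and the two partial-trace conditions \eqref{eq:SDPadditionalLC1}--\eqref{eq:SDPadditionalLC2}\,---\,precisely the setting of the partial-trace specialisation of Theorem~\ref{thm:MulQdeFTwithLC} to be developed in Section~\ref{subsec:TriQdeF}. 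Applying it produces a separable approximation $\tilde\rho_{XYZ}=\sum_i p_i\,\sigma_X^i\otimes\omega_Y^i\otimes\tau_Z^i$ in which each product term individually obeys those partial-trace conditions, so $\tilde\rho_{XYZ}$ is admissible for the separable formulation of Lemma~\ref{lem:QWinP_TinSP}. The objective gap is then controlled via H\"older:
\begin{align}
\sdp_n(V,\pi,T)-\omega_{Q(T)}(V,\pi)\;\leq\;|T|^2\cdot\|V_{A_1A_2Q_1Q_2}\otimes\Phi_{T\hat{T}|S\hat{S}}\|_\infty\cdot\|\rho^\star_{XYZ}-\tilde\rho_{XYZ}\|_1,
\end{align}
where the $|T|^2$ prefactor is from Lemma~\ref{lem:QWinP_TinSP}, and $\|V\otimes\Phi\|_\infty=|T|^2$ since $V$ is $\{0,1\}$-diagonal and $\Phi$ is rank-one of trace $|T|^2$. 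This accounts for the leading $|T|^4$ and reduces the task to a de Finetti bound of order $|T|^2\sqrt{\log(|T||A|)/n}$.

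The hard part is getting exactly that scaling. A blind application of Theorem~\ref{thm:MulQdeFTwithLC} would charge $|YZ|=|A||Q||T|^3$ outside the square root and $\log|X|=\log(|A||Q||T|)$ inside it, giving a bound polynomial in $|A|$ and $|Q|$. The crucial observation is that the partial-trace constraints \eqref{eq:SDPadditionalLC1}--\eqref{eq:SDPadditionalLC2} pin down the reduced states on $Q_1T$ and $Q_2\hat{T}$ to the prescribed classical-quantum product $\pi_i(q_i)\otimes\id/|T|$, so the question registers carry no genuine freedom and can be ``peeled off'' at no cost, while the answer registers enter only logarithmically via mutual-information bounds of the form $I(X:Y^n)\leq\log|X|$ in the information-theoretic proof technique of \cite{brandao2016product,brandao2017quantum,berta2018semidefinite}. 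Carrying this through in Section~\ref{subsec:TriQdeF}\,---\,and invoking the improved informationally complete measurement of Theorem~\ref{thm:MulQdeFTwithLC}'s companion result to keep the distortion factor at $|Z|=|T|^2$ rather than $|Z|^{3/2}$\,---\,replaces $|A||Q||T|^3$ by $|T|^2$ outside the square root and $\log(|A||Q||T|)$ by $\log(|T||A|)$ inside it, yielding the advertised $O(|T|^6\sqrt{\log(|T||A|)/n})$ and hence $\omega_{Q(T)}(V,\pi)=\lim_{n\to\infty}\sdp_n(V,\pi,T)$.
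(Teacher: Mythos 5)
Your proposal follows the paper's proof essentially step for step: the lower bound by lifting product strategies to the extended variable, the upper bound by applying the partial-trace de Finetti theorem (Theorem~\ref{thm:TriQdeFTwithPTLC}) to the optimal SDP state, feasibility of the resulting separable mixture for Lemma~\ref{lem:QWinP_TinSP}, H\"older giving the $|T|^4$ prefactor, and the dimension bookkeeping that exploits the classicality of $A_1Q_1A_2Q_2$ and the fact that the partial-trace constraints remove all but $\log|A_1|$ from the mutual-information bounds. One small correction to your last paragraph: the $O(|T|^2)$ distortion factor that delivers the stated $|T|^6$ comes from the \emph{first} part of Lemma~\ref{lem:DataHiding}, i.e.\ the $18^{3/2}\sqrt{|\cdot|}$ bound applied only to the quantum subsystems $T$, $\hat{T}$, $S\hat{S}$ (total dimension $|T|^4$), whereas the improved side-information measurement you invoke (the $2|B|$ bound), applied iteratively to $S\hat{S}$ and $\hat{T}$, gives $4|T|^3$ and hence only $O(|T|^7)$; the paper's Theorem~\ref{thm:TriQdeFTwithPTLC} carries both branches as a minimum, and it is the first one that must be selected here.
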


Theorem~\ref{thm:Convergence} implies that we can estimate $\omega_{Q(T)}(V,\pi)$ within a constant additive error $\epsilon>0$ with a semidefinite program of size
\begin{align}\label{eq:SDPsize_freegames}
\exp\Big( \mathcal{O} \Big(\frac{|T|^{12}}{\epsilon^2}\log|AT| \left(\log|Q|+\log|AT| \right) \Big) \Big),
\end{align}
which is quasi-polynomial in the number of answers $|A|$ and polynomial in the number of questions $|Q|$. When $|T|=1$, the SDP relaxations become linear programs, and this result recovers the quasi-polynomial time approximation scheme for computing the classical value $\omega_C(V,\pi)$ for two-player free games --- which has a matcing hardness result \cite{aaronson2014multiple,brandao2017quantum}. In Appendix~\ref{app:GeneralGames}, we analyse the computational complexity for general games, for which we find SDP relaxations exponential in $|Q|$.


\paragraph{Examples and comparison with previous work.} The combined SDP relaxations $\sdp^{\npa}_n(V,\pi,T)$ in \eqref{eq:SDP_NPA} promise to give tighter bounds for dimension constraint settings compared to the dimension agnostic NPA hierarchy. We show in Appendix~\ref{app:sdp11_non_signalling} that
\begin{align}
\sdp_1(V,\pi,T)=\omega_{\ns}(V,\pi),
\end{align}
where the latter denotes the optimal success probability with the assistance of general non-signalling correlations\,---\,the so-called non-signalling value. As the NPA hierarchy already has non-signalling constraints built-in, we need to go to higher level relaxations of our hierarchy to see any improvements over the plain NPA hierarchy. For a second level relaxation $\sdp_{2,1}^{\npa_1}(V,\pi,T)$, which is $\sdp_{2,1}(V,\pi,T)$ with the first-level NPA constraint added, we give in Table~\ref{tab:SpecialRM} a rule matrix $W$ which improves for two-dimensional assistance on the plain first NPA level. That is, we find that our additional constraints for two-dimensional assistance give
\begin{align}
\sdp_{2,1}^{\npa_1}\left(W,\pi,2\right)=0.79888<0.80157=\npa_1(W,\pi),
\end{align}
where the latter is the value given by the first level of the NPA hierarchy. The computations were carried out in Python using the SDP solver MOSEK \cite{mosek17}\footnote{The code is available on GitHub:  \url{https://github.com/carlosparaciari/non_local_games}.}, with at least 5 digits accuracy as given from the value obtained from the primal versus the corresponding dual program.

\begin{table}[ht]
\centering
\begin{tabular}{ccccc}
\hline \hline
& $\mathbf{q_1=0, q_2=0}$ & $\mathbf{q_1=0, q_2=1}$ & $\mathbf{q_1=1, q_2=0}$ & $\mathbf{q_1=1, q_2=1}$\\
\hline
\multirow{4}{4em}{\textbf{Winning answers}} & $a_1=0, a_2=1$ & $a_1=0, a_2=2$ & $a_1=0, a_2=0$ & $a_1=1, a_2=2$ \\
& $a_1=1, a_2=2$ & $a_1=1, a_2=0$ & $a_1=0, a_2=2$ & \\
& $a_1=2, a_2=1$ & & $a_1=1, a_2=2$ & \\
& $a_1=2, a_2=2$ & & $a_1=2, a_2=1$ &\\
\hline \hline
\end{tabular}
\caption{\label{tab:SpecialRM} The rule matrix $W$ with $\sdp_{2,1}^{\npa_1}(W,\pi,2)<\npa_1(W,\pi)$, where $\npa_1$ denotes the value given by the first level of the NPA hierarchy. Here, $\pi$ is uniform, $|A|=3$, and $|Q|=2$. The table only shows the winning answers for each question set, where all the other answers lose.}
\end{table}

We give in Appendix \ref{app:adapted} an adapted hierarchy $\overline{\sdp}_n^{\operatorname{proj}}(V,\pi,T)$ of SDPs valid under the assumption of projective rank-one measurements for the special case of $|A|=|T|=2$. These SDPs have smaller program sizes compared to the original SDPs and are thus advantageous for the implementation of higher levels. In particular, for $|A|=|T|=2$ the optimal measurements are necessarily projective rank-one and we find that for $|Q|$ questions on each side distributed uniformly according to $\pi_{\text{unif}}$,
\begin{align}
\overline{\sdp}_{|Q|}^{\operatorname{proj}}\left(V,\pi_{\text{unif}},2\right)\leq\frac{1}{|Q|^2}\cdot\sdp_{\text{PPT}}(V),
\end{align}
where the latter denotes the PPT type SDP relaxation previously given in \cite[Equation (6)]{navascues2014characterization}. Hence, for such settings, our work can be understood as adding more constraints to the relaxations given in \cite{navascues2014characterization}. In the same spirit, we discuss in Appendix \ref{app:asymptotic} another adapted SDP hierarchy $\sdp_n^{\operatorname{proj}}(V,\pi,2)$ which improves the asymptotic scaling of the hierarchy in \cite{navascues2014characterization} from exponential to polynomial in $|Q|$ for additive $\epsilon$-approximations.

As a numerical example, for the $I_{3322}$ Bell inequality with the (non-binary) rule matrix $V_{3322}$ in the form of \cite{Froissart81,Collins04},\footnote{For general Bell inequalities the rule matrix $V(a_1,q_1,a_2,q_2)$ is not necessarily binary but can take general real values. In particular, this is the case for the $I_{3322}$ inequality.} we can reproduce the dimension witness
\begin{align}
\omega_{Q(2)}\left(V_{3322},\pi_{\text{unif}}\right)\leq\overline{\sdp}_{|Q|}^{\operatorname{proj}}\left(V_{3322},\pi_{\text{unif}},2\right)\leq0.25\quad\Big[=\omega_{Q(2)}\left(V_{3322},\pi_{\text{unif}}\right)<\omega_Q\left(V_{3322},\pi_{\text{unif}}\right)\Big],
\end{align}
where the last two steps are due to \cite{Pal10}. Overall, we note that the added constraints from the projective rank-one measurement assumption are typically very useful for practical performance. Going beyond that, numerical tests for low-level relaxations reveal that it seems challenging to compete with the further methods from \cite{navascues2015bounding,navascues2015characterizing} that were designed with practical purposes in mind but lack analytical bounds on the convergence speed.


\section{Hierarchy of semidefinite programming relaxations}\label{sec:SDP}

\subsection{Connection to quantum separability}\label{subsec:SeparabilityProblem}

In this section, we show how to connect the original formulation for the dimension-bounded quantum value $\omega_{Q(T)}$ of two-player free games, \eqref{eq:QWinP_T}, to the quantum separability problem (Lemma~\ref{lem:QWinP_TinSP}). We need the following slightly modified version of the swap trick.

\begin{lemma}\label{lem:modSwapTrick}
Let $M_{AB}$ be a linear operator on $\mathcal{H}_A \otimes \hil_B$, and $N_A$ be a linear operator on $\mathcal{H}_A$.
Then, it holds that
\begin{align}
\Trt{(N_A \otimes \id_B) M_{AB}} = \Trt{\left(F_{\hat{A}|A} \otimes \id_B\right) \left(N_{\hat{A}} \otimes M_{AB}\right)},
\end{align}
where $F_{\hat{A}|A}$ denotes the swap operator between $\hat{A}$ and $A$.
\end{lemma}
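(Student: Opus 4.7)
The plan is to reduce the identity to the standard swap trick by performing the trace over the auxiliary register $\hat{A}$ first. My first step would be to write the swap operator in its canonical form
\begin{align}
F_{\hat{A}|A} = \sum_{i,j} \ket{i}\!\bra{j}_{\hat{A}} \otimes \ket{j}\!\bra{i}_A,
\end{align}
where $\{\ket{i}\}$ is any orthonormal basis of $\hil_A \simeq \hil_{\hat{A}}$, so that the right-hand side of the claimed identity becomes
\begin{align}
\sum_{i,j}\tr\bigl[\bigl(\ket{i}\!\bra{j}_{\hat{A}}\,N_{\hat{A}}\bigr)\otimes\bigl((\ket{j}\!\bra{i}_A\otimes\id_B)\,M_{AB}\bigr)\bigr].
\end{align}

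Next I would carry out the trace over the $\hat{A}$ factor, which gives $\bra{j}N\ket{i}=N_{ji}$, reducing the expression to
\begin{align}
\sum_{i,j} N_{ji}\,\tr\bigl[(\ket{j}\!\bra{i}_A \otimes \id_B)\,M_{AB}\bigr] = \tr\Bigl[\Bigl(\sum_{i,j} N_{ji}\ket{j}\!\bra{i}_A \otimes \id_B\Bigr) M_{AB}\Bigr].
\end{align}
Since the matrix elements satisfy $N_A = \sum_{i,j} N_{ji}\ket{j}\!\bra{i}_A$ by definition, the bracketed operator is exactly $N_A \otimes \id_B$, which yields the left-hand side. The whole computation is just linearity and the resolution $N_A=\sum_{ij}N_{ji}\ket{j}\!\bra{i}$; the $B$-system rides along passively since $F$ acts only on $\hat{A}A$.

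There is essentially no obstacle here; the lemma is a straightforward operator-valued generalisation of the standard identity $\tr[F(X\otimes Y)]=\tr[XY]$. The only thing worth being careful about is bookkeeping of which registers the various operators act on, in particular that $\id_B$ extends $F_{\hat{A}|A}$ trivially on $B$ and that $N_{\hat{A}}$ is an operator only on the auxiliary copy, so the partial trace over $\hat{A}$ commutes with the remaining structure and can be executed first without interfering with the $AB$ part.
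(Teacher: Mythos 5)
Your proof is correct, and it is essentially the same direct basis-expansion computation as the paper's: the paper expands $N_{\hat{A}}$ and $M_{AB}$ in matrix elements and applies $F$, while you equivalently expand $F$ in its dyadic form and trace out $\hat{A}$ first, arriving at the same identity by the same bookkeeping.
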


\begin{proof}
By inspection, we have that
\begin{align}
\Trt{\left(F_{\hat{A}|A} \otimes \id_B\right) \left(N_{\hat{A}} \otimes M_{AB}\right)} &= \Trt{\left(F_{\hat{A}|A} \otimes \id_B\right) \left( \sum_{i,j} n_{ij} \ket{i}\bra{j}_{\hat{A}} \otimes \sum_{k,\ell,s,t} m_{(k\ell)(st)} \ket{k}\bra{\ell}_A \otimes \ket{s}\bra{t}_B \right)} \\
&= \Trt{\sum_{i,j,k,\ell,s,t} n_{ij} \, m_{(k\ell)(st)} \ket{k}\bra{j}_{\hat{A}} \otimes \ket{i}\bra{\ell}_A \otimes \ket{s}\bra{t}_B} \\
&= \sum_{i,j,s,t} n_{ij} \, m_{(ji)(st)} = \Trt{(N_A \otimes \id_B) M_{AB}}.
\end{align}
\end{proof}

\begin{proof}[Proof of Lemma~\ref{lem:QWinP_TinSP}]
The starting point is the expression \eqref{eq:QWinP_T}, and for free games, i.e. $\pi(q_1,q_2) = \pi_1(q_1)\pi_2(q_2)$, we can write
\begin{align}
&\qquad \qquad \omega_{Q(T)}(V, \pi) = |T|^2 \max_{E, D, \rho} \; \tr \left[ \left( V_{A_1A_2Q_1Q_2}\otimes \rho_{T\hat{T}}\right)\left( E_{A_1Q_1T}\otimes D_{A_2Q_2\hat{T}} \right) \right] \label{eq:QWinP_TfulQ} \\
\text{s.t.} \quad &\rho_{T\hat{T}} \geq 0\, , \quad \quad \tr [\rho_{T\hat{T}}] = 1 \\
& E_{A_1Q_1T} = \sum_{a_1, q_1} \pi_1(q_1)\ket{a_1q_1}\!\!\bra{a_1q_1}_{A_1Q_1} \otimes \frac{E_T(a_1|q_1)}{|T|} \geq 0\, , \quad E_{Q_1T} = \sum_{q_1} \pi_1(q_1)\ket{q_1}\!\!\bra{q_1}_{Q_1} \otimes \frac{\mathds{1}_T}{|T|}\\
& D_{A_2Q_2\hat{T}} = \sum_{a_2,q_2} \pi_2(q_2)\ket{a_2q_2}\!\!\bra{a_2q_2}_{A_2Q_2} \otimes \frac{D_{\hat{T}}(a_2|q_2)}{|T|} \geq 0\, , \quad D_{Q_2\hat{T}} = \sum_{q_2} \pi_2(q_2) \ket{q_2}\!\!\bra{q_2}_{Q_2} \otimes \frac{\mathds{1}_{\hat{T}}}{|T|},
\end{align}
where we define $V_{A_1A_2Q_1Q_2}:=\sum_{a_1,a_2,q_1,q_2} V(a_1,a_2,q_1,q_2) \ket{a_1,a_2,q_1,q_2}\!\!\bra{a_1,a_2,q_1,q_2}$. Then, using the swap trick (Lemma~\ref{lem:modSwapTrick}) we can rewrite the objective function in \eqref{eq:QWinP_TfulQ} as
\begin{equation}
\begin{split}
\tr&\left[ \left( V_{A_1A_2Q_1Q_2}\otimes \rho_{T\hat{T}} \right)\left( E_{A_1Q_1T}\otimes D_{A_2Q_2\hat{T}} \right) \right]\\
=& \tr\left[ \left( \mathds{1}_{A_1A_2Q_1Q_2}\otimes\rho_{T\hat{T}} \right) \left(\left( V_{A_1A_2Q_1Q_2}\otimes \mathds{1}_{T\hat{T}} \right)\left( E_{A_1Q_1T}\otimes D_{A_2Q_2\hat{T}} \right) \right)\right]\\
=& \tr\left[ \left( \mathds{1}_{A_1A_2Q_1Q_2} \otimes F_{T\hat{T}|\mathbb{T}\hat{\mathbb{T}}} \right) \left( \left( \left( V_{A_1A_2Q_1Q_2}\otimes \mathds{1}_{T\hat{T}} \right) \left( E_{A_1Q_1T}\otimes D_{A_2Q_2\hat{T}} \right)\right) \otimes \rho_{\mathbb{T}\hat{\mathbb{T}}} \right) \right]  \quad\quad (\text{by Lemma~\ref{lem:modSwapTrick}})\\
=& \tr\left[ \left(\left( V_{A_1A_2Q_1Q_2}\otimes F_{T\hat{T}|\mathbb{T}\hat{\mathbb{T}}} \right)\left( E_{A_1Q_1T}\otimes D_{A_2Q_2\hat{T}} \otimes \rho_{\mathbb{T}\hat{\mathbb{T}}} \right) \right)\right],
\end{split}
\end{equation}
which has a similar form to the objective function in \eqref{eq:QWinP_TSepProb}, with the exception that $F_{T\hat{T}|\mathbb{T}\hat{\mathbb{T}}}$ replaces $\Phi_{T\hat{T}|\mathbb{T}\hat{\mathbb{T}}}$. To complete the proof, we write the swap operator $F_{A|\hat{A}}$ in terms of the (non-normalised) maximally-entangled state $\Phi_{A|\hat{A}} =\ket{\Phi}\bra{\Phi}_{A|\hat{A}}$, where $\ket{\Phi}_{A|\hat{A}} = \sum_{i=1}^{d_A} \ket{i}_A\ket{i}_{\hat{A}}$. Namely, we have $F_{A|\hat{A}} = \Phi_{A|\hat{A}}^{T_A}$, where $T_A$ denotes the transposition over the $A$ subsystem. Redefining the variable $\rho$ as $\rho^T$, we then immediately obtain \eqref{eq:QWinP_TSepProb} as this last step leaves the constraints invariant.
\end{proof}

Since product states are extreme points in the set of separable states, we can think of the maximisation in \eqref{eq:QWinP_TSepProb} as the maximisation over separable states. This makes the above problem an instance of the quantum separability problem but with additional linear constraints, in which the optimisation is over the convex hull of product states satisfying the additional linear constraints.


\subsection{Extendible states}\label{subsubsec:ExtStates}

In the previous section, we showed that $\omega_{Q(T)}(V,\pi)$ can be expressed as the quantum separability problem with additional linear constraints. However, optimising over the full set of separable states is known to be NP-hard \cite{gurvits2003classical,gharibian2008strong}, and thus the expression \eqref{eq:QWinP_TSepProb} does not immediately make the problem approachable. Fortunately, there are known relaxations of the separability condition such as the DPS hierarchy of semidefinite programs \cite{doherty2004complete} based on \emph{extendibility}.

\begin{definition}[Extendibility]\label{def:n_extendibleStates}
A bipartite quantum state $\rho_{AB}$ is $n$-extendible if there exists a multipartite quantum state $\rho_{AB^n}$ such that
\begin{align}\label{eq:ExtCondition}
\tr_{B^{n-1}}\left[ \rho_{AB^n} \right] = \rho_{AB},\quad\left(\mathcal{I}_A\otimes\uni^{\pi}_{B^n}\right)\left( \rho_{AB^n} \right) = \rho_{AB^n} \quad \forall\pi\in\mathcal{S}(B^n),
\end{align}
where $\mathcal{S}(B^n)$ is the symmetric group over $B^n$ with $(|B|^n)!$ many elements, $\uni^{\pi}_{B^n} ( \cdot ) = U^{\pi}_{B^n}(\cdot)(U^{\pi}_{B^n})^{\dagger}$ is the adjoint representation of the group, and $U^{\pi}_{B^n}$ is a unitary permutation operator acting on $B^n$.
\end{definition}

The second condition in the above definition means that the state $\rho_{AB^n}$ is invariant under any permutation on $B^n$ with respect to $A$. A quantum state $\rho_{AB}$ is $n$-extendible for all $n\geq2$ if and only if $\rho_{AB}$ is separable \cite{fannes1988symmetric,raggio1989quantum}. Moreover, deciding if a state has a suitable extension satisfying \eqref{eq:ExtCondition} can be done efficiently via SDPs \cite{doherty2002distinguishing,doherty2004complete}; for fixed $n$, the computation resources scale polynomially in the system dimension. The set of $n$-extendible states is thus an outer approximation for the separable set, which in fact converges to the set of separable states when $n\to\infty$. The DPS hierarchy exactly provides efficient SDP relaxations to characterise separable states using the extendibility constraints \eqref{eq:ExtCondition}. The same idea can be extended to the multipartite case as well.

\begin{definition}[Multipartite extendibility]
A $k$-partite state $\rho_{A_1A_2\cdots A_k}\in\hil_{A_1A_2\cdots A_k}$ is $(n_1,\cdots ,n_{k-1})$-extendible if there exists an extension $\rho_{A_1A_2^{n_1}\cdots A_k^{n_{k-1}}}\in\hil_{A_1A_2^{n_1}\cdots A_k^{n_{k-1}}}$ such that
\begin{align}\label{MulExtCondition}
\tr_{A_2^{n_1-1}A_3^{n_2-1}\cdots A_k^{n_{k-1}-1}}\Big[\rho_{A_1A_2^{n_1}\cdots A_k^{n_{k-1}}}\Big] &= \rho_{A_1A_2\cdots A_k}\\
\Big(\mathcal{I}_{A_1}\otimes\uni^{\pi_1}_{A_2^{n_1}}\otimes\cdots \otimes\uni^{\pi_{k-1}}_{A_k^{n_{k-1}}}\Big)\Big( \rho_{A_1A_2^{n_1}\cdots A_k^{n_{k-1}}} \Big) &= \rho_{A_1A_2^{n_1}\cdots A_k^{n_{k-1}}}
\end{align}
for any permutations $\pi_1\in\SH{A_2^{n_1}}, \cdots , \pi_{k-1}\in\SH{A_k^{n_{k-1}}}$. 
\end{definition}

Due to \cite{doherty2005detecting} it is sufficient to consider the cases $n\equiv n_1=\cdots =n_k$ for the convergence of multipartite extendible states, and as in the bipartite case, the set of $(n,\cdots ,n)$-extendible states converges to the set of $k$-partite separable states for $n\to\infty$.


\subsection{Semidefinite programming relaxations}\label{sec:sdps}

To derive our SDP relaxations, we simply replace the separable variables in \eqref{eq:QWinP_TSepProb}, $E_{A_1Q_1T}\otimes D_{A_2Q_2\hat{T}}\otimes\rho_{S\hat{S}}$, with a generic state $\rho_{A_1Q_1TA_2Q_2\hat{T}S\hat{S}}$ and add the extendibility constraints to the optimisation problem
\begin{equation}\label{eq:SDP+Conditions}
\begin{split}
& \qquad \qquad \sdp_{n_1,n_2}(V,\pi,T) := |T|^2 \max_{\rho} \tr\left[ \left( V_{A_1A_2Q_1Q_2}\otimes \Phi_{T\hat{T}|S\hat{S}} \right) \rho_{A_1Q_1TA_2Q_2\hat{T}S\hat{S}} \right]\\
\text{s.t.} \,\,\, & \rho_{(A_1Q_1T)(A_2Q_2\hat{T})^{n_1}(S\hat{S})^{n_2}} \geq 0\, , \quad \tr \left[ \rho_{(A_1Q_1T)(A_2Q_2\hat{T})^{n_1}(S\hat{S})^{n_2}} \right] = 1\\
& \rho_{(A_1Q_1T)(A_2Q_2\hat{T})^{n_1}(S\hat{S})^{n_2}} \text{ perm.~inv.~on } (A_2Q_2\hat{T})^{n_1} \text{ wrt } (A_1Q_1T)(S\hat{S})^{n_2}\\
& \rho_{(A_1Q_1T)(A_2Q_2\hat{T})^{n_1}(S\hat{S})^{n_2}} \text{ perm.~inv.~on } (S\hat{S})^{n_2} \text{ wrt } (A_1Q_1T)(A_2Q_2\hat{T})^{n_1}\\
&\tr_{A_1}[\rho_{(A_1Q_1T)(A_2Q_2\hat{T})^{n_1}(S\hat{S})^{n_2}}] = \left( \sum_{q_1}\pi_1(q_1)\ket{q_1}\!\!\bra{q_1}_{Q_1}\otimes\frac{\mathds{1}_T}{|T|} \right) \otimes \tr_{A_1Q_1T}[\rho_{(A_1Q_1T)(A_2Q_2\hat{T})^{n_1}(S\hat{S})^{n_2}}] \\
&\tr_{A_2}[\rho_{(A_1Q_1T)(A_2Q_2\hat{T})^{n_1}(S\hat{S})^{n_2}}] = \left( \sum_{q_2}\pi_2(q_2)\ket{q_2}\!\!\bra{q_2}_{Q_2}\otimes\frac{\mathds{1}_{\hat{T}}}{|T|} \right) \otimes \tr_{A_2Q_2\hat{T}}[\rho_{(A_1Q_1T)(A_2Q_2\hat{T})^{n_1}(S\hat{S})^{n_2}}]\\
&\rho^{T_{A_1Q_1T}}_{(A_1Q_1T)(A_2Q_2\hat{T})^{n_1}(S\hat{S})^{n_2}} \geq 0, \quad \rho^{T_{(A_2Q_2\hat{T})^{n_1}}}_{(A_1Q_1T)(A_2Q_2\hat{T})^{n_1}(S\hat{S})^{n_2}} \geq 0, \quad \rho^{T_{(S\hat{S})^{n_2}}}_{(A_1Q_1T)(A_2Q_2\hat{T})^{n_1}(S\hat{S})^{n_2}} \geq 0,\;\ldots\;,
\end{split}
\end{equation}
where in the last line $T_{X}$ denotes the partial transpose performed over subsystems $X$ giving the PPT conditions, leading to all possible PPT conditions with respect to valid partitions. The only difference from $\sdp_n(V,\pi,T)$ in the overview section is that we now allow general $n_1,n_2$. In Section~\ref{sec:NPA} we discuss how to combine $\sdp_{n_1,n_2}(V,\pi,T)$ with the NPA hierarchy to achieve a tight bound.

The relaxations \eqref{eq:SDP+Conditions} are used to derive the convergence speed of the hierarchy in the next section. However, with regards to numerics and following Lemma \ref{lem:QWinP_TinSP}, it is advantageous to explicitly take into account that $A_1, A_2, Q_1$, and $Q_2$ start out as classical systems. Hence, we can write the variable $\rho_{(A_1Q_1T)(A_2Q_2\hat{T})(S\hat{S})}$ in the objective function in the block-diagonal form
\begin{align}\label{eq:CQvariable}
\rho_{(A_1Q_1T)(A_2Q_2\hat{T})(S\hat{S})} = \sum_{a_1,a_2,q_1,q_2} \ket{a_1,a_2,q_1,q_2}\!\!\bra{a_1,a_2,q_1,q_2} \otimes \rho_{T\hat{T}S\hat{S}}(a_1,a_2,q_1,q_2).
\end{align}
Then, we can simplify the $A_1, A_2, Q_1, Q_2$ part of the trace in the object function as
\begin{align}\label{eq:SDP+Conditions_Classical}
\sdp_{n_1,n_2}(V,\pi,T) = |T|^2 \max_{\rho} \sum_{a_1,a_2,q_1,q_2} V(a_1,a_2,q_1,q_2) \tr\left[ \Phi_{T\hat{T}|S\hat{S}} \;\rho_{T\hat{T}S\hat{S}}(a_1,a_2,q_1,q_2) \right].
\end{align}
Similarly, it is straightforward to rewrite the linear constraints in terms of the variables in \eqref{eq:CQvariable}.

Next, we show that the value of $\sdp_{n_1, n_2}(V,\pi,T)$ given in \eqref{eq:SDP+Conditions} is naturally upper bounded by $1$.

\begin{proposition}\label{thm:sdp_upper_bound}
Let $\sdp_{n_1, n_2}(V,\pi, T)$ be the $(n_1, n_2)$-th level SDP relaxation for the $|T|$-dimensional two-player free game with rule matrix $V$ and a probability distribution $\pi(q_1,q_2)=\pi_1(q_1)\pi_2(q_2)$. Then, we have that
\begin{align}
0\leq\sdp_{n_1, n_2}(V,\pi,T) \leq 1.
\end{align}
\end{proposition}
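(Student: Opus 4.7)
The lower bound is immediate. Since $V \geq 0$, $\Phi_{T\hat{T}|S\hat{S}} \geq 0$, and any feasible $\rho$ is positive semidefinite, the objective is non-negative, so it suffices to check feasibility. For any admissible triple $(E_{A_1Q_1T}, D_{A_2Q_2\hat{T}}, \rho_{S\hat{S}})$ from Lemma~\ref{lem:QWinP_TinSP}, the tensor-power extension $E_{A_1Q_1T} \otimes D_{A_2Q_2\hat{T}}^{\otimes n_1} \otimes \rho_{S\hat{S}}^{\otimes n_2}$ satisfies all permutation symmetries, partial-trace identities, and PPT conditions by inspection.

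For the upper bound, the main observation is that $V_{A_1A_2Q_1Q_2}$ is a $0/1$-diagonal matrix, hence $V \leq \id_{A_1A_2Q_1Q_2}$. Together with $\Phi \geq 0$ and $\rho \geq 0$ this yields $\tr[(V \otimes \Phi)\rho] \leq \tr[(\id \otimes \Phi)\rho] = \tr[\Phi_{T\hat{T}|S\hat{S}} \rho_{T\hat{T}S\hat{S}}]$, so it remains to prove $|T|^2 \tr[\Phi_{T\hat{T}|S\hat{S}} \rho_{T\hat{T}S\hat{S}}] \leq 1$. The plan is to extract enough structure on the marginal $\rho_{T\hat{T}S\hat{S}}$ from the linear constraints. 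Tracing out $Q_1$ in \eqref{eq:SDPadditionalLC1} and using $\sum_{q_1}\pi_1(q_1)=1$ forces $T$ to be maximally mixed and uncorrelated with everything else; analogously, \eqref{eq:SDPadditionalLC2} forces $\hat{T}$ to be maximally mixed and decoupled from the rest. Combining the two yields the rigid factorization
\begin{align}
\rho_{T\hat{T}S\hat{S}} = \frac{\id_{T\hat{T}}}{|T|^2} \otimes \rho_{S\hat{S}}.
\end{align}

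Plugging this in and using $\tr_{T\hat{T}}[\Phi_{T\hat{T}|S\hat{S}}] = \id_{S\hat{S}}$ together with $\tr[\rho_{S\hat{S}}] = 1$, the chain of inequalities closes as
\begin{align}
|T|^2 \tr[(V \otimes \Phi)\rho] \leq |T|^2 \tr[\Phi_{T\hat{T}|S\hat{S}}\, \rho_{T\hat{T}S\hat{S}}] = \frac{|T|^2}{|T|^2}\tr[\rho_{S\hat{S}}] = 1.
\end{align}
I expect no real obstacle here: the whole argument hinges on the observation that the two partial-trace constraints combine to pin down $\rho_{T\hat{T}}$ to the maximally mixed state and to decouple it from $S\hat{S}$, at which point the normalization of the unnormalized maximally entangled state $\Phi_{T\hat{T}|S\hat{S}}$ (whose trace is $|T|^2$) cancels the prefactor exactly. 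The extendibility and PPT constraints play no role in establishing the bound.
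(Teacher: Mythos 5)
Your proof is correct and follows essentially the same route as the paper's: bound $V$ by the identity, use the two partial-trace constraints to pin down $\tr_{A_1Q_1A_2Q_2}[\rho]=\frac{\id_{T\hat T}}{|T|^2}\otimes\rho_{S\hat S}$, and then evaluate the overlap with $\Phi_{T\hat T|S\hat S}$ (you compute it via $\tr_{T\hat T}[\Phi]=\id_{S\hat S}$ where the paper uses the partial-transpose/swap trick, which is an equivalent calculation). Your explicit treatment of the lower bound via feasibility of the i.i.d.\ extension is a harmless addition that the paper leaves implicit.
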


\begin{proof}
Let $\rho_{(A_1Q_1T)(A_2Q_2\hat{T})(S\hat{S})}$ be the state optimising the $(n_1, n_2)$-th level SDP relaxation. From the form \eqref{eq:SDP+Conditions_Classical} together with $V(a_1,a_2,q_1,q_2)\leq1$ for all entries, we find
\begin{align}
\sdp_{n_1,n_2}(V,\pi,T) &\leq |T|^2 \max_{\rho} \sum_{a_1,a_2,q_1,q_2}\tr\left[ \Phi_{T\hat{T}|S\hat{S}} \;\rho_{T\hat{T}S\hat{S}}(a_1,a_2,q_1,q_2) \right]\\
&=|T|^2 \tr_{T\hat{T}S\hat{S}} \left[\Phi_{T\hat{T}|S\hat{S}} \tr_{A_1Q_1A_2Q_2} \left[ \rho_{(A_1Q_1T)(A_2Q_2\hat{T})(S\hat{S})}\right]\right]. \label{eq:reshaping_sdp_n1n2_2}
\end{align}
We can now use the linear constraints to write $\rho_{(T\hat{T})(S\hat{S})}$ explicitly. Consider the first linear
constraint \eqref{eq:SDPadditionalLC1}, and taking the trace on the subsystem $A_2$ gives us
\begin{align} \label{eq:mod_first_lin_constaint}
\tr_{A_1A_2}[\rho_{(A_1Q_1T)(A_2Q_2\hat{T})^{n_1}(S\hat{S})^{n_2}}]=\left( \sum_{q_1}\pi_1(q_1)\ket{q_1}\!\!\bra{q_1}_{Q_1}\otimes\frac{\mathds{1}_T}{|T|} \right)\otimes\tr_{A_2} \left[ \rho_{(A_2Q_2\hat{T})^{n_1}(S\hat{S})^{n_2}} \right].
\end{align}
If we now consider the second linear constraint \eqref{eq:SDPadditionalLC2}, and we take the trace
on the subsystems $A_1Q_1T$ and $A_2$, we get that
\begin{align} \label{eq:mod_second_lin_constaint}
\tr_{A_2} \left[ \rho_{(A_2Q_2\hat{T})^{n_1}(S\hat{S})^{n_2}} \right]=\left( \sum_{q_2} \pi_2(q_2) \ket{q_2}\!\!\bra{q_2}_{Q_2} \otimes \frac{\mathds{1}_{\hat{T}}}{|T|} \right)\otimes\rho_{(A_2Q_2\hat{T})^{n_1-1}(S\hat{S})^{n2}}.
\end{align}
By combining \eqref{eq:mod_first_lin_constaint} and \eqref{eq:mod_second_lin_constaint}, and
taking the trace on the remaining $n_1 - 1$ subsystems $A_2Q_2\hat{T}$ and $n_2 - 1$ subsystems $S\hat{S}$,
we obtain
\begin{align}
\tr_{A_1A_2} \left[ \rho_{(A_1Q_1T)(A_2Q_2\hat{T})(S\hat{S})} \right]=\left( \sum_{q_1} \pi_1(q_1) \ket{q_1}\!\!\bra{q_1}_{Q_1} \otimes \frac{\mathds{1}_T}{|T|} \right)\otimes\left( \sum_{q_2} \pi_2(q_2) \ket{q_2}\!\!\bra{q_2}_{Q_2} \otimes \frac{\mathds{1}_{\hat{T}}}{|T|} \right)\otimes\rho_{S\hat{S}}.
\end{align}
If we now take the trace over $Q_1Q_2$, we obtain the following explicit form for $\rho_{(T\hat{T})(S\hat{S})}$,
\begin{align}
\tr_{A_1Q_1A_2Q_2} \left[ \rho_{(A_1Q_1T)(A_2Q_2\hat{T})(S\hat{S})} \right]=\frac{\mathds{1}_{T\hat{T}}}{|T|^2} \otimes \rho_{S\hat{S}}.
\end{align}
By replacing the above state in \eqref{eq:reshaping_sdp_n1n2_2}, and by using the swap trick, we find that
\begin{align}
\operatorname{sdp}_{n_1,n_2}(V,\pi,T)\leq\tr_{T\hat{T}S\hat{S}} \left[ \Phi_{T\hat{T}|S\hat{S}} \left( \id_{T\hat{T}} \otimes \rho_{S\hat{S}} \right) \right]=\tr_{T\hat{T}S\hat{S}} \left[ F_{T\hat{T}|S\hat{S}} \left( \id_{T\hat{T}} \otimes \rho_{S\hat{S}} \right) \right]=\tr_{T\hat{T}} \left[ \rho_{T\hat{T}} \right] = 1,
\end{align}
where we have taken the partial transpose over $T\hat{T}$, and used $\Trt{M_{AB} N_{AB}^{T_A}} = \Trt{M_{AB}^{T_A} N_{AB}}$.
\end{proof}


\section{Convergence of the hierarchy}\label{sec:convergence}

\subsection{Setting}

Quantum de Finetti theorems give a quantitative upper bound on the trace distance between $(n_1,n_2)$-extendible states and the set of separable states as a function of $n_1$, $n_2$, and the dimension of the states. Thus, quantum de Finetti theorems can tell us how good the approximation $\sdp_{n_1,n_2}(V,\pi,T)$ is for each level, leading to a quantitative convergence speed of the relaxations. As pointed out in Section~\ref{sec:overview}, $\sdp_{n_1,n_2}(V,\pi,T)$ in \eqref{eq:SDP+Conditions} has additional linear constraints and hence we need adapted quantum de Finetti theorems that describe the distance between extendible states and the set of separable states consisting of product states which satisfy the linear constraints.

In the following section, we focus on the tripartite quantum de Finetti theorem with partial trace constraints as in \eqref{eq:SDPadditionalLC1} and \eqref{eq:SDPadditionalLC2}. Nevertheless, the proof technique is general and is extended in Appendix \ref{app:MulQdeFTwithLC} to multipartite quantum de Finetti theorems with arbitrary linear constraints as stated in \eqref{thm:MulQdeFTwithLC}. Employing the derived quantum de Finetti theorems, we prove the convergence rate of our SDP hierarchy in Section~\ref{subsec:ConvergenceRate}.


\subsection{Tripartite quantum de Finetti theorem with partial trace constraints}\label{subsec:TriQdeF}

We follow an information-theoretic proof technique, which is based on quantum entropy inequalities \cite{brandao2017quantum,brandao2016product}. Our results can be seen as a generalisation of \cite[Theorem 3.4]{berta2018semidefinite} to the multipartite case. The technical novelty of this generalisation is a more sophisticated use of the conditional multipartite quantum mutual information, as well as identifying the optimal loss in distinguishability relative to quantum side information (for the latter, see the second part of Lemma~\ref{lem:DataHiding} and its proof in Appendix~\ref{app:distortion_sideI}).

We make use of the properties of the conditional quantum mutual information. In the multipartite case, there exist several definitions of quantum mutual information~\cite{yang2009squashed}. Here, we use the following definition; given $k$ quantum systems $A_1, \ldots, A_k$ described by the state $\rho_{A_1 A_2 \ldots A_k}$, the multipartite quantum mutual information is defined as
\begin{align}
I(A_1:A_2: \ldots :A_k)_{\rho} := \sum_{i=1}^k S(A_i) - S(A_1 A_2 \ldots A_k),
\end{align}
where $S(A_i)=-\tr\left[\rho_{A_i}\log\rho_{A_i}\right]$ is the von Neumann entropy~\cite{lindblad1973entropy} of the reduced state $\rho_{A_i}$. It is worth noting that the above definition is equivalent to the one in terms of the relative entropy distance between the state and the tensor product of its marginals,
\begin{align}\label{eq:MutInfRelEnt}
I(A_1:A_2: \ldots :A_k)_{\rho} = D(\rho_{A_1\cdots A_k}||\rho_{A_1}\otimes \ldots \otimes\rho_{A_k}),
\end{align}
where $D(\rho||\sigma) = \Trt{\rho \left( \log \rho - \log \sigma \right)}$ is the relative entropy between two states $\rho$ and $\sigma$, s.t.~$\supp(\rho) \subset \supp(\sigma)$. Given $k$ quantum systems $A_1, \ldots, A_k$ and a classical system $B$, described by the global state $\rho_{A_1A_2\cdots A_kB}$, the conditional multipartite quantum mutual information is defined as 
\begin{align}
I(A_1:A_2: \ldots :A_k|B)_{\rho}:= \sum_{i=1}^k S(A_iB) - S(A_1A_2 \ldots A_kB)-S(B).
\end{align}
The conditional multipartite quantum mutual information with classical $B$-system can be expressed in terms of the bipartite one~\cite[Lemma 13]{brandao2017quantum},
\begin{align}\label{eq:MulttoBip}
I(A_1:\ldots:A_k|B)_{\rho} = I(A_1:A_2|B) + I(A_1A_2:A_3|B) + \ldots + I(A_1 \ldots A_{k-1}:A_k|B).
\end{align}
An additional property of the mutual information is the \emph{chain rule},
\begin{align}\label{eq:ChainRule}
I(AB:C) = I(B:C)+I(A:C|B),
\end{align}
which holds for the conditional mutual information as well,
\begin{align}\label{eq:ChainRuleCond}
I(AB:C|D) = I(B:C|D)+I(A:C|BD).
\end{align}
Finally, Pinsker's inequality states that
\begin{align}\label{eq:PinskerInq}
\norm{\rho - \sigma}_1^2 \leq 2 \ln 2\cdot D(\rho||\sigma),
\end{align}
where $\norm{A}_1:= \Trt{\sqrt{A^{\dagger}A}}$ is the trace norm. We define the conditional state $\rho_{A|z}$ of a state $\rho_{AZ}$ with a classical $Z$ system as
\begin{align}
\rho_{A|z} := \frac{\Tr_Z\left[\rho_{AZ}(\mathds{1}_A\otimes\ket{z}\!\!\bra{z}_Z)\right]}{\Tr\left[\rho_{AZ}(\mathds{1}_A\otimes\ket{z}\!\!\bra{z}_Z)\right]}.
\end{align}
This describes the state after measurement on system $Z$, when the measurement outcome is $z$.

The following lemma bounds the conditional quantum mutual information when the state satisfies a specific partial trace constraint. Note that specifying the linear constraints gives a stronger bound than the general case in terms of the dimension (Lemma~\ref{lem:CMIbound}).

\begin{lemma}\label{lem:BoundCondMIwithLC}
Let $\rho_{ABCD}$ be a quantum state such that $\Trp{A}{\rho_{ABCD}} = \rho_B \otimes \rho_{CD}$. Then, we have that
\begin{align}
I(AB:C|D)_{\rho} \leq 2 \log |A|.
\end{align}
\end{lemma}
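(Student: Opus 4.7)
The plan is to apply the chain rule \eqref{eq:ChainRuleCond} to split the conditional mutual information into a piece that is killed by the linear constraint and a piece that is bounded by the dimension of $A$. First I would write
\begin{align}
I(AB:C|D)_{\rho} = I(B:C|D)_{\rho} + I(A:C|BD)_{\rho}.
\end{align}

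Next I would exploit the partial trace constraint $\Trp{A}{\rho_{ABCD}} = \rho_B \otimes \rho_{CD}$, which means that on $BCD$ the reduced state factorises as $\rho_{BCD} = \rho_B \otimes \rho_{CD}$. Consequently $\rho_{BD} = \rho_B \otimes \rho_D$ and $S(BD)_\rho = S(B)_\rho + S(D)_\rho$, while $S(BCD)_\rho = S(B)_\rho + S(CD)_\rho$. Substituting these into the definition
\begin{align}
I(B:C|D)_\rho = S(BD)_\rho + S(CD)_\rho - S(BCD)_\rho - S(D)_\rho
\end{align}
gives $I(B:C|D)_\rho = 0$. So the constraint kills the $B{:}C$ correlation relative to $D$ exactly.

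The remaining term $I(A:C|BD)_\rho$ is bounded using the standard dimension estimate: writing it as $S(A|BD)_\rho - S(A|BCD)_\rho$ and using $|S(A|X)_\rho| \leq \log|A|$ (which follows from the Araki--Lieb inequality together with subadditivity), one obtains
\begin{align}
I(A:C|BD)_\rho \leq 2\log|A|.
\end{align}
Combining the two estimates yields the claim. The proof is elementary once the chain rule is applied in the right order; the only step that requires a moment's thought is recognising that the linear constraint, which only involves the marginal on $BCD$, is exactly what is needed to eliminate $I(B:C|D)_\rho$, so that the dimension factor depends solely on $|A|$ rather than on $|AB|$ as in the generic bound of Lemma~\ref{lem:CMIbound}.
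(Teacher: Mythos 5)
Your proposal is correct and follows essentially the same route as the paper: the identical chain-rule decomposition $I(AB{:}C|D)=I(B{:}C|D)+I(A{:}C|BD)$, the observation that the constraint forces $I(B{:}C|D)_\rho=0$, and the standard dimension bound $I(A{:}C|BD)\le 2\log|A|$ (the paper gets the last step via $I(A{:}C|BD)\le I(A{:}BCD)\le 2\log|A|$, while you use $S(A|BD)-S(A|BCD)$ with $|S(A|X)|\le\log|A|$, which is the same estimate in different clothing).
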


\begin{proof}
Using the chain rule for the conditional quantum mutual information \eqref{eq:ChainRuleCond}, we have
\begin{align}
I(AB:C|D)_{\rho} &= I(B:C|D)_{\rho} + I(A:C|BD)_{\rho} = I(A:C|BD)_{\rho},
\end{align}
where we used the fact that the systems $BC$ are uncorrelated after tracing out the system $A$. Using the chain rule of the mutual information \eqref{eq:ChainRule} we get that
\begin{align}
I(A:C|BD) = I(A:BCD) - I(A:BD) \leq I(A:BCD) \leq 2 \log |A|.
\end{align}
\end{proof}

Exploiting Lemma~\ref{lem:BoundCondMIwithLC}, the next lemma provides a bound for the tripartite conditional quantum mutual information of a quantum-classical state.

\begin{lemma}\label{lem:BoundTripCondMIwithLC}
Consider a quantum-classical state $\rho_{(AB)Z_1^{n_1}W_1^{n_2}}$ with $Z$- and $W$-systems classical, which satisfies the linear constraint
\begin{align}
\Trp{A}{\rho_{(AB)Z_1^{n_1}W_1^{n_2}}} = \rho_B \otimes \rho_{Z_1^{n_1}W_1^{n_2}}.
\end{align}
Then, there exist integers $\bar{m} \in \left[0,n_1\right)$ and $\bar{\ell} \in \left[0,n_2\right)$ such that
\begin{align}
I(AB:Z_{\bar{m}+1}:W_{\bar{\ell}+1}|Z_1^{\bar{m}}W_1^{\bar{\ell}}) \leq 2\left(\frac{\log |A|}{n_1} + \frac{(\log |A| + \log |Z|)}{n_2}\right),
\end{align}
and
\begin{align}\label{eq:TripCondMIwithPT_E}
\E_{z_1^{\bar{m}}w_1^{\bar{\ell}}} \left[ \norm{ \rho_{(AB)Z_{\bar{m}+1}W_{\bar{\ell}+1}|z_1^{\bar{m}}w_1^{\bar{\ell}}} - \rho_{AB|z_1^{\bar{m}}w_1^{\bar{\ell}}}\otimes\rho_{Z_{\bar{m}+1}|z_1^{\bar{m}}w_1^{\bar{\ell}}}\otimes\rho_{W_{\bar{\ell}+1}|z_1^{\bar{m}}w_1^{\bar{\ell}}} }_1^2 \right] \leq 4 \ln 2 \left( \frac{\log |A|}{n_1} + \frac{\log |A| + \log |Z|}{n_2} \right).
\end{align}
\end{lemma}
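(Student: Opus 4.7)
Plan. The proof proceeds in three stages: decomposition of the tripartite conditional mutual information into bipartite pieces via the chain rules, a pigeonhole averaging over the index pair $(m,\ell)$, and a standard Pinsker step to pass from relative entropy to trace distance. First I would apply the multipartite-to-bipartite identity \eqref{eq:MulttoBip} with $A_1=AB$, $A_2=Z_{m+1}$, $A_3=W_{\ell+1}$, and classical conditioning $Z_1^mW_1^\ell$ to obtain
\begin{align}
I(AB:Z_{m+1}:W_{\ell+1}|Z_1^mW_1^\ell)=I(AB:Z_{m+1}|Z_1^mW_1^\ell)+I(ABZ_{m+1}:W_{\ell+1}|Z_1^mW_1^\ell).
\end{align}
Each summand will then be telescoped with the bipartite chain rule \eqref{eq:ChainRuleCond}: for fixed $\ell$, $\sum_{m=0}^{n_1-1}I(AB:Z_{m+1}|Z_1^mW_1^\ell)=I(AB:Z_1^{n_1}|W_1^\ell)$, and for fixed $m$, $\sum_{\ell=0}^{n_2-1}I(ABZ_{m+1}:W_{\ell+1}|Z_1^mW_1^\ell)=I(ABZ_{m+1}:W_1^{n_2}|Z_1^m)$.

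Next I would bound the telescoped quantities. The partial-trace hypothesis $\tr_A[\rho_{ABZ_1^{n_1}W_1^{n_2}}]=\rho_B\otimes\rho_{Z_1^{n_1}W_1^{n_2}}$ descends under marginalisation, so Lemma~\ref{lem:BoundCondMIwithLC} applies directly to give $I(AB:Z_1^{n_1}|W_1^\ell)\leq 2\log|A|$ for every $\ell$. For the second telescoped term I would first peel off the classical register via \eqref{eq:ChainRuleCond}, writing
\begin{align}
I(ABZ_{m+1}:W_1^{n_2}|Z_1^m)=I(Z_{m+1}:W_1^{n_2}|Z_1^m)+I(AB:W_1^{n_2}|Z_1^{m+1}),
\end{align}
where the first piece is purely classical and bounded by $H(Z_{m+1})\leq\log|Z|$, and the second piece once again satisfies the hypothesis of Lemma~\ref{lem:BoundCondMIwithLC} (with conditioning $Z_1^{m+1}$) and is therefore at most $2\log|A|$. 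Summing the two decomposed terms over all $n_1n_2$ index pairs $(m,\ell)$ and dividing yields by pigeonhole an $(\bar m,\bar\ell)$ with
\begin{align}
I(AB:Z_{\bar m+1}:W_{\bar\ell+1}|Z_1^{\bar m}W_1^{\bar\ell})\leq\frac{2\log|A|}{n_1}+\frac{2\log|A|+\log|Z|}{n_2},
\end{align}
which is at most the target $2(\log|A|/n_1+(\log|A|+\log|Z|)/n_2)$.

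Finally, because $Z_1^{\bar m}W_1^{\bar\ell}$ is classical, the relative-entropy identity \eqref{eq:MutInfRelEnt} lets me rewrite the tripartite conditional mutual information just bounded as the $\E_{z_1^{\bar m}w_1^{\bar\ell}}$-average of the relative entropy between $\rho_{(AB)Z_{\bar m+1}W_{\bar\ell+1}|z_1^{\bar m}w_1^{\bar\ell}}$ and the tensor product of its three conditional marginals. Applying Pinsker's inequality \eqref{eq:PinskerInq} pointwise in $(z_1^{\bar m},w_1^{\bar\ell})$ and then taking expectations produces exactly \eqref{eq:TripCondMIwithPT_E}, with the factor $4\ln 2$ coming from $2\ln 2$ in Pinsker combined with the factor $2$ in the target conditional-mutual-information bound.

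The main obstacle is the second telescoped term $I(ABZ_{m+1}:W_1^{n_2}|Z_1^m)$: Lemma~\ref{lem:BoundCondMIwithLC} cannot be invoked in one shot because absorbing $Z_{m+1}$ into the ``$B$'' slot breaks the required product form (since $Z_{m+1}$ is correlated with $(Z_1^m,W_1^{n_2})$). The fix is the preliminary chain-rule split above, which isolates a purely classical $\log|Z|$ contribution and leaves behind a bona fide quantum-side-information term that Lemma~\ref{lem:BoundCondMIwithLC} does bound. This is also the step where one must be careful that the telescoping directions for $m$ and $\ell$ are chosen so that each telescoped sum factorises cleanly and remains compatible with the partial-trace hypothesis after marginalisation.
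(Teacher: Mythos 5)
Your proposal is correct and follows essentially the same route as the paper: the same multipartite-to-bipartite decomposition of $I(AB:Z_{m+1}:W_{\ell+1}|Z_1^mW_1^\ell)$, the same telescoping via the chain rule, the same pigeonhole over the $n_1n_2$ index pairs, and the same Pinsker step at the end. The one place you diverge is the bound on $I(ABZ_{m+1}:W_1^{n_2}|Z_1^m)$: the paper applies Lemma~\ref{lem:BoundCondMIwithLC} in one shot with the traced-out (``first'') subsystem taken to be $AZ_{m+1}$ rather than $A$ --- which is legitimate, since $\tr_{AZ_{m+1}}[\rho]$ inherits the product form $\rho_B\otimes\rho_{Z_1^mW_1^{n_2}}$ from the hypothesis --- yielding $2\log|AZ_{m+1}|=2(\log|A|+\log|Z|)$. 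Your extra chain-rule split into $I(Z_{m+1}:W_1^{n_2}|Z_1^m)+I(AB:W_1^{n_2}|Z_1^{m+1})\leq\log|Z|+2\log|A|$ is equally valid and in fact saves a factor of $2$ on the $\log|Z|$ contribution, which is harmless since both bounds are dominated by the stated one. So the ``obstacle'' you flag is not actually an obstacle for the one-shot application; it would only arise if one insisted on keeping $Z_{m+1}$ on the $B$ side of the partition.
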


\begin{proof}
As the first step, notice that for all $m$ and $\ell$, using the relation between multipartite mutual information and bipartite ones in \eqref{eq:MulttoBip}, we can express the multipartite conditional mutual information as,
\begin{align}\label{eq:TriToBi}
I(AB:Z_{m+1}:W_{\ell+1}|Z_1^{m}W_1^{\ell}) = I(AB:Z_{m+1}|Z_1^{m}W_1^{\ell}) + I(ABZ_{m+1}:W_{\ell+1}|Z_1^{m}W_1^{\ell}).
\end{align}
We now derive a bound for the right-hand side (RHS) of the above equation, valid for a specific choice of $m$ and $\ell$. For any $\ell \in \left[ 0, n_2 \right)$, we have that 
\begin{align}
\sum_{m=0}^{n_1-1} I(AB:Z_{m+1}|Z_1^mW_1^{\ell}) = I(AB:Z_1^{n_1}|W_1^{\ell}) \leq 2\log |A|,
\end{align}
where we used the chain rule in \eqref{eq:ChainRuleCond} for the first equality and Lemma~\ref{lem:BoundCondMIwithLC} for the second inequality. By summing over $\ell$, we obtain
\begin{align}\label{eq:FirBound}
\sum_{m=0}^{n_1-1} \sum_{\ell=0}^{n_2-1} I(AB:Z_{m+1}|Z_1^mW_1^{\ell}) \leq 2 n_2 \log |A|.
\end{align}
Similarly, for any $m \in \left[ 0, n_1 \right)$, we find that
\begin{align}
\sum_{\ell=0}^{n_2-1} I(ABZ_{m+1}:W_{\ell+1}|Z_1^mW_1^{\ell}) = I(ABZ_{m+1}:W_1^{n_2}|Z_1^{m}) \leq 2\log |AZ_{m+1}|,
\end{align}
where the inequality follows from Lemma~\ref{lem:BoundCondMIwithLC} when the first partition of the system is $AZ_{m+1}$. Summing over $m$ gives us
\begin{align}\label{eq:SecBound}
\sum_{m=0}^{n_1-1} \sum_{\ell=0}^{n_2-1} I(ABZ_{m+1}:W_{\ell+1}|Z_1^mW_1^{\ell}) \leq 2 n_1 \left( \log |A| + \log |Z| \right).
\end{align}
Combining \eqref{eq:FirBound} and \eqref{eq:SecBound}, we obtain
\begin{align}
2 n_2 \log |A| + 2 n_1 \left(\log |A| + \log |Z| \right) &\geq \sum_{m=0}^{n_1-1} \sum_{\ell=0}^{n_2-1} \left[ I(AB:Z_{m+1}|Z_1^mW_1^{\ell}) + I(ABZ_{m+1}:W_{\ell+1}|Z_1^mW_1^{\ell}) \right] \nonumber \\
&\geq n_1 n_2 \left[ I(AB:Z_{\bar{m}+1}|Z_1^{\bar{m}}W_1^{\bar{\ell}}) + I(ABZ_{\bar{m}+1}:W_{\bar{\ell}+1}|Z_1^{\bar{m}}W_1^{\bar{\ell}}) \right] \nonumber \\
&= n_1 n_2 \, I(AB:Z_{\bar{m}+1}:W_{\bar{\ell}+1}|Z_1^{\bar{m}}W_1^{\bar{\ell}})
\end{align}
where $\bar{m}$ and $\bar{\ell}$ are the indices of the smallest element in the sum, and the last equality follows from \eqref{eq:TriToBi}. This proves the first part of the theorem. The second part is obtained as follows. First notice that, when the conditioning system is classical, we can write the conditional mutual information as
\begin{align}
I(AB:Z_{\bar{m}+1}:W_{\bar{\ell}+1}|Z_1^{\bar{m}}W_1^{\bar{\ell}}) = \E_{z_1^{\bar{m}}w_1^{\bar{\ell}}} \left[ I(AB:Z_{\bar{m}+1}:W_{\bar{\ell}+1})_{\rho_{ABZ_{\bar{m}+1}W_{\bar{\ell}+1}|z_1^{\bar{m}}w_1^{\bar{\ell}}}} \right].
\end{align}
Then, by using the fact that the mutual information can be expressed in terms of the relative entropy, \eqref{eq:MutInfRelEnt}, and by using Pinsker's inequality in \eqref{eq:PinskerInq}, we can derive the second part of the theorem.
\end{proof}
\noindent We remark that the resulting bound is not symmetric under exchanging the systems $Z$ and $W$ even though the conditions are symmetric. This is because we used the bipartite expression for the conditional multipartite quantum mutual information in \eqref{eq:TriToBi}. However, the same proof works with the roles of the systems $Z$ and $W$ exchanged and one can freely choose whichever bound is stronger.

The next ingredient is a bound on the loss in distinguishability between quantum states when they are processed with optimal measurements\,---\,called \emph{minimal distortion}.

\begin{lemma}\label{lem:DataHiding}
\begin{enumerate}
\item There exist fixed measurements $\mathcal{M}_A$, $\mathcal{M}_B$, and $\mathcal{M}_C$ with at most $|A|^8$, $|B|^8$, and $|C|^8$ outcomes, respectively, such that for every traceless Hermitian operator $\gamma_{ABC}$ on $\mathcal{H}_{ABC}$ 
\begin{align}
\norm{\gamma_{ABC}}_1\leq 18^{3/2}\sqrt{|ABC|}\cdot\norm{\left(\mathcal{M}_A\otimes \mathcal{M}_B\otimes\mathcal{M}_C\right)\left(\gamma_{ABC}\right)}_1.
\end{align}
\item There exists a fixed measurement $\mathcal{M}_B$ with at most $|B|^6$ outcomes such that for every traceless Hermitian operator $\gamma_{AB}$ on $\mathcal{H}_{AB}$ 
\begin{align}\label{eq:DataHiding_Side}
\norm{\gamma_{AB}}_1\leq2|B|\cdot\norm{\left(\mathcal{I}_A\otimes\mathcal{M}_B\right) \left(\gamma_{AB}\right)}_1.
\end{align}
\end{enumerate}
\end{lemma}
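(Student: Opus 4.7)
The natural strategy for both parts is to build the measurements from approximate unitary $4$-designs on the respective Hilbert spaces, with $|X|^{O(1)}$ POVM elements of the rank-one form $M_k = (|X|/N)\ket{\phi_k}\!\!\bra{\phi_k}$, matching the stated sizes.

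For part 2 (the genuinely new content), I would dualise both sides:
\[
\norm{(\mathcal{I}_A\otimes\mathcal{M}_B)(\gamma_{AB})}_1 = \sup_{\{\Lambda_A^k\,:\,\norm{\Lambda_A^k}_\infty\leq 1\}}\,\tr\!\Bigl[\Bigl(\sum_k\Lambda_A^k\otimes M_k\Bigr)\gamma_{AB}\Bigr],
\]
and $\norm{\gamma_{AB}}_1 = \sup_{\norm{Z_{AB}}_\infty\leq 1}\tr[Z_{AB}\gamma_{AB}]$. It thus suffices to show that every contraction $Z_{AB}$ on $\mathcal{H}_{AB}$ admits a decomposition $Z_{AB}=\sum_k\Lambda_A^k\otimes M_k$ with $\max_k\norm{\Lambda_A^k}_\infty\leq 2|B|$. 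Such a decomposition is produced by the frame-inversion identity $H_B=(|B|+1)\sum_k \tr[M_k H_B]\ket{\phi_k}\!\!\bra{\phi_k}-\tr[H_B]\,\mathds{1}_B$, valid for any Hermitian $H_B$ and following from the (approximate) $2$-design relation $\sum_k M_k\otimes M_k\propto \Pi_{\mathrm{sym},BB}$. Applying the inversion component-wise to the blocks of $Z_{AB}$ in an eigenbasis of $A$, and bounding the resulting $\norm{\Lambda_A^k}_\infty$ via the fourth-moment property of the $4$-design, gives the constant $2|B|$.

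For part 1, I would use the standard single-system distortion estimate $\norm{\gamma_X}_1\leq\sqrt{18|X|}\cdot\norm{\mathcal{M}_X(\gamma_X)}_1$ for traceless $\gamma_X$ derived in \cite{brandao2017quantum} from $4$-designs with $|X|^8$ outcomes. Combining across the three subsystems $A,B,C$ using the tensor-product structure of $\mathcal{M}_A\otimes\mathcal{M}_B\otimes\mathcal{M}_C$ picks up a factor $\sqrt{18|X|}$ per subsystem, yielding the total $18^{3/2}\sqrt{|ABC|}$.

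The main obstacle is establishing the sharper constant $2|B|$ in part 2, improving the $\sqrt{18}|B|^{3/2}$ of \cite{brandao2017quantum}. The earlier argument expanded $Z_{AB}$ in a joint basis of $\mathcal{H}_{AB}$ and paid an extra $\sqrt{|B|}$ to convert a global contraction norm on $AB$ into control on the individual coefficient operators. The improvement here exploits that $A$ is a passive side register, so that only the $\norm{\Lambda_A^k}_\infty$ need to be controlled one at a time, and this tighter control is supplied directly by the explicit frame identity of the $4$-design rather than by generic concentration arguments. The matching lower bound $\Omega(|B|)$ then follows by combining this with the data-hiding examples of \cite{matthews2009distinguishability}.
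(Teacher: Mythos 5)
Your part 2 takes a genuinely different route from the paper. The paper's proof (Appendix~\ref{app:distortion_sideI}, following Lami et al.) builds $\mathcal{M}_B$ from the separable decomposition of the threshold isotropic state $\omega_{A'B'}=\tfrac{1}{|B|}\Phi_{A'B'}+\tfrac{|B|-1}{|B|}\sigma_{A'B'}$ composed with the Bell measurement of the teleportation protocol; the chain $\norm{(\mathcal{I}_A\otimes\mathcal{M}_B)(\gamma_{AB})}_1\geq\norm{\tau(\gamma_{AB}\otimes\omega_{A'B'})}_1\geq\tfrac{1}{|B|}\norm{\gamma_{AB}}_1-\norm{(\mathcal{I}_A\otimes\mathcal{M}_B)(\gamma_{AB})}_1$ then yields exactly $2|B|$, with $|B|^4\cdot|B|^2=|B|^6$ outcomes. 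Your dual/frame-inversion argument is sound in outline and arguably more transparent: dualising both norms reduces the claim to decomposing every Hermitian contraction $Z_{AB}$ as $\sum_k\Lambda_A^k\otimes M_k$ with $\norm{\Lambda_A^k}_\infty$ controlled, and the 2-design reconstruction identity gives $\Lambda_A^k=(|B|+1)\bra{\phi_k}Z_{AB}\ket{\phi_k}-\tr_B[Z_{AB}]$ explicitly.

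Two points need fixing. First, the constant: with this decomposition one gets $\norm{\Lambda_A^k}_\infty\leq(|B|+1)+|B|=2|B|+1$ (since $\norm{\bra{\phi_k}Z\ket{\phi_k}}_\infty\leq1$ and $\norm{\tr_B Z}_\infty\leq|B|$), not $2|B|$; even handling the $\tr_B[Z]$ term separately via $\norm{\tr_B\gamma}_1\leq\norm{(\mathcal{I}_A\otimes\mathcal{M}_B)(\gamma_{AB})}_1$ still lands at $2|B|+1$. This is harmless for every downstream use in the paper (all bounds are $O(|B|)$), but it does not reproduce the lemma as stated, and you should either accept the weaker constant or find the extra saving. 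Second, the appeal to \emph{fourth}-moment properties of a 4-design is a red herring: the bound on $\norm{\Lambda_A^k}_\infty$ follows from the elementary facts that a partial inner product of a contraction is a contraction and that a partial trace of a contraction has operator norm at most $|B|$; all you need is an \emph{exact} weighted projective 2-design (which exists in every dimension with polynomially many elements, comfortably within the $|B|^6$ budget), since an approximate design would degrade the reconstruction identity itself. For part 1, both you and the paper simply defer to \cite[Lemma 14]{brandao2016product}, so there is nothing to compare there.
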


The first part is straightforward from \cite[Lemma 14]{brandao2016product}. We remark that when a traceless Hermitian operator already has a classical subsystem, i.e., $\gamma_{ABCZ}$ with classical $Z$-system, the dimension factor only includes the dimension of the quantum systems
\begin{align}\label{eq:dataHiding_with_classical}
    \norm{\gamma_{ABCZ}}_1 \leq 18^{3/2}\sqrt{|ABC|}\cdot\norm{\left(\mathcal{M}_A\otimes\mathcal{M}_B\otimes\mathcal{M}_C\otimes\mathcal{I}_Z\right)\left(\gamma_{ABCZ}\right)}_1.
\end{align}
This follows easily as $\norm{\sum_z\rho^z_A\otimes\ket{z}\!\!\bra{z}}_1 = \sum_z \norm{\rho^z_A}_1$ for classical-quantum states $\rho_{AZ}$.

The proof of the second part is given in Appendix~\ref{app:distortion_sideI}. We note that the latter improves on factor $\sqrt{18}|B|^{3/2}$ given in \cite[Equation (68)]{brandao2017quantum}. As there exist quantum states $\rho_{AB}$ and $\sigma_{AB}$ with \cite{matthews2009distinguishability}
\begin{align}
\left\|\rho_{AB}-\sigma_{AB}\right\|_1=2\quad\text{and}\quad\sup_{\mathcal{M}_B}\norm{\left(\mathcal{I}_A\otimes\mathcal{M}_B\right) \left(\gamma_{AB}\right)}_1=\frac{2}{|B|+1},
\end{align}
this result establishes the dimension dependence $\Omega(|B|)$ for \emph{minimal distortion relative to quantum side information}. This answers a question left open in \cite{berta2018semidefinite}.

Now, we derive the tripartite quantum de Finetti theorem with partial trace constraints.
\begin{theorem}\label{thm:TriQdeFTwithPTLC}
Let $\rho_{(A\tilde{A})B^{n_1}(C\tilde{C})^{n_2}}$ be a quantum state invariant under permutation on $B^{n_1}$ and $(C\tilde{C})^{n_2}$ with respect to the other systems, satisfying
\begin{align}
&\Trp{A}{\rho_{(A\tilde{A})B^{n_1}(C\tilde{C})^{n_2}}} = \mathcal{X}_{\tilde{A}} \otimes \rho_{B^{n_1}(C\tilde{C})^{n_2}} \quad &\text{linear constraint on $A\tilde{A}$} \label{eq:TriQdeFT_LConA}\\
&\Trp{C}{\rho_{(A\tilde{A})B^{n_1}(C\tilde{C})^{n_2}}} = \mathcal{Z}_{\tilde{C}} \otimes \rho_{(A\tilde{A})B^{n_1}(C\tilde{C})^{n_2-1}} \quad &\text{linear constraint on $C\tilde{C}$} \label{eq:TriQdeFT_LConC}
\end{align}
for some operators $\mathcal{X}_{\tilde{A}}$, and $\mathcal{Z}_{\tilde{C}}$. Then, there exist a probability distribution $\{p_i\}_{i\in I}$ and sets of quantum states $\{\sigma_{A\tilde{A}}^i\}_{i\in I}$, $\{\omega_{B}^i\}_{i\in I}$ and $\{\tau_{C\tilde{C}}^i\}_{i\in I}$ such that
\begin{align}\label{eq:TriQdFTbetterbound}
&\norm{ \rho_{(A\tilde{A})B(C\tilde{C})} - \sum_{i\in I} p_i \, \sigma_{A\tilde{A}}^i\otimes\omega_{B}^i\otimes\tau_{C\tilde{C}}^i}_1\notag\\
&\leq\;\min\left\{18^{3/2}\sqrt{|A\tilde{A}BC\tilde{C}|},4|BC\tilde{C}|\right\} \sqrt{4 \ln 2} \left(\sqrt{\frac{\log |A| + 8 \log |B|}{n_2} + \frac{\log |A|}{n_1}}\right)
\end{align}
with $\Trp{A}{\sigma_{A\tilde{A}}^i} = \mathcal{X}_{\tilde{A}}$ and $\tr_{C} \left[\tau_{C\tilde{C}}^i\right] = \mathcal{Z}_{\tilde{C}}$ for all $i\in I$.
\end{theorem}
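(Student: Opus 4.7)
The plan is to follow the information-theoretic template of Brand\~ao-Harrow \cite{brandao2016product,brandao2017quantum}, adapted to handle two permutation directions with separate linear constraints as in \cite{berta2018semidefinite}. The overall idea is to measure $B^{n_1}$ and $(C\tilde{C})^{n_2}$ with informationally complete measurements to reduce to a quantum-classical state, apply Lemma~\ref{lem:BoundTripCondMIwithLC} to find a pair of indices at which the tripartite conditional mutual information is small, and then lift the resulting classical-level product approximation back to the quantum level via the minimal distortion bounds of Lemma~\ref{lem:DataHiding}.

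Concretely, I would first fix the informationally complete measurements $\mathcal{M}_B$ (with at most $|B|^8$ outcomes) and $\mathcal{M}_{C\tilde{C}}$ (with at most $|C\tilde{C}|^8$ outcomes) provided by Lemma~\ref{lem:DataHiding}, and apply $\mathcal{M}_B$ to each $B_i$ and $\mathcal{M}_{C\tilde{C}}$ to each $(C\tilde{C})_j$. This yields a quantum-classical state $\rho_{(A\tilde{A})Z_1^{n_1}W_1^{n_2}}$ that remains permutation invariant on the classical registers and, because the measurements act on factors disjoint from $A$, still satisfies
\begin{align}
\tr_A\!\left[\rho_{(A\tilde{A})Z_1^{n_1}W_1^{n_2}}\right]=\mathcal{X}_{\tilde{A}}\otimes\rho_{Z_1^{n_1}W_1^{n_2}}.
\end{align}
Invoking Lemma~\ref{lem:BoundTripCondMIwithLC} with its ``$A$'' played by $A$ and its ``$B$'' by $\tilde{A}$ (so that $\log|Z|=8\log|B|$) then produces indices $\bar m\in[0,n_1)$ and $\bar\ell\in[0,n_2)$ together with
\begin{align}
\mathbb{E}_{z_1^{\bar m}w_1^{\bar\ell}}\left\|\rho_{A\tilde{A}Z_{\bar m+1}W_{\bar\ell+1}|zw}-\rho_{A\tilde{A}|zw}\otimes\rho_{Z_{\bar m+1}|zw}\otimes\rho_{W_{\bar\ell+1}|zw}\right\|_1^2\leq 4\ln 2\left(\frac{\log|A|}{n_1}+\frac{\log|A|+8\log|B|}{n_2}\right).
\end{align}

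Next I would invert the measurement on the representative indices. The key observation is that $\rho_{A\tilde{A}Z_{\bar m+1}W_{\bar\ell+1}|zw}$ is the image of the quantum conditional state $\rho_{A\tilde{A}B_{\bar m+1}(C\tilde{C})_{\bar\ell+1}|zw}$ (obtained by measuring only the first $\bar m$ copies of $B$ and the first $\bar\ell$ copies of $C\tilde{C}$ and conditioning on those outcomes) under $\mathcal{I}_{A\tilde{A}}\otimes\mathcal{M}_B\otimes\mathcal{M}_{C\tilde{C}}$, and the same holds for the corresponding product. The minimal distortion bound of Lemma~\ref{lem:DataHiding} — applied either as part 1 with the identity on $A\tilde{A}$ to yield the factor $18^{3/2}\sqrt{|A\tilde{A}BC\tilde{C}|}$, or as part 2 iterated twice (first on $B$ with $A\tilde{A}C\tilde{C}$ as quantum side information, then on $C\tilde{C}$ with the remainder) to yield the factor $4|BC\tilde{C}|$ — therefore applies to the traceless Hermitian difference between the two, and after a Jensen step swapping the expectation with the square root this produces the stated trace-norm bound.

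Finally I would read off the convex decomposition by setting $p_{zw}=\Pr(z_1^{\bar m},w_1^{\bar\ell})$, $\sigma^{zw}_{A\tilde{A}}=\rho_{A\tilde{A}|zw}$, $\omega^{zw}_B=\rho_{B_{\bar m+1}|zw}$, and $\tau^{zw}_{C\tilde{C}}=\rho_{(C\tilde{C})_{\bar\ell+1}|zw}$, and then invoking permutation invariance on $B^{n_1}$ and on $(C\tilde{C})^{n_2}$ to relabel $B_{\bar m+1}\to B_1$ and $(C\tilde{C})_{\bar\ell+1}\to (C\tilde{C})_1$ after averaging. The constraint $\tr_A[\sigma^{zw}_{A\tilde{A}}]=\mathcal{X}_{\tilde{A}}$ is immediate by conditioning \eqref{eq:TriQdeFT_LConA} on the classical outcomes, while $\tr_C[\tau^{zw}_{C\tilde{C}}]=\mathcal{Z}_{\tilde{C}}$ follows from \eqref{eq:TriQdeFT_LConC} because the measurements on all other copies commute with the partial trace over the $C$ register of the representative copy. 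The main technical obstacle is precisely this distortion step: one must carefully set up the mixed measured/quantum picture so that conditioning on outcomes of the measured copies commutes appropriately with the still-quantum representative copies $B_{\bar m+1}$ and $(C\tilde{C})_{\bar\ell+1}$, and then verify that both distortion options indeed combine cleanly into the claimed $\min\{18^{3/2}\sqrt{|A\tilde{A}BC\tilde{C}|},\,4|BC\tilde{C}|\}$ prefactor.
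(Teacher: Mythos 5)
Your proposal is correct and follows essentially the same route as the paper's proof: measure all $B$ and $C\tilde{C}$ copies with the informationally complete measurements of Lemma~\ref{lem:DataHiding}, apply Lemma~\ref{lem:BoundTripCondMIwithLC} with the roles $A\mapsto A$, $B\mapsto\tilde{A}$ to locate good indices, undo the measurement on the representative copies via the two distortion options (giving the $\min\{18^{3/2}\sqrt{|A\tilde{A}BC\tilde{C}|},4|BC\tilde{C}|\}$ prefactor), and verify the linear constraints on the conditional states directly. The only cosmetic difference is that the stated bound uses $8\log|B|$ uniformly even though the iterated part-2 measurement only has $|B|^6$ outcomes, a harmless loosening the paper also makes.
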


\begin{proof}
Let $\mathcal{M}_{B\rightarrow Y}$ ($\mathcal{M}_{C\tilde{C}\rightarrow Z}$) be a quantum-to-classical channel, i.e.~a measurement, from the quantum system $B$ ($C\tilde{C}$) to the classical system $Y$ ($Z$). By measuring both $B$ and $C\tilde{C}$, we obtain the following quantum-classical state,
\begin{align}
\rho_{(A\tilde{A})Y^{n_1}Z^{n_2}} = (\mathcal{I}_{A\tilde{A}} \otimes \mathcal{M}^{\otimes n_1}_{B\rightarrow Y} \otimes \mathcal{M}^{\otimes n_2}_{(C\tilde{C})\rightarrow Z}) \left( \rho_{(A\tilde{A})B^{n_1}(C\tilde{C})^{n_2}} \right),
\end{align}
where $\mathcal{M}_{B\rightarrow Y}^{\otimes n_1}$ is composed of $n_1$ independent and identical  measurements $\mathcal{M}_{B\rightarrow Y}$, each one acting on a different $B$ system, and $\mathcal{M}_{(C\tilde{C})\rightarrow Z}^{\otimes n_2}$ is defined similarly as well. It is easy to see that the post-measurement state still satisfies the linear constraint \eqref{eq:TriQdeFT_LConA}
\begin{align}
\Trp{A}{\rho_{(A\tilde{A})Y^{n_1}Z^{n_2}}} = \mathcal{X}_{\tilde{A}} \otimes \rho_{Y^{n_1}Z^{n_2}},
\end{align}
and therefore is compatible with the hypothesis of Lemma~\ref{lem:BoundTripCondMIwithLC}. Then, we can find $m \in \left[0,n_1\right)$ and $\ell \in \left[0,n_2\right)$ such that 
\begin{align}\label{eq:TRIBetterBoundstep1_PT}
\E_{y^{m}z^{\ell}} \left[ \norm{ \rho_{(A\tilde{A})Y_{m+1}Z_{\ell+1}|y^{m}z^{\ell}} - \rho_{A\tilde{A}|y^{m}z^{\ell}} \otimes \rho_{Y_{m+1}|y^{m}z^{\ell}} \otimes \rho_{Z_{\ell+1}|y^{m}z^{\ell}}}_1^2 \right] \leq 4 \ln 2 \left( \frac{\log |A|}{n_1} + \frac{\log |A| + \log |Y|}{n_2} \right)
\end{align}
by Lemma~\ref{lem:BoundTripCondMIwithLC}. For given $y^m$ and $z^{\ell}$, let us define the traceless Hermitian operator
\begin{align}
\gamma_{(A\tilde{A})B(C\tilde{C})} = \rho_{(A\tilde{A})B_{m+1}(C\tilde{C})_{\ell+1}|y^mz^{\ell}} - \rho_{A\tilde{A}|y^mz^{\ell}} \otimes \rho_{B_{m+1}|y^mz^{\ell}} \otimes \rho_{(C\tilde{C})_{\ell+1}|y^mz^{\ell}}.
\end{align}
This operator is related to the one in the left-hand side (LHS) of \eqref{eq:TRIBetterBoundstep1_PT} by a measurement on the $B$ and $C\tilde{C}$ systems, 
\begin{align}
\left(\mathcal{I}_{A\tilde{A}} \otimes \mathcal{M}_{B\rightarrow Y} \otimes \mathcal{M}_{C\tilde{C}\rightarrow Z}\right)\left( \gamma_{(A\tilde{A})B(C\tilde{C})} \right) = \rho_{(A\tilde{A})Y_{m+1}Z_{\ell+1}|y^{m}z^{\ell}} - \rho_{A\tilde{A}|y^{m}z^{\ell}} \otimes \rho_{Y_{m+1}|y^{m}z^{\ell}}\otimes \rho_{Z_{\ell+1}|y^{m}z^{\ell}}.
\end{align}
As in \cite{berta2018semidefinite}, we consider two ways to relate the trace norm of $\gamma_{(A\tilde{A})B(C\tilde{C})}$ to $(\mathcal{I}_{A\tilde{A}} \otimes \mathcal{M}_{B\rightarrow Y}\otimes\mathcal{M}_{C\tilde{C}\rightarrow Z})\left( \gamma_{(A\tilde{A})B(C\tilde{C})} \right)$. Firstly, we will exploit the second part of Lemma~\ref{lem:DataHiding} iteratively. If we let the measurements $\mathcal{M}_{B\rightarrow Y}$ and $\mathcal{M}_{C\tilde{C}\rightarrow Z}$ be the measurement described in the second part of Lemma~\ref{lem:DataHiding}, we get
\begin{align}
\norm{\gamma_{(A\tilde{A})B(C\tilde{C})}}_1 &\leq 2|C\tilde{C}| \norm{\left(\mathcal{I}_{A\tilde{A}B}\otimes\mathcal{M}_{C\tilde{C}\to Z}\right) \left(\gamma_{(A\tilde{A})B(C\tilde{C})}\right)}_1 \\
&\leq 2|B|\times 2|C\tilde{C}| \norm{(\mathcal{I}_{A\tilde{A}Z}\otimes\mathcal{M}_{B\rightarrow Y})\left((\mathcal{I}_{A\tilde{A}B}\otimes\mathcal{M}_{C\tilde{C}\rightarrow Z})\left(\gamma_{(A\tilde{A})B(C\tilde{C})}\right)\right)}_1 \\
&= 4|BC\tilde{C}|\norm{(\mathcal{I}_{A\tilde{A}}\otimes\mathcal{M}_{B\rightarrow Y}\otimes\mathcal{M}_{C\tilde{C}\rightarrow Z})\left(\gamma_{(A\tilde{A})B(C\tilde{C})}\right)}_1,
\end{align}
and $|Y|\leq|B|^6$. Secondly, if we use the first part of Lemma~\ref{lem:DataHiding} instead, we obtain
\begin{align}
\norm{\gamma_{(A\tilde{A})B(C\tilde{C})}}_1 &\leq \sqrt{18^3|A\tilde{A}BC\tilde{C}|} \, \norm{(\mathcal{M}_{A\tilde{A}}\otimes\mathcal{M}_{B\rightarrow Y} \otimes \mathcal{M}_{C\tilde{C}\rightarrow Z}) \left( \gamma_{(A\tilde{A})B(C\tilde{C})} \right) }_1 \nonumber \\
&\leq \sqrt{18^3|A\tilde{A}BC\tilde{C}|} \, \norm{(\mathcal{I}_{A\tilde{A}}\otimes\mathcal{M}_{B\rightarrow Y} \otimes \mathcal{M}_{C\tilde{C}\rightarrow Z}) \left( \gamma_{(A\tilde{A})B(C\tilde{C})} \right) }_1
\end{align}
with $|Y|\leq|B|^8$, where the second inequality follows from the monotonicity of the trace norm under CPTP maps. Combining \eqref{eq:TRIBetterBoundstep1_PT} with the above two results gives
\begin{align}\label{eq:TriQdeFTwithPT_eq}
\E_{y^mz^{\ell}} &\left[ \norm{ \rho_{(A\tilde{A})B_{m+1}(C\tilde{C})_{\ell+1}|y^mz^{\ell}} - \rho_{A\tilde{A}|y^mz^{\ell}} \otimes \rho_{B_{m+1}|y^mz^{\ell}} \otimes \rho_{(C\tilde{C})_{\ell+1}|y^mz^{\ell}}}_1^2 \right] \nonumber \\
&\leq\;\min\left\{18^{3/2}\sqrt{|A\tilde{A}BC\tilde{C}|},4|BC\tilde{C}|\right\}^2 (4 \ln 2) \left(\frac{\log |A| + 8 \log |B|}{n_2} + \frac{\log |A|}{n_1}\right)
\end{align}
Depending on the dimensions, we can choose the tighter one between the two bounds. The following chain of inequalities combining with \eqref{eq:TriQdeFTwithPT_eq} concludes the proof of the first part of the theorem,
\begin{align}
&\norm{ \rho_{(A\tilde{A})B_{m+1}(C\tilde{C})_{\ell+1}} - \E_{y^{m}z^{\ell}} \left[ \rho_{A\tilde{A}|y^mz^{\ell}} \otimes \rho_{B_{m+1}|y^mz^{\ell}} \otimes \rho_{(C\tilde{C})_{\ell+1}|y^mz^{\ell}} \right] }_1 \nonumber \\
&\leq \E_{y^mz^{\ell}} \left[ \norm{ \rho_{(A\tilde{A})B_{m+1}(C\tilde{C})_{\ell+1}|y^mz^\ell} - \rho_{A\tilde{A}|y^{m}z^{\ell}} \otimes \rho_{B_{m+1}|y^{m}z^{\ell}} \otimes \rho_{(C\tilde{C})_{\ell+1}|y^{m}z^{\ell}}}_1 \right] \qquad (\because\text{ Triangular Inequality}) \nonumber \\
&\leq \sqrt{ \E_{y^mz^{\ell}} \left[ \norm{ \rho_{(A\tilde{A})B_{m+1}(C\tilde{C})_{\ell+1}|y^mz^{\ell}} - \rho_{A\tilde{A}|y^{m}z^{\ell}} \otimes \rho_{B_{m+1}|y^{m}z^{\ell}} \otimes \rho_{(C\tilde{C})_{\ell+1}|y^{m}z^{\ell}}}_1^2 \right]} \qquad (\because\text{ Concavity}), \nonumber
\end{align}
where the quantum state $\E_{y^mz^{\ell}} \left[ \rho_{A\tilde{A}|y^mz^{\ell}} \otimes \rho_{B_{m+1}|y^mz^{\ell}} \otimes \rho_{(C\tilde{C})_{\ell+1}|y^mz^\ell} \right]$ is separable over the tripartite cut $A\tilde{A}|B|C\tilde{C}$. It is worth noting that, since the state under consideration is permutation invariant over $B^{n_1}$ and $(C\tilde{C})^{n_2}$, the result we obtain is independent of the specific $m$ and $\ell$ considered. This closes the proof of \eqref{eq:TriQdFTbetterbound}.

To conclude the proof, we need to show that each state in the mixture still satisfies the corresponding linear constraint. For a state $\rho_{A\tilde{A}|y^mz^{\ell}}$ describing the $A\tilde{A}$ system, we have
\begin{align}
\Trp{A}{\rho_{A\tilde{A}|y^mz^{\ell}}} &= \frac{ \Trp{Y^mZ^{\ell}} {(\id_{\tilde{A}} \otimes M_{B^m\rightarrow Y^m}^{y^m} \otimes M_{(C\tilde{C})^{\ell}\rightarrow Z^{\ell}}^{z^{\ell}})\Trp{A}{\rho_{(A\tilde{A})B^m(C\tilde{C})^{\ell}}}}} {\Trt{(\id_{A\tilde{A}} \otimes M_{B^m\rightarrow Y^m}^{y^m} \otimes M_{(C\tilde{C})^{\ell}\rightarrow Z^{\ell}}^{w^{\ell}})\rho_{(A\tilde{A})B^m(C\tilde{C})^{\ell}}}} \nonumber \\
&= \frac{\Trp{Y^mZ^{\ell}}{(\id_{\tilde{A}} \otimes M_{B^m\rightarrow Y^m}^{y^m} \otimes M_{(C\tilde{C})^{\ell}\rightarrow Z^{\ell}}^{z^{\ell}})  \left( \mathcal{X}_{\tilde{A}} \otimes \rho_{B^m(C\tilde{C})^{\ell}}\right)}}{\Trt{(M_{B^m\rightarrow Y^m}^{y^m} \otimes M_{(C\tilde{C})^{\ell}\rightarrow Z^{\ell}}^{z^{\ell}})\rho_{B^m(C\tilde{C})^{\ell}}}}\\
&= \mathcal{X}_{\tilde{A}},
\end{align}
where $M_{B^m\rightarrow Y^m}^{y^m}$ is the measurement effect over $B^m$ corresponding to the outcome $y^m$, while $M_{(C\tilde{C})^{\ell}\rightarrow Z^{\ell}}^{z^{\ell}}$ is the effect over $(C\tilde{C})^{\ell}$ corresponding to the outcome $z^{\ell}$. For a state $\rho_{(C\tilde{C})_{\ell+1}|y^mz^{\ell}}$, we have
\begin{align}
\tr_{C} \left[\rho_{(C\tilde{C})_{\ell+1}|y^mz^{\ell}}\right] &= \frac{\Trp{Y^mZ^{\ell}}{(M_{B^m\rightarrow Y^m}^{y^m}\otimes M_{(C\tilde{C})^{\ell}\rightarrow Z^{\ell}}^{z^{\ell}}\otimes\id_{\tilde{C}_{\ell+1}})\tr_C\left[\rho_{B^{m}(C\tilde{C})^{\ell+1}}\right]}}{\Trt{(M_{B^m\rightarrow Y^m}^{y^m}\otimes M_{(C\tilde{C})^{\ell}\rightarrow Z^{\ell}}^{z^{\ell}}\otimes\id_{(C\tilde{C})_{\ell+1}})\rho_{B^{m}(C\tilde{C})^{\ell+1}}}} \nonumber \\
&= \frac{\Trp{Y^mz^{\ell}}{(M_{B^m\rightarrow Y^m}^{y^m}\otimes M_{(C\tilde{C})^{\ell}\rightarrow Z^{\ell}}^{z^{\ell}}\otimes\id_{\tilde{C}_{\ell+1}})\left(\mathcal{Z}_{\tilde{C}_{\ell+1}} \otimes \rho_{B^m(C\tilde{C})^{\ell}} \right)}}{\Trt{(M_{B^m\rightarrow Y^m}^{y^m} \otimes M_{(C\tilde{C})^{\ell}\rightarrow Z^{\ell}}^{z^{\ell}})\rho_{B^{m}(C\tilde{C})^{\ell}}}}\\
&= \mathcal{Z}_{\tilde{C}_{\ell+1}}.
\end{align}
\end{proof}

In the case of general linear constraints, we have a worse bound with $|A||\tilde{A}|$ instead of $|A|$ in \eqref{eq:TriQdFTbetterbound} as stated in Theorem~\ref{thm:MulQdeFTwithLC}. We refer to Appendix~\ref{app:MulQdeFTwithLC} for the proof.


\subsection{Convergence rate}\label{subsec:ConvergenceRate}

In the last section, we proved the tripartite quantum de Finetti theorem with partial trace constraints on the state. In this section, we use this result to prove Theorem~\ref{thm:Convergence}, providing a bound on the quantitative accuracy of $\sdp_{n_1,n_2}(V,\pi,T)$ for each level.

\begin{theorem}\label{thm:Convergencen1n2}
Let $\sdp_{n_1,n_2}(V,\pi,T)$ be the $(n_1, n_2)$-th level SDP relaxation for the $|T|$-dimensional two-player free game with rule matrix $V$ and probability distribution $\pi(q_1,q_2)=\pi_1(q_1)\pi_2(q_2)$. Then, we have
\begin{align}
0 \leq \sdp_{n_1, n_2}(V,\pi,T) - w^Q_T(V, \pi) \leq O\Bigg( |T|^6\sqrt{\frac{\log|TA|}{n_2} + \frac{\log|A|}{n_1}} \Bigg).
\end{align}
\end{theorem}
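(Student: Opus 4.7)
The plan is to apply the tripartite quantum de Finetti theorem with partial trace constraints (Theorem~\ref{thm:TriQdeFTwithPTLC}) to an optimizer of $\sdp_{n_1,n_2}$ and translate the resulting trace-norm approximation into a bound on the SDP value via operator-norm duality.

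The lower inequality $0 \leq \sdp_{n_1,n_2}(V,\pi,T) - \omega_{Q(T)}(V,\pi)$ is direct: any feasible triple $(E,D,\rho)$ for the separability formulation of Lemma~\ref{lem:QWinP_TinSP} lifts to $E \otimes D^{\otimes n_1} \otimes \rho^{\otimes n_2}$, which is PSD, PPT across all cuts, permutation-invariant on the $(A_2Q_2\hat T)$- and $(S\hat S)$-registers, satisfies both partial-trace constraints \eqref{eq:SDPadditionalLC1}--\eqref{eq:SDPadditionalLC2}, and achieves the same objective value in \eqref{eq:SDP+Conditions}.

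For the upper bound, let $\rho^\star$ be optimal for $\sdp_{n_1,n_2}$, and apply Theorem~\ref{thm:TriQdeFTwithPTLC} with the identifications $A\tilde A \leftrightarrow A_1Q_1T$ (with $A=A_1$, matching \eqref{eq:SDPadditionalLC1} to \eqref{eq:TriQdeFT_LConA}), $B \leftrightarrow S\hat S$ (no linear constraint, extended $n_2$ times in the SDP), and $C\tilde C \leftrightarrow A_2Q_2\hat T$ (with $C=A_2$, extended $n_1$ times, so that \eqref{eq:SDPadditionalLC2} matches \eqref{eq:TriQdeFT_LConC}); thus the theorem's $n_1,n_2$ correspond to the SDP's $n_2,n_1$ respectively. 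The theorem produces a separable approximant $\varrho = \sum_i p_i \, \sigma^i_{A_1Q_1T}\otimes\omega^i_{S\hat S}\otimes\tau^i_{A_2Q_2\hat T}$ whose components satisfy the required marginal conditions and therefore correspond to valid POVM-type operators in the form of Lemma~\ref{lem:QWinP_TinSP}; each term in the mixture is thus feasible for the separability problem, so $|T|^2\,\tr[(V\otimes\Phi)\,\varrho]\leq\omega_{Q(T)}(V,\pi)$ by convexity. Trace-norm duality then yields
\begin{align*}
\sdp_{n_1,n_2}(V,\pi,T) - \omega_{Q(T)}(V,\pi) \leq |T|^2 \cdot \|V\otimes\Phi_{T\hat T|S\hat S}\|_\infty \cdot \big\|\rho^\star_{(A_1Q_1T)(A_2Q_2\hat T)(S\hat S)} - \varrho\big\|_1.
\end{align*}

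With $\|V\|_\infty\leq 1$ and $\|\Phi_{T\hat T|S\hat S}\|_\infty=|T|^2$, the operator-norm prefactor is $|T|^4$. For the trace-norm factor I would take the $18^{3/2}\sqrt{|A\tilde ABC\tilde C|}$ branch of the $\min$ in \eqref{eq:TriQdFTbetterbound}; since $A_1,Q_1,A_2,Q_2$ are classical they enter the data-hiding argument via identity rather than informationally-complete measurements, cf.~\eqref{eq:dataHiding_with_classical}, so only the quantum subregisters $T$, $S\hat S$, $\hat T$ contribute, giving a dimension factor $O(|T|^2)$ and an overall prefactor $O(|T|^6)$. To place the $\log|T|$ contribution on the $n_2$ rather than the $n_1$ denominator I would invoke the variant of Lemma~\ref{lem:BoundTripCondMIwithLC} with the roles of the $Z$ and $W$ subsystems exchanged (per the remark following its proof), which after the identification above produces the logarithmic factor $\sqrt{\log|TA|/n_2 + \log|A|/n_1}$. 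The main obstacle is the bookkeeping: matching the two SDP extension indices to the theorem's $n_1,n_2$, tracking which subregisters remain classical so that no spurious $|A|$ or $|Q|$ factors inflate the data-hiding prefactor, and picking the correctly swapped mutual-information lemma so that the $\log|T|$ contribution attaches to the shared-state extension index rather than to Bob's.
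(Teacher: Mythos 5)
Your proposal follows the paper's proof essentially step for step: the same identification $A=A_1$, $\tilde A=Q_1T$, $B=S\hat S$, $C=A_2$, $\tilde C=Q_2\hat T$ when invoking Theorem~\ref{thm:TriQdeFTwithPTLC}, the same H\"older step giving the prefactor $|T|^2\cdot\|V\|_\infty\cdot\|\Phi_{T\hat T|S\hat S}\|_\infty=|T|^4$, and the same appeal to the remark around \eqref{eq:dataHiding_with_classical} so that only the quantum registers contribute to the distortion factor, which yields the overall $|T|^6$. Your explicit argument for the lower bound (lifting a feasible product strategy to an i.i.d.\ extension) is fine and fills in a step the paper leaves implicit.

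The one step that would fail as written is your device for attaching the $\log|T|$ contribution to the $n_2$ denominator. Exchanging the roles of $Z$ and $W$ in Lemma~\ref{lem:BoundTripCondMIwithLC} places the additive $\log(\cdot)$ term on the outcome alphabet of the measurement applied to $C\tilde C=A_2Q_2\hat T$ rather than to $S\hat S$; since $A_2Q_2$ is classical and passes through the measurement untouched, that alphabet has size $|A||Q|\cdot|T|^{O(1)}$, so the exchanged bound carries a $\log|Q|$ term and does not produce $\sqrt{\log|TA|/n_2+\log|A|/n_1}$ (removing the $\log|Q|$ would require an extra argument exploiting the linear constraint on $A_2$ after measurement, which you do not supply). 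The exchange is also unnecessary: the direct application of Theorem~\ref{thm:TriQdeFTwithPTLC}, with its $n_1,n_2$ correctly mapped to the SDP's $n_2,n_1$ as you note, gives $O\big(|T|^6\sqrt{\log|TA|/n_1+\log|A|/n_2}\,\big)$, i.e.\ the stated bound with the two denominators interchanged. This is exactly what the paper's own proof establishes (it silently writes the theorem's $n_1,n_2$ as the SDP's), and the discrepancy is immaterial since the downstream complexity analysis sets $n_1=n_2$; you should simply state the bound in that form rather than attempt the role exchange.
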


\begin{proof}
Let $\rho_{A_1Q_1TA_2Q_2\hat{T}S\hat{S}}$ be the optimal state of the $(n_1, n_2)$-th level relaxation $\sdp_{n_1,n_2}(V,\pi,T)$. The state should be $(n_1,n_2)$-extendible since all feasible states must be $(n_1,n_2)$-extendible states satisfying the linear constraints. Then, we have
\begin{equation}
\begin{split}
&\sdp_{n_1,n_2}(V,\pi,T) = |T|^2 \tr\left[ \left( V_{A_1A_2Q_1Q_2}\otimes \Phi_{T\hat{T}|S\hat{S}}  \right) \rho_{A_1Q_1TA_2Q_2\hat{T}S\hat{S}} \right] \\
&= |T|^2 \tr\left[ \left( V_{A_1A_2Q_1Q_2}\otimes \Phi_{T\hat{T}|S\hat{S}}  \right) \left( \sum_i p_i \,\sigma_{A_1Q_1T}^i \otimes \omega_{A_2Q_2\hat{T}}^i \otimes \tau_{S\hat{S}}^i \right) \right] \\
&\quad + |T|^2 \tr\left[ \left( V_{A_1A_2Q_1Q_2}\otimes \Phi_{T\hat{T}|S\hat{S}}  \right) \left( \rho_{A_1Q_1TA_2Q_2\hat{T}S\hat{S}} - \sum_i p_i \,\sigma_{A_1Q_1T}^i \otimes \omega_{A_2Q_2\hat{T}}^i \otimes \tau_{S\hat{S}}^i  \right) \right] \\
&\leq \omega_{Q(T)}(V,\pi) + |T|^2 \tr\left[ \left( V_{A_1A_2Q_1Q_2}\otimes \Phi_{T\hat{T}|S\hat{S}} \right) \left( \rho_{A_1Q_1TA_2Q_2\hat{T}S\hat{S}} - \sum_i p_i \,\sigma_{A_1Q_1T}^i \otimes \omega_{A_2Q_2\hat{T}}^i \otimes \tau_{S\hat{S}}^i  \right) \right],
\end{split}
\end{equation}
where $\sum_i p_i \,\sigma_{A_1Q_1T}^i \otimes \omega_{A_2Q_2\hat{T}}^i \otimes \tau_{S\hat{S}}^i$ is one of the close separable states to $\rho_{A_1Q_1TA_2Q_2\hat{T}S\hat{S}}$ specified by Theorem~\ref{thm:TriQdeFTwithPTLC}.
As $\operatorname{sdp}_{n_1,n_2}(V,\pi,T)$ is an upper bound for $\omega_{Q(T)}(V,\pi)$ we obtain
\begin{equation}
\begin{split}
\Big|&\sdp_{n_1,n_2}(V,\pi,T) - \omega_{Q(T)}(V,\pi)\Big|\\
& \leq |T|^2 \left\vert \tr\left[ \left( V_{A_1A_2Q_1Q_2}\otimes \Phi_{T\hat{T}|S\hat{S}}  \right) \left( \rho_{A_1Q_1TA_2Q_2\hat{T}S\hat{S}} - \sum_i p_i \,\sigma_{A_1Q_1T}^i \otimes \omega_{A_2Q_2\hat{T}}^i \otimes \tau_{S\hat{S}}^i  \right) \right] \right\vert \\
&\leq |T|^2 \norm{V_{A_1A_2Q_1Q_2}\otimes \Phi_{T\hat{T}|S\hat{S}}}_{\infty} \norm{\rho_{A_1Q_1TA_2Q_2\hat{T}S\hat{S}} - \sum_i p_i \,\sigma_{A_1Q_1T}^i \otimes \omega_{A_2Q_2\hat{T}}^i \otimes \tau_{S\hat{S}}^i }_1 \\
& \qquad \qquad \qquad \qquad \qquad \qquad \qquad \qquad \qquad \qquad \qquad \qquad \qquad \qquad \qquad  (\because\text{ H\"older's inequality}) \\
&= |T|^2 \norm{V_{A_1A_2Q_1Q_2}}_{\infty} \norm{\Phi_{T\hat{T}|S\hat{S}}}_{\infty} \norm{\rho_{A_1Q_1TA_2Q_2\hat{T}S\hat{S}} - \sum_i p_i \,\sigma_{A_1Q_1T}^i \otimes \omega_{A_2Q_2\hat{T}}^i \otimes \tau_{S\hat{S}}^i }_1 \\
&= |T|^4 \norm{\rho_{A_1Q_1TA_2Q_2\hat{T}S\hat{S}} - \sum_i p_i \,\sigma_{A_1Q_1T}^i \otimes \omega_{A_2Q_2\hat{T}}^i \otimes \tau_{S\hat{S}}^i }_1 \qquad \left( \because \norm{V_{A_1A_2Q_1Q_2}}_{\infty} = 1, \; \norm{\Phi_{T\hat{T}|S\hat{S}}}_{\infty} = |T|^2 \right) \\ 
&\leq |T|^4 \left[18^{3/2}|T|^2 \left(\sqrt{2\ln2}\right) \left(\sqrt{\frac{ \left( \log |A| + 8\log |S\hat{S}| \right)}{n_2} + \frac{ \log |A|}{n_1}}\right)\right] \qquad (\because\text{ Theorem~\ref{thm:TriQdeFTwithPTLC}})\\
&= 18^{3/2} |T|^6  \left(\sqrt{2\ln2}\right) \left(\sqrt{\frac{ \left( \log|A| + 16\log|T| \right)}{n_2} + \frac{ \log|A|}{n_1}}\right).
\end{split}
\end{equation}
Here, we set $A=A_1$, $\tilde{A}=Q_1T$ , $B=S\hat{S}$, $C=A_2$, and $\tilde{C}=Q_2\hat{T}$ when we applied Theorem~\ref{thm:TriQdeFTwithPTLC}, and there is no $|AQ|$ contribution in the dimension factor coming from the optimal measurements with minimal distortion\,---\,i.e.~the factor $\min\left\{18^{3/2}\sqrt{|A\tilde{A}BC\tilde{C}|},4|BC\tilde{C}|\right\}$\,---\,since $A_1Q_1$ and $A_2Q_2$ are already classical systems; please recall the remark in \eqref{eq:dataHiding_with_classical} after Lemma~\ref{lem:DataHiding} from which the dimension factor originated. In both cases of Lemma~\ref{lem:DataHiding}, the dimension factor only comes from the measurements on the quantum systems.
\end{proof}

\begin{corollary}
Let $\sdp_{n_1, n_2}(V,\pi,T)$ be the $(n_1, n_2)$-th level relaxation for the $|T|$-dimensional two-player free game with rule matrix $V$ and probability distribution $\pi(q_1,q_2)=\pi_1(q_1)\pi_2(q_2)$. Then, we have
\begin{align}
\omega_{Q(T)}(V,\pi)=\lim_{n_1, n_2 \rightarrow\infty} \sdp_{n_1, n_2}(V,\pi,T).
\end{align}
\end{corollary}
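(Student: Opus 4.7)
The plan is to observe that this corollary is an immediate consequence of the quantitative bound just established in Theorem~\ref{thm:Convergencen1n2}. Since that theorem gives an explicit rate of convergence, the qualitative limit statement follows by taking $n_1, n_2 \to \infty$.

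Concretely, I would proceed as follows. First, I would invoke Theorem~\ref{thm:Convergencen1n2}, which states
\begin{align}
0 \leq \sdp_{n_1,n_2}(V,\pi,T) - \omega_{Q(T)}(V,\pi) \leq O\!\left( |T|^6 \sqrt{\frac{\log|TA|}{n_2} + \frac{\log|A|}{n_1}} \right).
\end{align}
The lower bound $0$ already tells us that $\sdp_{n_1,n_2}(V,\pi,T)$ is a valid outer approximation (it comes from the fact that any product state decomposition in Lemma~\ref{lem:QWinP_TinSP} yields a feasible point of the SDP via the trivial extension by tensor copies). For the upper bound, I would note that for fixed $|A|$, $|T|$, the quantity inside the square root tends to $0$ as both $n_1, n_2 \to \infty$, so the entire right-hand side vanishes in that limit.

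Hence, by the squeeze principle, $\lim_{n_1,n_2\to\infty}\sdp_{n_1,n_2}(V,\pi,T) - \omega_{Q(T)}(V,\pi) = 0$, which yields the claimed identity. I would also briefly remark that since the sequence $\sdp_{n_1,n_2}(V,\pi,T)$ is monotonically non-increasing in each of $n_1$ and $n_2$ (because adding an extension register only tightens the feasible set, so the set of feasible states at level $(n_1+1,n_2)$ or $(n_1,n_2+1)$ is contained in the corresponding lower-level set after tracing out one copy), the limit exists along any joint divergent sequence and equals the infimum, matching $\omega_{Q(T)}(V,\pi)$.

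There is essentially no obstacle here beyond correctly citing Theorem~\ref{thm:Convergencen1n2}; the real content was in the quantitative de Finetti argument in the preceding section. The only subtlety worth mentioning is that the statement concerns the joint limit $n_1,n_2\to\infty$ rather than iterated limits, but this is handled transparently by the fact that the error bound depends on $1/n_1$ and $1/n_2$ additively under the square root, so any mode of divergence of both indices suffices.
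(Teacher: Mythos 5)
Your proposal is correct and matches the paper's intent exactly: the corollary is stated without proof precisely because it is an immediate consequence of the quantitative bound in Theorem~\ref{thm:Convergencen1n2}, obtained by letting $n_1,n_2\to\infty$. Your additional remarks on monotonicity and the joint versus iterated limit are sound but not needed.
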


Theorem~\ref{thm:Convergencen1n2} provides us the convergence speed of our SDP relaxations. For simplicity, let us assume $n_1=n_2=n$. To achieve a constant error $\epsilon$, we need to go up to the following level of the hierarchy:
\begin{equation}
\begin{split}
O\left( |T|^6\sqrt{\frac{\log|TA|}{n}} \right) \leq \epsilon \qquad \iff  \qquad n \geq O\left( |T|^{12}\frac{\log|TA|}{\epsilon^2} \right).
\end{split}
\end{equation}
When $n_1=n_2=\mathcal{O}\left( |T|^{12}\frac{\log|TA|}{\epsilon^2} \right)$, the size of the variables in the SDP is
\begin{align}\label{eq:SizeOfSDPforEpsilon}
\text{$(|A||Q||T|)^{\mathcal{O}\left( |T|^{12}\frac{\log|TA|}{\epsilon^2} \right)}$, which is $\exp\Big( \mathcal{O}\Big( \frac{|T|^{12}}{\epsilon^2} \left( \log^2|AT| + \log|Q|\log|TA| \right)\Big) \Big)$.}
\end{align}
This implies that approximating the quantum optimal winning probability of two-player free games for fixed dimension can be solved within the additive error $\epsilon>0$ in \emph{quasi-polynomial time} in terms of the sizes of answers and questions of the game. Note that this convergence rate is derived only from the linear constraints in $\sdp_{n_1,n_2}(V,\pi,T)$, and we used neither the PPT nor NPA constraints.


\section{Combining with the NPA hierarchy}\label{sec:NPA}

\subsection{Setting}

In this section, we discuss how to combine our $\sdp_{n_1,n_2}(V,\pi,T)$ in \eqref{eq:SDP+Conditions} with the NPA hierarchy \cite{navascues2008convergent,pironio2010convergent}. Combining these two hierarchies is advantageous because the resulting SDP relaxation always returns a bound which is equal or tighter than the ones obtained with the individual relaxations. We review the NPA hierarchy in Section~\ref{subsec:NPAhierarchy} and show how to combine the two hierarchies in Section~\ref{subsec:NPA+SDP}. 


\subsection{NPA hierarchy}\label{subsec:NPAhierarchy}

The NPA hierarchy gives necessary conditions satisfied by quantum correlations \cite{navascues2008convergent,pironio2010convergent}. As mentioned, a correlation between two parties can be represented as a conditional joint probability $p(a_1,a_2|q_1,q_2)$, where $q_1$ and $q_2$ are the labels of the measurements performed by Alice and Bob respectively, while $a_1$ and $a_2$ are the outcomes of such measurements. For a given choice of $q_1$ and $q_2$, the correlation $p(a_1,a_2|q_1,q_2)$ is classified as quantum if there exist a quantum state $\rho_{AB}$ and local measurements $\left\{ E(a_1,q_1) = \tilde{E}(a_1,q_1)_A\otimes\id_B \right\}_{a_1}$ and $\left\{ E(a_2,q_2) = \id_A\otimes \tilde{E}(a_2,q_2)_B \right\}_{a_2}$ such that
\begin{align}
p(a_1,a_2|q_1,q_2) = \tr\left[ E(a_1,q_1)E(a_2,q_2)\rho_{AB} \right].
\end{align}
In the NPA hierarchy, we can always assume without loss of generality that the measurements $\{E(a_i,q_i)\}$ are composed by orthogonal projectors as the dimension of the system is unbounded. However, this is no longer true when we combine the NPA hierarchy with our SDP relaxations, which have the dimension restriction. In the next section, we will clarify how we can take this into account when we write the NPA constraints for the combined hierarchy. For now, let us assume the measurements are composed of orthogonal projectors and satisfy the following properties:
\begin{flalign}
&\text{(i) hermiticity: } E(a_i,q_i)^{\dagger}=E(a_i,q_i)&\\
&\text{(ii) orthogonality: } E(a_i,q_i)E(\bar{a}_i,\bar{q}_i) = \delta_{a_i,\bar{a}_i}E(a_i,q_i) \;\text{ if }\; q_i=\bar{q}_i\\
&\text{(iii) completeness: } \sum_{a_i} E(a_i,q_i) = \id_{AB}\;\;\forall q_i\\
&\text{(iv) commutativity: } [E(a_1,q_1),E(a_2,q_2)]=0\;\;\forall a_1,a_2,q_1,q_2. 
\end{flalign}
For simplicity of notation, let us define the set $\boldsymbol{\alpha}:=\{(a_1,q_1)\}$ and $\boldsymbol{\beta}\equiv \{(a_2,q_2)\}$, and denote $P_{\alpha\beta}:=p(a_1,a_2|q_1,q_2)$, where $\alpha=(a_1,q_1)\in\boldsymbol{\alpha}$ and $\beta=(a_2,q_2)\in\boldsymbol{\beta}$. The set $\boldsymbol{\alpha}$ has $m_1 = |A_1||Q_1|$ elements, and the set $\boldsymbol{\beta}$ has $m_2=|A_2||Q_2|$ elements. 

Let us assume that a correlation $P_{\alpha\beta}$ is a quantum correlation which has a quantum state $\rho_{AB}$ and measurement operators $\{E(a_i,q_i)\}$. By taking products of $E(a_i,q_i)$ or linear combinations of such products, we can construct a set of $n$ operators, $\mathcal{S}=\{S_1,\cdots ,S_n\}$. For each set $\mathcal{S}$, we can construct an $n\times n$ matrix $\Gamma$ of the form
\begin{align}\label{eq:NPAGamma}
\Gamma_{ij} = \tr\left[ S_i^{\dagger}S_j \rho_{AB} \right].
\end{align}
From the construction, $\Gamma$ is a Hermitian matrix which satisfies that
\begin{align}\label{eq:NPAcondition1}
\text{if } \; \sum_{i,j}c_{ij}S_i^{\dagger}S_j = 0\quad\text{then}\quad\sum_{i,j} c_{ij}\Gamma_{ij}=0,
\end{align}
as well as that
\begin{align}\label{eq:NPAcondition2}
\text{if } \; \sum_{i,j} c_{ij}S_i^{\dagger}S_j = \sum_{\alpha,\beta}d_{\alpha\beta}E(\alpha)E(\beta)\quad\text{then}\quad\sum_{i,j} c_{ij}\Gamma_{ij} = \sum_{\alpha,\beta}d_{\alpha\beta}P_{\alpha\beta},
\end{align}
where $\alpha\in\boldsymbol{\alpha}$ and $\beta\in\boldsymbol{\beta}$. Here, the coefficients $c_{ij}$ and $d_{\alpha\beta}$ are determined by the set $\mathcal{S}$. Moreover, we have 
\begin{align}\label{eq:NPAconditionPositiveSD}
\Gamma \geq 0.
\end{align}
As a result, we get the following necessary conditions for quantum correlations.

\begin{lemma}\label{lem:NPA}
\cite{navascues2007bounding,navascues2008convergent} For a given set $\mathcal{S}=\{S_1,\cdots ,S_n\}$ constructed by taking products or linear combinations of $\{E(a_i,q_i)\}$ satisfying the conditions (i)-(iv), a necessary condition for a correlation $p(a_1,a_2|q_1,q_2)$ to be quantum is that there exists a Hermitian $n\times n$ matrix, $\Gamma$, which satisfies the conditions \eqref{eq:NPAcondition1}, \eqref{eq:NPAcondition2}, and \eqref{eq:NPAconditionPositiveSD}.
\end{lemma}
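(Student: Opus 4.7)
The plan is to follow the construction already essentially outlined in the paragraphs preceding the lemma statement: given a quantum realisation of the correlation, build the matrix $\Gamma$ from it explicitly by the formula in \eqref{eq:NPAGamma}, and then verify each of the three claimed properties in turn. Since the claim is only that these are \emph{necessary} conditions, I do not need any converse: I can just exhibit, from an assumed quantum model $(\rho_{AB}, \{E(a_i,q_i)\})$, a matrix $\Gamma$ with the stated properties.

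First I would set up the notation: fix a quantum model realising $p(a_1,a_2|q_1,q_2)$, i.e. a state $\rho_{AB}$ and projective measurement operators $E(\alpha), E(\beta)$ satisfying (i)--(iv), and define $\Gamma_{ij} := \tr[S_i^{\dagger} S_j\,\rho_{AB}]$ for $S_i, S_j \in \mathcal{S}$. Hermiticity $\Gamma_{ij} = \overline{\Gamma_{ji}}$ is immediate from the cyclicity of the trace together with $(S_j^{\dagger} S_i)^{\dagger} = S_i^{\dagger} S_j$ and the fact that $\rho_{AB}$ is self-adjoint. To verify positive semidefiniteness \eqref{eq:NPAconditionPositiveSD}, I take an arbitrary complex vector $v \in \mathbb{C}^n$ and compute
\begin{align}
v^{\dagger} \Gamma v \;=\; \sum_{i,j} \overline{v_i} v_j \tr[S_i^{\dagger} S_j \rho_{AB}] \;=\; \tr\!\left[\Bigl(\sum_i v_i S_i\Bigr)^{\!\dagger}\!\Bigl(\sum_j v_j S_j\Bigr)\rho_{AB}\right] \;\geq\; 0,
\end{align}
since the operator in the trace is of the form $X^{\dagger} X$ applied against a positive state. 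This gives $\Gamma \geq 0$.

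Next I would verify the two linear conditions. For \eqref{eq:NPAcondition1}, if the coefficients $\{c_{ij}\}$ satisfy $\sum_{i,j} c_{ij} S_i^{\dagger} S_j = 0$ as an operator identity, then applying $\tr[\,\cdot\,\rho_{AB}]$ term by term and using linearity of the trace gives $\sum_{i,j} c_{ij}\,\Gamma_{ij} = \tr[0\cdot \rho_{AB}] = 0$. For \eqref{eq:NPAcondition2}, if the reduction of the operator sum takes the form $\sum_{i,j} c_{ij} S_i^{\dagger} S_j = \sum_{\alpha,\beta} d_{\alpha\beta}\, E(\alpha) E(\beta)$, then applying $\tr[\,\cdot\,\rho_{AB}]$ to both sides and using $p(a_1,a_2|q_1,q_2) = \tr[E(\alpha) E(\beta) \rho_{AB}]$ gives $\sum_{i,j} c_{ij} \Gamma_{ij} = \sum_{\alpha,\beta} d_{\alpha\beta} P_{\alpha\beta}$, as required. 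Here, crucially, the reduction of $\sum_{i,j} c_{ij} S_i^{\dagger} S_j$ into a linear combination of operators of the form $E(\alpha) E(\beta)$ is produced by invoking properties (i)--(iv): hermiticity eliminates daggers, orthogonality collapses repeated projectors within the same measurement, completeness absorbs identities, and commutativity lets us order any surviving product so that all $E(\alpha)$'s appear to the left of all $E(\beta)$'s.

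There is no real obstacle here; the only place one has to be slightly careful is in appealing to (ii)--(iv) to guarantee that \emph{any} product of the operators in $\mathcal{S}$ can be rewritten as a linear combination of the canonical monomials $E(\alpha) E(\beta)$ so that \eqref{eq:NPAcondition2} makes sense as stated. This is a purely algebraic rewriting in the free $*$-algebra modulo the relations, and the coefficients $\{c_{ij}\}, \{d_{\alpha\beta}\}$ of the lemma are precisely those produced by this normalisation procedure, which depends only on $\mathcal{S}$ and not on the specific model. Once that bookkeeping is made explicit, the whole proof reduces to linearity of the trace plus positivity of $\rho_{AB}$.
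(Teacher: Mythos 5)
Your proposal is correct and matches the paper's treatment: the paper does not give a separate proof but presents exactly this construction in the paragraphs preceding the lemma (defining $\Gamma_{ij}=\tr[S_i^{\dagger}S_j\rho_{AB}]$ and noting hermiticity, the two linear conditions by linearity of the trace, and $\Gamma\geq 0$ via $v^{\dagger}\Gamma v=\tr[X^{\dagger}X\rho_{AB}]\geq 0$), deferring the details to the cited NPA references. Your explicit verification of positive semidefiniteness and your remark about normalising monomials via properties (i)--(iv) are exactly the standard argument and introduce nothing that conflicts with the paper.
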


The simplest example is when we choose $\mathcal{S}$ as $\{E(a_1,q_1)\}\cup\{E(a_2,q_2)\}=\{E(a_i,q_i)\}$. For any given correlation $p(a_1,a_2|q_1,q_2)$, the corresponding $\tilde{\Gamma}$ for $\mathcal{S}=\{E(a_i,q_i)\}$ constructed from \eqref{eq:NPAGamma} takes the form
\begin{align}\label{eq:NPA1Gamma}
\tilde{\Gamma} = \begin{pmatrix} Q&P\\P^T&R \end{pmatrix},
\end{align}
where the $m_1\times m_2$ sub-matrix $P$ has the form
\begin{align}\label{eq:NPA1Gamma_P_conditions}
P_{\alpha\beta} = p(a_1,a_2|q_1,q_2) = \tr\left[E(a_1,q_1)E(a_2,q_2)\rho_{AB}\right], \qquad \alpha=(a_1,q_1) , \beta=(a_2,q_2),
\end{align}
and the $m_1\times m_1$ sub-matrix $Q$ and $m_2\times m_2$ sub-matrix $R$ satisfy
\begin{equation}\label{eq:NPA1Gamma_QR_conditions}
\begin{split}
&Q_{\alpha\alpha} = p(a_1|q_1) = \tr\left[E(a_1,q_1)\rho_{AB}\right] , \qquad \alpha = (a_1,q_1)\in\boldsymbol{\alpha} \\
&Q_{\alpha\alpha'} = 0 , \hspace*{47mm} \alpha = (a_1,q_1),\, \alpha' = (a'_1,q'_1) \in\boldsymbol{\alpha} \qquad \text{for} \ q_1 = q'_1 \\
&R_{\beta\beta} = p(a_2|q_2) = \tr\left[E(a_2,q_2)\rho_{AB}\right], \qquad \beta = (a_2,q_2)\in\boldsymbol{\beta} \\
&R_{\beta\beta'} = 0, \hspace*{47mm} \beta = (a_2,q_2),\, \beta' = (a'_2,q'_2) \in\boldsymbol{\beta} \qquad \text{for} \ q_2=q'_2,
\end{split}
\end{equation}
where $p(a_1|q_1) = \sum_{a_2}p(a_1,a_2|q_1,q_2)$ and $p(a_2|q_2) = \sum_{a_1}p(a_1,a_2|q_1,q_2)$ are the marginal probabilities.\footnote{It is worth noting that we have used the assumption that the measurements are composed by orthogonal projectors to set some of the entries of the $Q$ and $R$ matrices to zero. When this assumption cannot be made, for example when we combine this with our SDP relaxations, one should replace these entries with new variables of the problem.} Note that we are suppressing the conditioning arguments $q_2$ and $q_1$, since the distributions under consideration are non-signalling. With this construction, we can specify all entries of $\tilde{\Gamma}$ except the entries of $Q_{\alpha\alpha'}$ when $q_1 \neq q'_1$, and $R_{\beta\beta'}$ when $q_2 \neq q'_2$. The matrix $\tilde{\Gamma}$ automatically satisfies the conditions \eqref{eq:NPAcondition1} and \eqref{eq:NPAcondition2} due to \eqref{eq:NPA1Gamma_QR_conditions}. The only remaining check is whether we can find a matrix of the form $\tilde{\Gamma}$ in \eqref{eq:NPA1Gamma} which is positive semidefinite by giving appropriate values to the undecided entries of $\tilde{\Gamma}$. In other words, for a given correlation $p(a_1,a_2|q_1,q_2)$, we need to decide whether we can find $\tilde{\Gamma}$ with the form \eqref{eq:NPA1Gamma} such that
\begin{align}\label{eq:NPA1condition_PosSem}
\tilde{\Gamma} \geq 0.
\end{align}
As different sets $\mathcal{S}$ provide different necessary conditions for quantum correlations, we can build a hierarchy of necessary conditions by using a sequence of independent sets $\left\{\mathcal{S}_k\right\}_k$, where $\mathcal{S}_k \subset \mathcal{S}_{k+1}$ for all $k$. Conventionally, in the NPA hierarchy, the first level is chosen as $\mathcal{S}_1 = \mathcal{S}_0\cup\{E(a_i,q_i)\}$ where $\mathcal{S}_0 = \{\id\}$, the second level is $\mathcal{S}_2=\mathcal{S}_0\cup\mathcal{S}_1\cup\{E(a_i,q_i)E(a_j,q_j)\}$, the third level is $\mathcal{S}_3=\mathcal{S}_0\cup\mathcal{S}_1\cup\mathcal{S}_2\cup\{E(a_i,q_i)E(a_j,q_j)E(a_k,q_k)\}$, and so on. Therefore, the $k$th level NPA matrix $\Gamma_k$ will have the form 
\begin{align}\label{eq:NPAGamma_k}
\left( \Gamma_k\right)_{ij} = \tr\left[ S_i^{\dagger}S_j \rho_{AB} \right]
\end{align}
with the set $\mathcal{S}_k$. In fact, the completeness of the NPA hierarchy was proven in \cite{navascues2008convergent}. That is, a correlation which satisfies the $k$-th level NPA condition for all $k\geq1$ is a quantum correlation.


\subsection{NPA as constraints on $\sdp_{n_1,n_2}(V,\pi,T)$}\label{subsec:NPA+SDP}

In this section, we provide a procedure for expressing the entries of the NPA matrix $\Gamma_k$, given in \eqref{eq:NPAGamma_k}, in terms of the optimisation variable $\rho_{(A_1Q_1T)(A_2Q_2\hat{T})^{n_1}(S\hat{S})^{n_2}}$ in $\sdp_{n_1,n_2}$. As an example, we show how to write the first level of the NPA constraint and postpone the general case to Appendix \ref{app:npa_constraints}.

We begin by recalling that the optimal winning probability of quantum correlations can be written as,
\begin{align}\label{eq:QWinP_NPA}
\omega_{Q(T)}(V,\pi) = \max_{p\in\mathcal{Q}} \sum_{q_1,q_2} \pi(q_1,q_2) \sum_{a_1,a_2} V(a_1,a_2,q_1,q_2) p(a_1,a_2|q_1,q_2),
\end{align}
where $\mathcal{Q}$ is the set of all quantum correlations. If we compare this expression with the objective function of our $\sdp_{n_1,n_2}(V,\pi,T)$, under the assumption that $A_1A_2Q_1Q_2$ is a classical subsystem in \eqref{eq:SDP+Conditions_Classical}, we can derive the relation between $p(a_1,a_2|q_1,q_2)$ and the variable $\rho_{T\hat{T}^{n_1}(S\hat{S})^{n_2}}(a_1,a_2^{n_1},q_1,q_2^{n_1})$, which is
\begin{align}\label{eq:JointPasRho}
p(a_1,a_2|q_1,q_2) = \frac{|T|^2}{\pi(q_1,q_2)} \tr\left[ \Phi_{T\hat{T}|S\hat{S}} \;\rho_{T\hat{T}S\hat{S}}(a_1,a_2,q_1,q_2) \right],
\end{align}
where $\rho_{T\hat{T}S\hat{S}}(a_1,a_2,q_1,q_2)$ is the reduced state of $\rho_{T\hat{T}^{n_1}(S\hat{S})^{n_2}}(a_1,a_2^{n_1},q_1,q_2^{n_1})$:
\begin{align}
\rho_{T\hat{T}S\hat{S}}(a_1,a_2,q_1,q_2) = \sum_{a_2^{n_1-1}, q_2^{n_1-1}} \tr_{\hat{T}^{n_1-1}(S\hat{S})^{n_2-1}}\left[ \rho_{T\hat{T}^{n_1}(S\hat{S})^{n_2}}(a_1,a_2^{n_1},q_1,q_2^{n_1}) \right].
\end{align}
The marginal probabilities for Alice are given by
\begin{equation}\label{eq:MarPasRho_A}
\begin{split}
p(a_1|q_1) &= \sum_{a_2} p(a_1,a_2|q_1,q_2) = \frac{|T|^2}{\pi(q_1,q_2)} \tr\left[ \Phi_{T\hat{T}|S\hat{S}} \; \left( \sum_{a_2} \rho_{T\hat{T}S\hat{S}}(a_1,a_2,q_1,q_2) \right) \right] \\
&= \frac{|T|}{\pi_1(q_1)} \tr\left[ \Phi_{T\hat{T}|S\hat{S}} \; \left( \id_{\hat{T}}\otimes\rho_{TS\hat{S}}(a_1,q_1)\right) \right],
\end{split}
\end{equation}
where we used the linear constraints of the program and the fact that we consider free games. Similarly, the marginal probability distribution for Bob is given by,
\begin{align}\label{eq:MarPasRho_B}
p(a_2|q_2)=\frac{|T|}{\pi_2(q_2)} \tr\left[ \Phi_{T\hat{T}|S\hat{S}} \; \left( \id_{T}\otimes\rho_{\hat{T}S\hat{S}}(a_2,q_2)\right)\right].
\end{align}
Then, using the joint and marginal probabilities we can construct the first level NPA matrix $\Gamma_1$, which is $\tilde{\Gamma}$ in \eqref{eq:NPA1Gamma} combined with the row and column corresponding to the base set $\mathcal{S}_0 = \{\id\}$. The sole entries of $\Gamma_1$ which we cannot relate to the optimisation variable are those in the sub-matrices $Q$ and $R$ with different inputs $q$; these entries become new variables of the problem. To impose the NPA constraint on our $\sdp_{n_1,n_2}(V,\pi,T)$, we can now simply add the constraint $\Gamma_1(\rho_{(A_1Q_1T)(A_2Q_2\hat{T})^{n_1}(S\hat{S})^{n_2}}) \geq 0$ to the SDP.

For any $k \geq 2$, the $k$-th level of the NPA hierarchy involves a matrix $\Gamma_k$ where some of the entries are of the form $p(a_1^m,a_2^{\ell}|q_1^m,q_2^{\ell})$ where $m$ and $\ell$ are such that $m + \ell \leq 2k$. Among these entries, the ones we can express in terms of the optimization variable are those with $m=1$ (since we are not extending the sub-system $A_1Q_1T$) and $\ell \leq k$ (as long as $k \leq n_1$). The relation between these entries is
\begin{align}
p(a_1, a_2^{\ell}|q_1, q_2^{\ell})=\frac{|T|^{\ell+1}}{\pi_1(q_1) \, \prod_{j=1}^{\ell} \pi_2(q_2^{(j)})}
\Trt{\left( P_{\text{cyclic}}^{\ell+1} \right)^{T_{S\hat{S}}} \,\left( \bigotimes_{i=2}^{\ell} \id_{T_i} \right) \otimes\rho_{T\hat{T}^{\ell}S\hat{S}}(a_1, a_2^{\ell}, q_1^{m}, q_2^{\ell})} \quad \forall\ell \ : \ \ell \leq k,
\end{align}
where $P_{\text{cyclic}}^{n}$ is a unitary operator acting on $n$ copies of the subsystem $T\hat{T}$, implementing a cyclic permutation over these copies described by the following action over tuples, $\left( 1, 2 , 3 , \ldots, n \right) \rightarrow \left( 2, 3, \ldots, n, 1 \right)$. The above equation is explicitly derived in Appendix \ref{app:npa_constraints}. Then, to impose the $k$-th level NPA constraint, we can simply add the positive semi-definite condition $\Gamma_k \geq 0$ to $\sdp_{n_1,n_2}(V,\pi,T)$.


\section{Conclusions}\label{sec:discussion}

We explored the characterisation of quantum correlation with dimension constraints. More specifically, we gave a converging hierarchy of SDP relaxations for the set of quantum correlations with fixed dimension and provided analytical bounds on the convergence rate by means of multipartite quantum de Finetti theorems with linear constraints. We conclude with a few remarks about our results.

\begin{itemize}
\item \textbf{Quantum information applications:} Our approach can be applied to other problems of interest. The most obvious example would be the multipartite extension of two-player games.
In addition, as pointed out in \cite{berta2018semidefinite}, finding the maximum success probability for transmitting a message under a given noisy channel can be formalised in terms of a quantum separability problem as well. Thus, our techniques can be applied to the multipartite generalisation of this quantum error correction problem, and the particularly promising point is the possibility to add, even in this setting, NPA-type constraints.

\item \textbf{Dimension reduction using symmetry:} Compared to a given level of the NPA hierarchy, the corresponding level of $\sdp_{n_1,n_2}(V,\pi,T)$ has an optimization variable with a bigger size, and a higher number of constraints.
One way to improve this aspect is to use the symmetry embedded in the semidefinite program. Our SDP relaxations have a few symmetries: (i) $\Phi_{T\hat{T}S\hat{S}}$ in the objective function is invariant under any local unitary transformation, and (ii) depending on the game, the rule matrix $V(a_1,a_2,q_1,q_2)$ is also invariant under some actions. We refer to \cite{rosset2018symdpoly} for an example of a symmetry-finding program that can potentially reduce the program size in numerical implementations.

\item \textbf{Improving on $|T|$-dependence:} When we set $|T|=1$, our relaxations become exactly equivalent to the linear program relaxations given in \cite{brandao2017quantum}
\begin{align}
\operatorname{lp}_n(V,\pi,1) := \max_{p} \sum_{a,b,x,y} V(a,b,x,y) \,p(a,b,x,y)
\end{align}
\begin{equation}
\begin{split}
\text{s.t.} \;\; & p(a,b^{n},x,y^{n}) \geq 0 \quad \forall\, a,b^{n},x,y^{n} \, , \sum_{a,b^{n},x,y^{n}} p(a,b^{n},x,y^{n}) = 1,\, \text{perm. inv. on } b^{n}, y^{n} \text{ wrt } a^{n} \text{ and } x^{n}\\
&\sum_{a} p(a,b^{n},x,y^{n}) = \pi_1(x) p(b^{n},y^{n}),\,\sum_{b} p(a,b^{n},x,y^{n}) = \pi_2(y) p(a,b^{n-1},x,y^{n-1}) .
\end{split}
\end{equation}
As there are also matching strong hardness results \cite{aaronson2014multiple}, the complexity derived in \eqref{eq:SizeOfSDPforEpsilon} seems near optimum with respect to $|Q|$ and $|A|$. However, there could very well exist more efficient approximation algorithms in terms of the $|T|$-dependence. One might for example explore $\epsilon$-net based methods, as in \cite{shi2015epsilon,harrow15}.

\item \textbf{General games:} In the main text, we assume that the given probability distribution of the questions for Alice and Bob is not correlated, i.e.~$\pi(q_1,q_2)=\pi_1(q_1)\pi_2(q_2)$, which corresponds to free games. However, we can also derive lower bounds on the computational complexity for general games, when $\pi(q_1,q_2)\neq\pi_1(q_1)\pi_2(q_2)$. The key idea is to absorb $\pi(q_1,q_2)$ into the rule matrix $V(a_1,a_2,q_1,q_2)$ instead of $E_{A_1Q_1T}$ and $D_{A_2Q_2\hat{T}}$ when we convert the problem into the quantum separability problem. We refer to Appendix~\ref{app:GeneralGames} for the derivation, where we find that in contrast to free games, our bounds become exponential in terms of $|Q|$.
\end{itemize}




\begin{appendices}

\section{Distortion relative to quantum side information}\label{app:distortion_sideI}

Here, we prove the second part of Lemma~\ref{lem:DataHiding} which states that for a traceless Hermitian operator $\gamma_{AB}$ on $\hil_{AB}$, there exists a measurement $\mathcal{M}_B$ on $\mathcal{H}_B$ with at most $|B|^6$ outcomes such that  $\norm{\left(\mathcal{I}_A\otimes\mathcal{M}_B\right) \left(\gamma_{AB}\right)}_1 \geq \frac{1}{2|B|} \norm{\gamma_{AB}}_1.$ The proof is inspired from \cite[Theorem 16]{lami2018ultimate}.

\begin{proof}[Proof of Lemma~\ref{lem:DataHiding}]
Let us start with the maximally entangled state
\begin{align}
\text{$\Phi_{A'|B'} =\ket{\Phi}\!\!\bra{\Phi}_{A'|B'}$ where $\ket{\Phi}_{A'|B'} = \frac{1}{|A'||B'|} \sum_i \ket{i}_{A'}\ket{i}_{B'}$, and $|A'|=|B'|$.}
\end{align}
We can create a separable state $\omega_{A'B'}$ by mixing $\Phi_{A'|B'}$ with another separable state $\sigma_{A'B'} = \frac{\id_{A'B'}-\Phi_{A'|B'}}{|B'|^2-1}$ as
\begin{align}
\omega_{A'B'} = \frac{1}{|B'|}\Phi_{A'B'} + \frac{|B'|-1}{|B'|}\sigma_{A'B'} \; \in \;\operatorname{SEP(A':B')},
\end{align}
where $\operatorname{SEP(A':B')}$ denotes the set of separable states with respect to the bipartition $A'|B'$. Hence, we can write $\omega_{A'B'} = \sum_{i} p_i \omega_{A'}^i\otimes\omega_{B'}^i$ for some probability distribution $\{p_i\}_i$ and states $\{\omega_{A'}^i\}_i$ and $\{\omega_{B'}^i\}_i$ with at most $|A'B'|^2$ elements \cite{Horodecki97}. Next, we define a measurement $\mathcal{M}_B$ with operators $\{\tilde{M}_B(i,k)\}_{i,k}$, as well as a set of measurements $\{\mathcal{M}_A^{i,k}\}_{i,k}$ with operators $\{\tilde{M}_A^{i,k}(j)\}_j$ as
\begin{align}\label{eq:DataHiding_M_B}
\tilde{M}_B(i,k) &= \tr_{B'}\left[ p_i U_B^{\dagger}(k) \sqrt{\omega_{B'}^i}\Phi_{BB'}\sqrt{\omega_{B'}^i}U_B(k) \right]\quad\text{and}\\
\tilde{M}_A^{i,k}(j) &= \tr_{A'}\left[ \sqrt{\omega_{A'}^i}U_{A'}^{\dagger}(k)N_{AA'}(j) U_{A'}(k) \sqrt{\omega_{A'}^i} \right],\label{eq:DataHiding_M_A}
\end{align}
where $U(k)$ denote generalised Pauli operators, $\omega^i_{A'}$ and $\omega^i_{B'}$ are the elements of the decomposition of $\omega_{A'B'}$, and $\{N_{AA'}(j)\}_j$ are measurement operators defined later. We can check that both definitions indeed correspond to valid measurements:
\begin{align}
\text{$\sum_{i,k}\tilde{M}_B(i,k) = \id_B$, $\sum_j \tilde{M}_A^{i,k}(j) = \id_A$, and $\tilde{M}_B(i,k), \tilde{M}_A^{i,k}(j) \geq 0 \;\; \forall i,k,j$.}
\end{align}
The goal is to show that $\mathcal{M}_B$ defined in \eqref{eq:DataHiding_M_B} gives rise to \eqref{eq:DataHiding_Side}. Before showing that, however, it is helpful to understand where these measurements came from. They are related to the quantum teleportation protocol from \cite{bennett1993teleporting}. Without loss of generality, let us assume that $|A|\geq|B|=|A'|=|B'|$. Then, the quantum teleportation protocol from $B$ to $A$ is a quantum channel defined as \cite{bennett1993teleporting}
\begin{align}
\tau_{ABA'B'\rightarrow AA'}(\cdot) = \sum_{k=1}^{|B|^2} U_{A'}(k)\tr_{BB'}\left[ (\cdot) \left( \id_{AA'}\otimes U_B(k)\Phi_{BB'}U_B^{\dagger}(k)\right) \right] U_{A'}^{\dagger}(k).
\end{align}
For a traceless Hermitian operator $\gamma_{AB}$, we then consider
\begin{align}
\norm{\tau_{ABA'B'\rightarrow AA'}\left(\gamma_{AB}\otimes\omega_{A'B'}\right)}_1 = \sum_j \left\vert \tr\left[ N_{AA'}(j) \left(\tau_{ABA'B'\rightarrow AA'}\left(\gamma_{AB}\otimes\omega_{A'B'}\right)\right) \right] \right\vert,
\end{align}
where we used the expression $\norm{X_A}_1 = \max_{\{M_A(i)\}_i} \sum_i \left\vert \tr\left[M_A(i)X_A\right]\right\vert$ for the trace norm with corresponding arg max $\{N_{AA'}(j)\}_j$ to be used in \eqref{eq:DataHiding_M_A}. We have
\begin{align}
&\norm{\tau_{ABA'B'\rightarrow AA'}\left(\gamma_{AB}\otimes\omega_{A'B'}\right)}_1 \\
&= \sum_j \left\vert \sum_k \tr\left[ N_{AA'}(j) \left( U_{A'}(k)\tr_{BB'}\left[ \left(\gamma_{AB}\otimes\omega_{A'B'}\right) \left( \id_{AA'}\otimes U_B(k)\Phi_{BB'}U_B^{\dagger}(k)\right) \right] U_{A'}^{\dagger}(k) \right) \right]  \right\vert\\
&= \sum_j \left\vert \sum_k \tr\left[ \left( U_{A'}^{\dagger}(k)N_{AA'}(j)U_{A'}(k) \otimes \id_{BB'}  \right) \left( \left(\gamma_{AB}\otimes\omega_{A'B'}\right) \left( \id_{AA'}\otimes U_B(k)\Phi_{BB'}U_B^{\dagger}(k)\right) \right) \right]  \right\vert \\
&= \sum_j \left\vert \sum_k \tr\left[ \left( U_{A'}^{\dagger}(k)N_{AA'}(j)U_{A'}(k) \otimes U^{\dagger}_B(k)\Phi_{BB'}U_B(k)  \right)  \left(\gamma_{AB}\otimes\left(\sum_{i} p_i \omega_{A'}^i\otimes\omega_{B'}^i\right)\right)  \right]  \right\vert \\
&= \sum_j \left\vert \sum_{i,k} \tr\left[ \left(\left( \sqrt{\omega^i_{A'}}U_{A'}^{\dagger}(k)N_{AA'}(j)U_{A'}(k)\sqrt{\omega^i_{A'}}\right) \otimes \left( p_iU^{\dagger}_B(k)\sqrt{\omega^i_{B'}}\Phi_{BB'}\sqrt{\omega^i_{B'}}U_B(k)  \right)\right)  \left(\gamma_{AB}\otimes \id_{A'B'}\right)  \right]  \right\vert \\
&= \sum_j \left\vert \sum_{i,k} \tr\left[ \gamma_{AB} \left(\tilde{M}_A^{i,k}(j) \otimes \tilde{M}_B(i,k) \right) \right] \right\vert. \label{eq:Data_Hiding_eq1}
\end{align}
The measurement $\mathcal{M}_B$ defined in \eqref{eq:DataHiding_M_B} now gives rise to
\begin{align}
&\norm{\left(\mathcal{I}_A\otimes \mathcal{M}_B\right) \left(\gamma_{AB}\right)}_1\\
&= \;\sum_{i,k} \norm{\tr_B\left[ \left(\id_A\otimes\tilde{M}_B(i,k)\right) \gamma_{AB}\right]}_1\\
&= \;\sum_{i,k} \max_{\{M_A^{i,k}(j)\}_j} \sum_j \left\vert \tr\left[ \left( M_A^{i,k}(j) \otimes \tilde{M}_B(i,k)\right)\gamma_{AB} \right] \right\vert\\
&\geq \;\sum_{i,k} \sum_j \left\vert \tr\left[ \left( \tilde{M}_A^{i,k}(j) \otimes \tilde{M}_B(i,k)\right)\gamma_{AB} \right] \right\vert\\
&\geq \;\sum_j \left\vert \sum_{i,k} \tr\left[ \left( \tilde{M}_A^{i,k}(j) \otimes \tilde{M}_B(i,k)\right)\gamma_{AB} \right] \right\vert \label{eq:DataHiding_eq2}\\
&= \norm{\tau_{ABA'B'\rightarrow AA'}\left(\gamma_{AB}\otimes\omega_{A'B'}\right)}_1 \hspace*{20mm} (\text{by }\; \eqref{eq:Data_Hiding_eq1}) \\
&= \norm{\tau_{ABA'B'\rightarrow AA'}\left(\gamma_{AB}\otimes\left( \frac{1}{|B|}\Phi_{A'B'} + \frac{|B|-1}{|B|}\sigma_{A'B'} \right)\right)}_1 \\
&= \norm{\frac{1}{|B|}\tau_{ABA'B'\rightarrow AA'}\left(\gamma_{AB}\otimes \Phi_{A'B'} \right) + \frac{|B|-1}{|B|}\tau_{ABA'B'\rightarrow AA'}\left(\gamma_{AB}\otimes\sigma_{A'B'} \right)}_1 \\
&\geq \frac{1}{|B|} \norm{\tau_{ABA'B'\rightarrow AA'}\left(\gamma_{AB}\otimes \Phi_{A'B'} \right)}_1 - \norm{\tau_{ABA'B'\rightarrow AA'}\left(\gamma_{AB}\otimes\left(\frac{|B|-1}{|B|}\sigma_{A'B'}\right) \right)}_1, \label{eq:DataHiding_eq3}
\end{align}
where in the third line we substituted the measurement operators $\{\tilde{M}_A^{i,k}(j)\}_j$ instead of the maximisation, and in the last line we used the reverse triangular inequality.
Note that the first term in the last line is equivalent to $\norm{\gamma_{AB}}_1$ since $\Phi_{A'B'}$ is the maximally entangled state. Let us investigate the second term more closely. We have the chain of elementary implications
\begin{align}
&\frac{|B|-1}{|B|}\sigma_{A'B'} \leq \frac{|B|-1}{|B|}\sigma_{A'B'} + \frac{1}{|B|}\Phi_{A'B'} = \omega_{A'B'} \\
&\Rightarrow \; \gamma_{AB}\otimes\frac{|B|-1}{|B|}\sigma_{A'B'} \leq \gamma_{AB}\otimes\omega_{A'B'} \\
&\Rightarrow \;\norm{\tau_{ABA'B'\rightarrow AA'}\left(\gamma_{AB}\otimes\left(\frac{|B|-1}{|B|}\sigma_{A'B'}\right) \right)}_1 \leq \norm{\tau_{ABA'B'\rightarrow AA'}\left(\gamma_{AB}\otimes\omega_{A'B'} \right)}_1 \\
&\Rightarrow \;\norm{\tau_{ABA'B'\rightarrow AA'}\left(\gamma_{AB}\otimes\left(\frac{|B|-1}{|B|}\sigma_{A'B'}\right) \right)}_1 \leq \sum_j \left\vert \sum_{i,k} \tr\left[ \gamma_{AB} \left(\tilde{M}_A^{i,k}(j) \otimes \tilde{M}_B(i,k) \right) \right] \right\vert \qquad (\text{by }\; \eqref{eq:Data_Hiding_eq1})\\
&\Rightarrow \;\norm{\tau_{ABA'B'\rightarrow AA'}\left(\gamma_{AB}\otimes\left(\frac{|B|-1}{|B|}\sigma_{A'B'}\right) \right)}_1 \leq \norm{\left(\mathcal{I}_A\otimes \mathcal{M}_B\right) \left(\gamma_{AB}\right)}_1 \qquad (\text{by }\; \eqref{eq:DataHiding_eq2})
\end{align}
and substituting this into \eqref{eq:DataHiding_eq3} yields the claim
\begin{align}
\norm{\left(\mathcal{I}_A\otimes \mathcal{M}_B\right) \left(\gamma_{AB}\right)}_1 \geq \frac{1}{|B|}\norm{\gamma_{AB}}_1 - \norm{\left(\mathcal{I}_A\otimes \mathcal{M}_B\right) \left(\gamma_{AB}\right)}_1.
\end{align}
It remains to quantify the number of measurement outcomes of $\mathcal{M}_B$ with operators $\{\tilde{M}_B(i,k)\}_{i,k}$ defined in \eqref{eq:DataHiding_M_A}. The index $i$ came from the number of elements in the separable state $\omega_{A'B'}$, which is at most $|A'B'|^2 = |B|^4$, and the index $k$ came from the number of generalised Pauli operators, which is $|B|^2$. Therefore, the number of outcomes is at most $|B|^6$.
\end{proof}


\section{Tri- and multipartite quantum de Finetti theorems with linear constraints}\label{app:MulQdeFTwithLC}

Here, we prove multipartite quantum de Finetti theorems with general linear constraints. The proof is similar as for the partial trace constraints in Section~\ref{subsec:TriQdeF}. Let us recall some useful relations of the multipartite conditional quantum mutual information:
\begin{itemize}
\item Definition
\begin{align}
I(A_1:A_2:\cdots :A_k|B)_{\rho}:= \sum_{i=1}^k S(A_iB) - S(A_1A_2\cdots A_kB)-S(B)
\end{align}
\item Reduction multipartite to bipartite
\begin{align}\label{eq:MulttoBip2}
I(A_1:\cdots :A_k|R)_{\rho} = I(A_1:A_2|R) + I(A_1A_2:A_3|R) + \cdots  + I(A_1\cdots A_{k-1}:A_k|R)
\end{align}
\item Chain rule
\begin{align}\label{eq:ChainRule2}
I(A:BX|Z) = I(A:X|Z) + I(A:B|XZ).
\end{align}
\end{itemize}

Let us first find general bounds on the conditional quantum mutual information and multipartite conditional quantum mutual information, which are the general versions of Lemma~\ref{lem:BoundCondMIwithLC} and Lemma~\ref{lem:BoundTripCondMIwithLC}.

\begin{lemma}\label{lem:CMIbound}
For a quantum state $\rho_{AZX}$ classical on the $Z$- and $X$-system, it holds that
\begin{align}
I(A:Z|X) \leq \log|A|.
\end{align}
\end{lemma}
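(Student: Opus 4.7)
My plan is to rewrite the conditional mutual information in a form that exposes two non-negative quantities and bound them separately.

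First, I would expand the definition using $I(A:Z|X)_\rho = S(AX) + S(ZX) - S(AZX) - S(X) = S(A|X)_\rho - S(A|ZX)_\rho$. Next, I would argue that because $Z$ and $X$ are classical, the state has the decomposition $\rho_{AZX} = \sum_{z,x} p(z,x)\, |z\rangle\!\langle z|_Z \otimes |x\rangle\!\langle x|_X \otimes \rho_A^{z,x}$, which implies that $S(A|ZX)_\rho = \sum_{z,x} p(z,x)\, S(\rho_A^{z,x}) \geq 0$. Therefore $I(A:Z|X)_\rho \leq S(A|X)_\rho$.

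The final step is to bound $S(A|X)_\rho$ by $\log|A|$. This follows from the subadditivity inequality $S(AX) \leq S(A) + S(X)$, which rearranges to $S(A|X)_\rho = S(AX) - S(X) \leq S(A) \leq \log|A|$.

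There is no real obstacle here; the only thing to be careful about is that $S(A|ZX) \geq 0$ needs the classicality of $Z$ and $X$ (in general conditional entropy can be negative for entangled states), but this is immediate from the block-diagonal form above. Combining the two bounds yields the claim.
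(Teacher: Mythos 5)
Your proof is correct and is essentially the paper's argument: the paper conditions on the classical $X$ and invokes $I(A\!:\!Z)\leq\log|A|$ for classical $Z$, which is exactly your combination of $S(A|ZX)\geq 0$ (from the block-diagonal form) with $S(A|X)\leq S(A)\leq\log|A|$. No gap; your version just writes out the entropic steps the paper leaves implicit.
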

\begin{proof}
For $X$ classical, the conditional quantum mutual information can be written as $I(A:Z|X) = \sum_x p_x I(A:Z)_{\rho_{AZ|x}}$. We know that $I(A:Z)\leq\log|A|$ when $Z$ is classical, and hence
\begin{align}
I(A:Z|X) = \sum_x p_x I(A:Z)_{\rho_{AZ|x}}\leq\log|A|.
\end{align}
\end{proof}

\begin{lemma}\label{lem:mulCMIbound}
Consider a quantum state $\rho_{AZ_1^{n_1}W_1^{n_2}}$ classical on the $Z$- and $W$-systems. Then, there exist $0\leq\bar{m}<n_1$ and $0\leq\bar{l}<n_2$ such that
\begin{align}
I(A:Z_{\bar{m}+1}:W_{\bar{l}+1}|Z_1^{\bar{m}}W_1^{\bar{l}}) \leq \frac{\log |A|}{n_1} + \frac{\log |A| + \log |Z|}{n_2}.
\end{align}
Moreover, by Pinsker's inequality \eqref{eq:PinskerInq} this implies that
\begin{align}
\mathbb{E}_{z_1^{\bar{m}}w_1^{\bar{l}}}\left\{ \norm{ \rho_{AZ_{\bar{m}+1}W_{\bar{l}+1}|z_1^{\bar{m}}w_1^{\bar{l}}} - \rho_{A|z_1^{\bar{m}}w_1^{\bar{l}}}\otimes\rho_{Z_{\bar{m}+1}|z_1^{\bar{m}}w_1^{\bar{l}}}\otimes\rho_{W_{\bar{l}+1}|z_1^{\bar{m}}w_1^{\bar{l}}} }_1^2 \right\}\notag
\leq 2\ln2 \left(\frac{\log |A|}{n_1} + \frac{\log |A| + \log |Z|}{n_2}\right).
\end{align}
\end{lemma}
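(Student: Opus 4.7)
The plan is to mirror the strategy of Lemma~\ref{lem:BoundTripCondMIwithLC}, but since here there is no partial-trace side constraint on $A$ and instead both $Z$ and $W$ are classical, I can invoke the sharper single-factor bound of Lemma~\ref{lem:CMIbound} ($\log|A|$) in place of the linear-constraint bound of Lemma~\ref{lem:BoundCondMIwithLC} ($2\log|A|$). The reduction (\ref{eq:MulttoBip2}) from the multipartite to bipartite conditional mutual information, the chain rule (\ref{eq:ChainRule2}), and an averaging argument will then deliver the first inequality, after which Pinsker (\ref{eq:PinskerInq}) converts the mutual information bound into the claimed trace-distance bound.

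Concretely, I would first rewrite, for every $m\in[0,n_1)$ and $\ell\in[0,n_2)$,
\begin{align}
I(A:Z_{m+1}:W_{\ell+1}|Z_1^mW_1^\ell)=I(A:Z_{m+1}|Z_1^mW_1^\ell)+I(AZ_{m+1}:W_{\ell+1}|Z_1^mW_1^\ell),
\end{align}
using (\ref{eq:MulttoBip2}). For the first summand, the chain rule telescopes
\begin{align}
\sum_{m=0}^{n_1-1}I(A:Z_{m+1}|Z_1^mW_1^\ell)=I(A:Z_1^{n_1}|W_1^\ell)\leq\log|A|,
\end{align}
where the inequality is Lemma~\ref{lem:CMIbound} applied with $Z_1^{n_1}$ in the role of the classical $Z$-system and $W_1^\ell$ in the role of the classical conditioning $X$-system; summing over $\ell$ contributes at most $n_2\log|A|$. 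Analogously, for fixed $m$,
\begin{align}
\sum_{\ell=0}^{n_2-1}I(AZ_{m+1}:W_{\ell+1}|Z_1^mW_1^\ell)=I(AZ_{m+1}:W_1^{n_2}|Z_1^m)\leq\log|A|+\log|Z|,
\end{align}
again by Lemma~\ref{lem:CMIbound} (now with $AZ_{m+1}$ in the quantum slot), and summing over $m$ contributes at most $n_1(\log|A|+\log|Z|)$. Adding the two double sums, dividing by $n_1n_2$, and invoking the pigeonhole principle produces indices $\bar m,\bar\ell$ realizing
\begin{align}
I(A:Z_{\bar m+1}:W_{\bar\ell+1}|Z_1^{\bar m}W_1^{\bar\ell})\leq\frac{\log|A|}{n_1}+\frac{\log|A|+\log|Z|}{n_2}.
\end{align}

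For the second inequality, since $Z_1^{\bar m}W_1^{\bar\ell}$ is classical I would write the conditional mutual information as $\mathbb{E}_{z_1^{\bar m}w_1^{\bar\ell}}[I(A:Z_{\bar m+1}:W_{\bar\ell+1})_{\rho_{|z_1^{\bar m}w_1^{\bar\ell}}}]$, use (\ref{eq:MutInfRelEnt}) to recognize each term as a relative entropy to the tensor product of its marginals, and apply Pinsker's inequality (\ref{eq:PinskerInq}) inside the expectation to pick up the factor $2\ln 2$. I do not foresee a real obstacle: every ingredient is already assembled in the excerpt and the argument is a direct, slightly tightened transcription of the proof of Lemma~\ref{lem:BoundTripCondMIwithLC} with the constraint-free classical bound substituted in.
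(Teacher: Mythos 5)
Your proposal is correct and follows the paper's own proof essentially verbatim: the same reduction \eqref{eq:MulttoBip2} of the tripartite conditional mutual information into two bipartite terms, the same telescoping via the chain rule, the same application of Lemma~\ref{lem:CMIbound} (with $Z_1^{n_1}$ resp.\ $W_1^{n_2}$ as the classical slot and $A$ resp.\ $AZ_{m+1}$ as the quantum slot), the same double-sum averaging and pigeonhole selection of $\bar m,\bar\ell$, and the same conditioning-plus-Pinsker step for the trace-norm consequence. No gaps.
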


\begin{proof}
The multipartite quantum mutual information $I(A:Z_{\bar{m}+1}:W_{\bar{l}+1}|Z_1^{\bar{m}}W_1^{\bar{l}})$ can be expressed in terms of bipartite ones using \eqref{eq:MulttoBip2}:
\begin{align}
I(A:Z_{\bar{m}+1}:W_{\bar{l}+1}|Z_1^{\bar{m}}W_1^{\bar{l}}) = I(A:Z_{\bar{m}+1}|Z_1^{\bar{m}}W_1^{\bar{l}}) + I(AZ_{\bar{m}+1}:W_{\bar{l}+1}|Z_1^{\bar{m}}W_1^{\bar{l}}).
\end{align}
The two terms in RHS are the bipartite mutual information between quantum and classical systems, and hence we can find an upper bound for each term using the chain rule in \eqref{eq:ChainRule2} and Lemma~\ref{lem:CMIbound}.

\textit{First term:} For any $l$, it holds that
\begin{align}
I(A:Z_1^{n_1}|W_1^{l}) = \sum_{m=0}^{n_1-1} I(A:Z_{m+1}|Z_1^mW_1^l) \leq \log |A|,
\end{align}
where the first equality is the chain rule in \eqref{eq:ChainRule2} and the second inequality is by applying Lemma~\ref{lem:CMIbound} to $I(A:Z_1^{n_1}|W_1^{l})$. Then, summing over all $l$ gives us
\begin{align}\label{eq:firstterm}
\sum_{m=0}^{n_1-1} \sum_{l=0}^{n_2-1} I(A:Z_{m+1}|Z_1^mW_1^l) \leq n_2\log |A|.
\end{align}

\textit{Second term:} Using the same argument, for any $m$, it holds that
\begin{align}
I(AZ_{m+1}:W_1^{n_2}|Z_1^{m}) = \sum_{l=0}^{n_2-1} I(AZ_{m+1}:W_{l+1}|Z_1^mW_1^l) \leq \log |AZ_{m+1}|,
\end{align}
and summing over $m$ gives us
\begin{align}\label{eq:secondterm}
\sum_{m=0}^{n_1-1} \sum_{l=0}^{n_2-1} I(AZ_{m+1}:W_{l+1}|Z_1^mW_1^l) \leq n_1\left(\log|A| + \log|Z_1| \right).
\end{align}
Combining \eqref{eq:firstterm} and \eqref{eq:secondterm} gives
\begin{equation}
\begin{split}
n_2\log|A| + n_1\left(\log|A| + \log|Z_1| \right) &\geq \sum_{m=0}^{n_1-1} \sum_{l=0}^{n_2-1} \left[ I(A:Z_{m+1}|Z_1^mW_1^l) + I(AZ_{m+1}:W_{l+1}|Z_1^mW_1^l) \right]\\
&\geq  n_1n_2\left[ I(A:Z_{\bar{m}+1}|Z_1^{\bar{m}}W_1^{\bar{l}}) + I(AZ_{\bar{m}+1}:W_{\bar{l}+1}|Z_1^{\bar{m}}W_1^{\bar{l}}) \right],
\end{split}
\end{equation}
where $\bar{m}$ and $\bar{l}$ are the indices of the smallest element in the sum. Dividing both sides by $n_1n_2$ gives us the desired relation
\begin{equation}
\begin{split}
I(A:Z_{\bar{m}+1}:W_{\bar{l}+1}|Z_1^{\bar{m}}W_1^{\bar{l}}) &= I(A:Z_{\bar{m}+1}|Z_1^{\bar{m}}W_1^{\bar{l}}) + I(AZ_{\bar{m}+1}:W_{\bar{l}+1}|Z_1^{\bar{m}}W_1^{\bar{l}}) \\
&\leq \frac{\log|A|}{n_1} + \frac{\log|A| + \log|Z_1|}{n_2}.
\end{split}
\end{equation}
\end{proof}

It is straightforward to extend Lemma~\ref{lem:mulCMIbound} to the general k-partite case. Now, let us prove the multipartite quantum de Finetti theorem with general linear constraints. We just present the tripartite version in this appendix, but the same proof works for any multipartite case if we have the multipartite generalisation of Lemma~\ref{lem:mulCMIbound}.

\begin{theorem}
Let $\rho_{AB^{n_1}C^{n_2}}$ be a quantum state on $\mathcal{H}_A\otimes\mathcal{H}_{B^{n_1}}\otimes\mathcal{H}_{C^{n_2}}$ which is invariant under permutations on $\mathcal{H}_{B^{n_1}}$ and $\mathcal{H}_{C^{n_2}}$ such that
\begin{equation}
\begin{split}
(\mathcal{E}_{A\rightarrow\tilde{A}}\otimes\mathcal{I}_{B^{n_1}C^{n_2}}) \left( \rho_{AB^{n_1}C^{n_2}} \right) = \mathcal{X}_{\tilde{A}} \otimes\rho_{B^{n_1}C^{n_2}} \quad & \quad \text{linear constraint on A} \\
(\Lambda_{B\rightarrow\tilde{B}}\otimes\mathcal{I}_{C^{n_2}}) \left( \rho_{B^{n_1}C^{n_2}} \right) = \mathcal{Y}_{\tilde{B}} \otimes \rho_{B^{n_1-1}C^{n_2}} \quad & \quad \text{linear constraint on B}\\
(\mathcal{I}_{B^{n_1}}\otimes\Gamma_{C\rightarrow\tilde{C}}) \left( \rho_{B^{n_1}C^{n_2}} \right) = \mathcal{Z}_{\tilde{C}} \otimes \rho_{B^{n_1}C^{n_2-1}} \quad & \quad \text{linear constraint on C}.
\end{split}
\end{equation}
for some linear maps $\mathcal{E}_{A\rightarrow\tilde{A}}$, $\Lambda_{B\rightarrow \tilde{B}}$ and $\Gamma_{C\rightarrow \tilde{C}}$ and operators $\mathcal{X}_{\tilde{A}}$, $\mathcal{Y}_{\tilde{B}}$ and $\mathcal{Z}_{\tilde{C}}$. Then, there exist a probability distribution $\{p_i\}_{i\in I}$ and sets of quantum states $\{\sigma_A^i\}_{i\in I}$, $\{\omega_B^i\}_{i\in I}$ and $\{\tau_C^i\}_{i\in I}$ such that
\begin{align}\label{eq:TriQdFTbound}
\norm{\rho_{ABC} - \sum_{i\in I} p_i \sigma_A^i\otimes\omega_B^i\otimes\tau_C^i}_1 \leq &\; \min\left\{18^{3/2}\sqrt{|ABC|},4|BC|\right\}\sqrt{2\ln2}\left(\sqrt{\frac{\log|A|}{n_1} + \frac{\log|A| + 8\log|B|}{n_2}}\right)
\end{align}
with $\mathcal{E}_{A\rightarrow\tilde{A}} \left( \sigma_A^i \right) = \mathcal{X}_{\tilde{A}}$, $\Lambda_{B\rightarrow\tilde{B}} \left(\omega_B^i\right) = \mathcal{Y}_{\tilde{B}}$, and $\Gamma_{C\rightarrow\tilde{C}}\left(\tau_C^i\right) = \mathcal{Z}_{\tilde{C}}$ for all $i\in I$.
\end{theorem}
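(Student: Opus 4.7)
The argument follows the template of Theorem~\ref{thm:TriQdeFTwithPTLC}, with two substantive changes: the partial-trace-specific mutual-information bound Lemma~\ref{lem:BoundTripCondMIwithLC} is replaced by the generic estimate Lemma~\ref{lem:mulCMIbound}, and the verification that the conditional states obey the linear constraints must be carried out for arbitrary $\mathcal{E},\Lambda,\Gamma$ rather than partial traces.

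First I would measure the $n_1$ copies of $B$ and the $n_2$ copies of $C$ using the minimal-distortion measurements supplied by Lemma~\ref{lem:DataHiding}, producing classical registers $Y^{n_1}$ and $W^{n_2}$. Because $\mathcal{E}_{A\to\tilde{A}}$ acts only on $A$, it commutes with these measurements and the hypothesis on $A$ is inherited by the quantum-classical state, $\mathcal{E}_A(\rho_{AY^{n_1}W^{n_2}}) = \mathcal{X}_{\tilde{A}}\otimes\rho_{Y^{n_1}W^{n_2}}$; permutation invariance on $Y^{n_1}$ and $W^{n_2}$ is likewise preserved. Lemma~\ref{lem:mulCMIbound}, which crucially does not use any linear constraint and only requires classical side information, then furnishes indices $\bar{m}\in[0,n_1)$ and $\bar{\ell}\in[0,n_2)$ such that the averaged squared trace distance between $\rho_{AY_{\bar{m}+1}W_{\bar{\ell}+1}|y^{\bar{m}}w^{\bar{\ell}}}$ and its tripartite conditional product is bounded by $2\ln 2\,\big(\log|A|/n_1+(\log|A|+\log|Y|)/n_2\big)$, where $|Y|\leq|B|^8$.

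Writing $\gamma_{ABC}=\rho_{AB_{\bar{m}+1}C_{\bar{\ell}+1}|y^{\bar{m}}w^{\bar{\ell}}}-\rho_{A|\cdots}\otimes\rho_{B_{\bar{m}+1}|\cdots}\otimes\rho_{C_{\bar{\ell}+1}|\cdots}$, the quantity inside the Pinsker bound equals $(\mathcal{I}_A\otimes\mathcal{M}_B\otimes\mathcal{M}_C)(\gamma_{ABC})$. I would then pull back to $\|\gamma_{ABC}\|_1$ via Lemma~\ref{lem:DataHiding} --- either in one step using its first part (cost $18^{3/2}\sqrt{|ABC|}$) or by two successive applications of its second part on $C$ and then $B$ (cost $4|BC|$) --- keeping whichever is tighter. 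Combining the resulting pointwise bound with the triangle inequality and concavity of $\sqrt{\cdot}$, and using permutation invariance of $\rho$ on $B^{n_1}$ and $C^{n_2}$ to relabel positions, yields the stated bound with the separable state $\sum_{y^{\bar{m}}w^{\bar{\ell}}}p_{y^{\bar{m}}w^{\bar{\ell}}}\,\rho_{A|\cdots}\otimes\rho_{B_{\bar{m}+1}|\cdots}\otimes\rho_{C_{\bar{\ell}+1}|\cdots}$.

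The step I expect to be most delicate is checking that the three linear constraints descend to the conditional marginals. For $\sigma_A^i=\rho_{A|y^{\bar{m}}w^{\bar{\ell}}}$ this is a direct replication of the end of the proof of Theorem~\ref{thm:TriQdeFTwithPTLC}, with $\tr_A$ replaced throughout by $\mathcal{E}_A$, since the latter commutes with the measurement effects on $B$ and $C$. For $\omega_B^i=\rho_{B_{\bar{m}+1}|y^{\bar{m}}w^{\bar{\ell}}}$, I would use permutation invariance to apply $\Lambda_{B\to\tilde{B}}$ at position $\bar{m}+1$ rather than position $1$, commute $\Lambda$ past the partial trace over $A$ and over the unmeasured copies of $B$ and $C$, and then substitute the hypothesis $\Lambda_{B}\big(\rho_{B^{n_1}C^{n_2}}\big)=\mathcal{Y}_{\tilde{B}}\otimes\rho_{B^{n_1-1}C^{n_2}}$ to factor $\mathcal{Y}_{\tilde{B}}$ out; the surviving scalar cancels against the normalization exactly as in Theorem~\ref{thm:TriQdeFTwithPTLC}, giving $\Lambda_B(\omega_B^i)=\mathcal{Y}_{\tilde{B}}$. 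The constraint on $\tau_C^i$ is handled symmetrically. The extension to the full $k$-partite statement of Theorem~\ref{thm:MulQdeFTwithLC} requires only the analogous $k$-party version of Lemma~\ref{lem:mulCMIbound}, proved by one more application of the chain rule \eqref{eq:ChainRule2}.
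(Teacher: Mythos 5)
Your proposal follows the paper's own proof essentially step for step: measure $B^{n_1}$ and $C^{n_2}$ with the minimal-distortion measurements, invoke the generic conditional-mutual-information bound of Lemma~\ref{lem:mulCMIbound} (which indeed needs no linear constraint), undo the distortion via either branch of Lemma~\ref{lem:DataHiding}, and verify the constraints on the conditional marginals by commuting $\mathcal{E},\Lambda,\Gamma$ past the measurement effects and substituting the hypotheses. The argument is correct and no further comparison is needed.
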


\begin{proof}
Let $\mathcal{M}_{B\rightarrow Y}$ be a quantum-to-classical measurement from $B$ to the classical system $Y$, and $\mathcal{M}_{C\rightarrow Z}$ be a quantum-to-classical measurement from $C$ to the classical system $Z$. We apply these measurements to the quantum state $\rho_{AB^{n_1}C^{n_2}}$ and will denote the outcome quantum-classical state as $\rho_{AY^{n_1}Z^{n_2}}$. Then, according to Lemma~\ref{lem:mulCMIbound}, we can find $m\in\{0,\cdots ,n_1-1\}$ and $\ell\in\{0,\cdots ,n_2-1\}$ such that 
\begin{equation}\label{eq:TRIBetterBoundstep1}
\begin{split}
\mathbb{E}_{y^{m}z^{\ell}}&\left\{ \norm{ \rho_{AY_{m+1}Z_{\ell+1}|y^{m}z^{\ell}} - \rho_{A|y^{m}z^{\ell}}\otimes\rho_{Y_{m+1}|y^{m}z^{\ell}}\otimes\rho_{Z_{\ell+1}|y^{m}z^{\ell}} }_1^2 \right\}
\leq \frac{(2\ln 2)\log|A|}{n_1} + \frac{(2\ln2)\left(\log|A| + \log|Y| \right)}{n_2}.
\end{split}
\end{equation}
Let us define $\gamma_{ABC} \equiv \rho_{AB_{m+1}C_{\ell+1}|y^mz^{\ell}} - \rho_{A|y^mz^{\ell}}\otimes\rho_{B_{m+1}|y^mz^{\ell}}\otimes\rho_{C_{\ell+1}|y^mz^{\ell}}$. Note that 
\begin{equation}
\begin{split}
\mathcal{I}_A\otimes\mathcal{M}_{B\rightarrow Y}\otimes\mathcal{M}_{C\rightarrow Z} \left(\gamma_{ABC}\right) =\rho_{AY_{m+1}Z_{\ell+1}|y^{m}z^{\ell}} - \rho_{A|y^{m}z^{\ell}}\otimes\rho_{Y_{m+1}|y^{m}z^{\ell}}\otimes\rho_{Z_{\ell+1}|y^{m}z^{\ell}}.
\end{split}
\end{equation}
As shown in the proof of Theorem~\ref{thm:TriQdeFTwithPTLC}, there are two ways to relate $\norm{\gamma_{ABC}}_1$ to $\norm{ \mathcal{I}_A\otimes\mathcal{M}_{B\rightarrow Y}\otimes\mathcal{M}_{C\rightarrow Z} \left(\gamma_{ABC}\right)}_1$. Using the second part of Lemma~\ref{lem:DataHiding} iteratively, we can obtain
\begin{align}
\norm{\gamma_{ABC}}_1 &\leq 2|C| \norm{ \mathcal{I}_{AB}\otimes\mathcal{M}_{C\rightarrow Z} \left(\gamma_{ABC}\right)}_1\\
&\leq 4|B||C| \norm{ \mathcal{I}_{AC}\otimes\mathcal{M}_{B\rightarrow Y} \left(\mathcal{I}_{AB}\otimes\mathcal{M}_{C\rightarrow Z} \left(\gamma_{ABC}\right)\right)}_1 \\
&= 4|BC|\norm{ \mathcal{I}_{A}\otimes\mathcal{M}_{B\rightarrow Y} \otimes\mathcal{M}_{C\rightarrow Z} \left(\gamma_{ABC}\right)}_1,
\end{align}
and $|Y|\leq|B|^6$.
Using the first part of Lemma~\ref{lem:DataHiding}, we obtain
\begin{equation}
\begin{split}
\norm{\gamma_{ABC}}_1^2 &\leq \left(\sqrt{18^3|ABC|}\right)^2 \norm{ \mathcal{M}_A\otimes\mathcal{M}_{B\rightarrow Y}\otimes\mathcal{M}_{C\rightarrow Z} \left(\gamma_{ABC}\right)}_1^2\\
&\leq \left(\sqrt{18^3|ABC|}\right)^2 \norm{ \mathcal{I}_A\otimes\mathcal{M}_{B\rightarrow Y}\otimes\mathcal{M}_{C\rightarrow Z} \left(\gamma_{ABC}\right)}_1^2 \quad \quad (\because\text{ monotonicity})
\end{split}
\end{equation}
with $|Y|\leq|B|^8$. Combining \eqref{eq:TRIBetterBoundstep1} with the above two results we obtain
\begin{align}
\mathbb{E}_{y^{m}z^{\ell}} & \left\{ \norm{ \rho_{AB_{m+1}C_{\ell+1}|y^{m}z^{\ell}} - \rho_{A|y^{m}z^{\ell}}\otimes\rho_{B_{m+1}|y^{m}z^{\ell}}\otimes\rho_{C_{\ell+1}|y^{m}z^{\ell}} }_1^2 \right\} \\ 
&\leq \min\left\{18^3|ABC|,16|BC|^2\right\}2\ln2\left(\frac{\log|A|}{n_1} + \frac{\left(\log|A| + 8\log|B| \right)}{n_2} \right).
\end{align}
Then, we have
\begin{equation}
\begin{split}
&\norm{\rho_{AB_{m+1}C_{\ell+1}} - \mathbb{E}_{y^mz^{\ell}} \left\{ \rho_{A|y^mz^{\ell}}\otimes\rho_{B_{m+1}|y^mz^{\ell}}\otimes\rho_{C_{\ell+1}|y^mz^{\ell}} \right\}}_1\\
&\leq \; \mathbb{E}_{y^mz^{\ell}} \left\{ \norm{\rho_{AB_{m+1}C_{\ell+1}|y^mz^{\ell}} - \rho_{A|y^{m}z^{\ell}}\otimes\rho_{B_{m+1}|y^{m}z^{\ell}}\otimes\rho_{C_{\ell+1}|y^{m}z^{\ell}}}_1 \right\} \quad\quad (\because\text{ triangular inequality})\\
&\leq \; \sqrt{\mathbb{E}_{y^mz^{\ell}} \left\{ \norm{\rho_{AB_{m+1}C_{\ell+1}|y^mz^{\ell}} - \rho_{A|y^{m}z^{\ell}}\otimes\rho_{B_{m+1}|y^{m}z^{\ell}}\otimes\rho_{C_{\ell+1}|y^{m}z^{\ell}}}_1^2 \right\}} \quad\quad (\because\text{ concavity})\\
&\leq \; \min\left\{18^{3/2}\sqrt{|ABC|},4|BC|\right\}\sqrt{2\ln2}\left(\sqrt{\frac{\log|A|}{n_1} + \frac{\left(\log|A| + 8\log|B|\right)}{n_2}}\right).
\end{split}
\end{equation}
where we used the triangular inequality for the Schatten $p$-norm in the second line and the concavity of the square function in the third line. As $\mathbb{E}_{y^mz^{\ell}} \left\{ \rho_{A|y^mz^{\ell}}\otimes\rho_{B_{m+1}|y^mz^{\ell}}\otimes\rho_{C_{\ell+1}|y^mz^{\ell}} \right\}$ is a separable state with respect to the tripartition $A|B|C$, this proves the first half of the theorem.

The remaining part is to check whether $\rho_{A|y^mz^{\ell}}$, $\rho_{B_{m+1}|y^mz^{\ell}}$ and $\rho_{C_{\ell+1}|y^mz^{\ell}}$ satisfy the desired linear constraints. Let us denote $M^{y_i}_{B_i}$ and $M^{z_i}_{C_i}$ as the measurement operators of the measurements $\mathcal{M}_{B_i\rightarrow Y_i}$ and $\mathcal{M}_{C_i\rightarrow Z_i}$ corresponding to the measurement outcomes $y_i$ and $z_i$, respectively. Then, we can find the followings:
\begin{equation}
\begin{split}
\mathcal{E}_{A\rightarrow\tilde{A}} \left(\sigma_A^i\right) &= \mathcal{E}_{A\rightarrow\tilde{A}} \left(\rho_{A|y^mz^{\ell}}\right)\\
&= \frac{\Tr_{B^mC^{\ell}}\left[ (\mathds{1}_A\otimes M^{y_1}_{B_1}\otimes \cdots \otimes M^{y_m}_{B_m}\otimes M^{z_1}_{C_1}\otimes \cdots \otimes M^{z_{\ell}}_{C_{\ell}}) \mathcal{E}_{A\rightarrow\tilde{A}} \left(\rho_{AB^mC^{\ell}}\right) \right]}{\Tr\left[ (\mathds{1}_A\otimes M^{y_1}_{B_1}\otimes \cdots \otimes M^{y_m}_{B_m}\otimes M^{z_1}_{C_1}\otimes \cdots \otimes M^{z_{\ell}}_{C_{\ell}}) \rho_{AB^mC^{\ell}} \right]}\\
&= \frac{\Tr_{B^mC^{\ell}}\left[ (\mathds{1}_A\otimes M^{y_1}_{B_1}\otimes \cdots \otimes M^{y_m}_{B_m}\otimes M^{z_1}_{C_1}\otimes \cdots \otimes M^{z_{\ell}}_{C_{\ell}}) \left(\mathcal{X}_{\tilde{A}}\otimes \rho_{B^mC^{\ell}}\right)\right]}{\Tr\left[ (\mathds{1}_A\otimes M^{y_1}_{B_1}\otimes \cdots \otimes M^{y_m}_{B_m}\otimes M^{z_1}_{C_1}\otimes \cdots \otimes M^{z_{\ell}}_{C_{\ell}}) \rho_{AB^mC^{\ell}} \right]}\\
&= \mathcal{X}_{\tilde{A}}.
\end{split}
\end{equation}

\begin{equation}
\begin{split}
\Lambda_{B\rightarrow\tilde{B}} \left(\omega_B^i\right) &= \Lambda_{B\rightarrow\tilde{B}} \left(\rho_{B_{m+1}|y^mz^{\ell}}\right)\\
&= \frac{\Tr_{B^mC^{\ell}}\left[ (\mathds{1}_{\tilde{B}}\otimes M^{y_1}_{B_1}\otimes \cdots \otimes M^{y_m}_{B_m}\otimes M^{z_1}_{C_1}\otimes \cdots \otimes M^{z_{\ell}}_{C_{\ell}}) \Lambda_{B\rightarrow\tilde{B}} \left(\rho_{B^{m+1}C^{\ell}}\right) \right]}{\Tr\left[ (M^{y_1}_{B_1}\otimes \cdots \otimes M^{y_m}_{B_m}\otimes \mathds{1}_{B_{m+1}}\otimes M^{z_1}_{C_1}\otimes \cdots \otimes M^{z_{\ell}}_{C_{\ell}}) \rho_{B^{m+1}C^{\ell}} \right]}\\
&= \frac{\Tr_{B^mC^{\ell}}\left[ (\mathds{1}_{\tilde{B}}\otimes M^{y_1}_{B_1}\otimes \cdots \otimes M^{y_m}_{B_m}\otimes M^{z_1}_{C_1}\otimes \cdots \otimes M^{z_{\ell}}_{C_{\ell}}) \left(\mathcal{Y}_{\tilde{B}}\otimes\rho_{B^mC^{\ell}} \right)\right]}{\Tr\left[ (M^{y_1}_{B_1}\otimes \cdots \otimes M^{y_m}_{B_m}\otimes \mathds{1}_{B_{m+1}}\otimes M^{z_1}_{C_1}\otimes \cdots \otimes M^{z_{\ell}}_{C_{\ell}}) \rho_{B^{m+1}C^{\ell}} \right]}\\
&= \mathcal{Y}_{\tilde{B}}.
\end{split}
\end{equation}

\begin{equation}
\begin{split}
\Gamma_{C\rightarrow\tilde{C}} \left(\tau_C^i\right) &= \Gamma_{C\rightarrow\tilde{C}} \left(\rho_{C_{\ell+1}|y^mz^{\ell}}\right)\\
&= \frac{\Tr_{B^mC^{\ell}}\left[ (\mathds{1}_{\tilde{C}}\otimes M^{y_1}_{B_1}\otimes \cdots \otimes M^{y_m}_{B_m}\otimes M^{z_1}_{C_1}\otimes \cdots \otimes M^{z_{\ell}}_{C_{\ell}}) \Gamma_{C\rightarrow\tilde{C}} \left(\rho_{B^mC^{\ell+1}}\right) \right]}{\Tr\left[ (M^{y_1}_{B_1}\otimes \cdots \otimes M^{y_m}_{B_m}\otimes M^{z_1}_{C_1}\otimes \cdots \otimes M^{z_{\ell}}_{C_{\ell}}\otimes \mathds{1}_{C_{\ell+1}}) \left(\rho_{B^mC^{\ell+1}}\right) \right]}\\
&= \frac{\Tr_{B^mC^{\ell}}\left[ (\mathds{1}_{\tilde{C}}\otimes M^{y_1}_{B_1}\otimes \cdots \otimes M^{y_m}_{B_m}\otimes M^{z_1}_{C_1}\otimes \cdots \otimes M^{z_{\ell}}_{C_{\ell}}) \left(\mathcal{Z}_{\tilde{C}}\otimes\rho_{B^mC^{\ell}} \right)\right]}{\Tr\left[ (M^{y_1}_{B_1} \otimes \cdots \otimes M^{y_m}_{B_m} \otimes M^{z_1}_{C_1}\otimes \cdots \otimes M^{z_{\ell}}_{C_{\ell}} \otimes \mathds{1}_{C_{\ell+1}}) \left(\rho_{B^mC^{\ell+1}}\right) \right]}\\
&= \mathcal{Z}_{\tilde{C}}.
\end{split}
\end{equation}
\end{proof}


\section{General games}\label{app:GeneralGames}

Here, we consider general games when the questions for Alice and Bob are correlated, i.e. $\pi(q_1,q_2) \neq \pi_1(q_1)\pi_2(q_2)$. Recall that the value for a two-player game with quantum assistance of dimension $|T|$ can be written as
\begin{align}\tag{\ref{eq:QWinP_T}}
\omega_{Q(T)} = \max_{E,D,\rho} \sum_{q_1,q_2} \pi(q_1,q_2) \sum_{a_1,a_2} V(a_1,a_2,q_1,q_2) \tr \left[ \rho_{T\hat{T}} \left( E_T(a_1|q_1)\otimes D_{\hat{T}}(a_2|q_2) \right) \right].
\end{align}
Let us divide this with $|T|^2|Q_1||Q_2|$ and then multiply by it again
\begin{align}
\omega_{Q(T)} = |Q_1||Q_2||T|^2 \max_{E,D,\rho} \sum_{q_1,q_2} \pi(q_1,q_2) \sum_{a_1,a_2} V(a_1,a_2,q_1,q_2) \tr \left[ \rho_{T\hat{T}}\left( \frac{E_T(a_1|q_1)}{|T||Q_1|} \otimes \frac{D_{\hat{T}}(a_2|q_2)}{|T||Q_2|} \right) \right].
\end{align}
Writing classical systems as quantum systems, we obtain
\begin{equation}\label{eq:GGfullQ}
\begin{split}
\omega_{Q(T)}(V, \pi) = \; |Q_1|&|Q_2||T|^2 \max_{E, D, \rho} \; \tr \left[ \left( V^{\pi}_{A_1A_2Q_1Q_2} \otimes \rho_{T\hat{T}} \right)\left( E_{A_1Q_1T}\otimes D_{A_2Q_2\hat{T}} \right) \right]\\
\text{s.t.} \quad  &\rho_{T\hat{T}} \geq 0\, , \quad \quad \Tr [\rho_{T\hat{T}}] = 1 \\
& E_{A_1Q_1T} = \sum_{a_1, q_1} \ket{a_1q_1}\!\!\bra{a_1q_1}_{A_1Q_1} \otimes \frac{E_T(a_1|q_1)}{|T||Q_1|} \geq 0\\
& E_{Q_1T} = \sum_{q_1} \ket{q_1}\!\!\bra{q_1}_{Q_1} \otimes \frac{\mathds{1}_T}{|T||Q_1|} = \frac{\mathds{1}_{Q_1T}}{|T||Q_1|}\\
& D_{A_2Q_2\hat{T}} = \sum_{a_2q_2} \ket{a_2q_2}\!\!\bra{a_2q_2}_{A_2Q_2} \otimes \frac{D_{\hat{T}}(a_2|q_2)}{|T||Q_2|} \geq 0\\
& D_{Q_2\hat{T}} = \sum_{q_2} \ket{q_2}\!\!\bra{q_2}_{Q_2} \otimes \frac{\mathds{1}_{\hat{T}}}{|T||Q_2|} = \frac{\mathds{1}_{Q_2\hat{T}}}{|T||Q_2|},
\end{split}
\end{equation}
where $V^{\pi}_{A_1A_2Q_1Q_2} = \sum_{a_1,q_1,a_2,q_2} \pi(q_1, q_2) V(a_1,a_2,q_1,q_2) \ket{a_1,a_2,q_1,q_2}\!\!\bra{a_1,a_2,q_1,q_2}$. If we use the same modified swap trick from Lemma~\ref{lem:modSwapTrick}, we can rewrite this as an instance of the quantum separability problem
\begin{equation}\label{eq:GGSepP}
\begin{split}
w^Q_T(V, \pi) = |Q_1|&|Q_2||T|^2 \max_{E, D, \rho} \; \tr \left[ \left( V^{\pi}_{A_1A_2Q_1Q_2}\otimes \Phi_{T\hat{T}|S\hat{S}}\right)\left( E_{A_1Q_1T}\otimes D_{A_2Q_2\hat{T}}\otimes\rho_{S\hat{S}} \right) \right],
\end{split}
\end{equation}
with the same constraints in \eqref{eq:GGfullQ}. Then, using extendible states, we can write a hierarchy of SDP relaxations for general games
\begin{align}\label{eq:GGsdpRelax}
\operatorname{sdp}^G_{n_1,n_2} (V,\pi,T) :=\;|Q_1||Q_2||T|^2 \max_{\rho} \tr\left[ \left( V^{\pi}_{A_1A_2Q_1Q_2}\otimes \Phi_{T\hat{T}|S\hat{S}} \right) \rho_{(A_1Q_1T)(A_2Q_2\hat{T})(S\hat{S})} \right]
\end{align}
\begin{subequations}
\label{eq:sdpRelax_constraints}
\begin{align}
\text{s.t.} \qquad
& \rho_{(A_1Q_1T)^{n_1}(A_2Q_2\hat{T})^{n_2}(S\hat{S})} \geq 0, \quad \tr\left[ \rho_{(A_1Q_1T)^{n_1}(A_2Q_2\hat{T})^{n_2}(S\hat{S})} \right] = 1\label{cond:rho_state} \\
&\mathcal{U}^{\pi}_{(A_1Q_1T)^{n_1}} \left( \rho_{(A_1Q_1T)^{n_1}(A_2Q_2\hat{T})^{n_2}(S\hat{S})} \right) = \rho_{(A_1Q_1T)^{n_1}(A_2Q_2\hat{T})^{n_2}(S\hat{S})} \quad \quad \forall \pi\in\mathcal{S}((A_1Q_1T)^{n_1})\\
&\mathcal{U}^{\pi}_{(A_2Q_2\hat{T})^{n_2}} \left( \rho_{(A_1Q_1T)^{n_1}(A_2Q_2\hat{T})^{n_2}(S\hat{S})} \right) = \rho_{(A_1Q_1T)^{n_1}(A_2Q_2\hat{T})^{n_2}(S\hat{S})} \quad \quad \forall \pi\in\mathcal{S}((A_2Q_2\hat{T})^{n_2})\\
&\tr_{A_1}[\rho_{(A_1Q_1T)^{n_1}(A_2Q_2\hat{T})^{n_2}(S\hat{S})}] = \left( \frac{\mathds{1}_{Q_1T}}{|T||Q_1|} \right) \otimes \rho_{(A_1Q_1T)^{n_1-1}(A_2Q_2\hat{T})^{n_2}(S\hat{S})}\label{cond:rho_lc1}\\
&\tr_{A_2}[\rho_{(A_1Q_1T)^{n_1}(A_2Q_2\hat{T})^{n_2}(S\hat{S})}] = \left( \frac{\mathds{1}_{Q_2\hat{T}}}{|T||Q_2|} \right) \otimes \rho_{(A_1Q_1T)^{n_1}(A_2Q_2\hat{T})^{n_2-1}(S\hat{S})}\label{cond:rho_lc2}\\
&\rho^{T_{A_1Q_1T}}_{(A_1Q_1T)(A_2Q_2\hat{T})^{n}(S\hat{S})^{n}} \geq 0, \quad \rho^{T_{(A_2Q_2\hat{T})^{n}}}_{(A_1Q_1T)(A_2Q_2\hat{T})^{n}(S\hat{S})^{n}} \geq 0, \quad \rho^{T_{(S\hat{S})^{n}}}_{(A_1Q_1T)(A_2Q_2\hat{T})^{n}(S\hat{S})^{n}} \geq 0
\end{align}    
\end{subequations}
It is again easy to bound the computational complexity of these SDP relaxations using the multipartite quantum de Finetti theorem derived in Section~\ref{subsec:TriQdeF}. We find that
\begin{equation}
\begin{split}
&|\operatorname{sdp}^G_{n_1, n_2}(V,\pi,T) - \omega_{Q(T)}(V,\pi)|
\\
&\leq |Q_1||Q_2||T|^2 \left\vert \tr \left[ \left( V^{\pi}_{A_1A_2Q_1Q_2}\otimes\Phi_{T\hat{T}|S\hat{S}}\right) \left( \tilde{\rho}_{A_1Q_1TA_2Q_2\hat{T}S\hat{S}} - \sum_i p_i \sigma^i_{A_1Q_1T}\otimes\omega^i_{A_2Q_2\hat{T}}\otimes\tau^i_{S\hat{S}} \right) \right]\right\vert \\
&\leq |Q_1||Q_2||T|^2 \left( 18^{3/2} |T|^4 \left(\sqrt{4\ln2}\right) \left(\sqrt{\frac{\left( \log|A_1| + 16\log|T| \right)}{n_2} + \frac{\log|A_1|}{n_1}}\right)\right),
\end{split}
\end{equation}
where we assume that $\tilde{\rho}_{A_1Q_1TA_2Q_2\hat{T}S\hat{S}}$ is the optimal solution of $\sdp^G_{n_1,n_2}(V,\pi,T)$, and $\sum_i p_i \sigma^i_{A_1Q_1T}\otimes\omega^i_{A_2Q_2\hat{T}}\otimes\tau^i_{S\hat{S}}$ is one of the close separable states to $\tilde{\rho}_{A_1Q_1TA_2Q_2\hat{T}S\hat{S}}$ specified by Theorem~\ref{thm:TriQdeFTwithPTLC}. The intermediate step is the same as in the proof of Theorem~\ref{thm:Convergencen1n2}. The additional $|Q_1||Q_2|$ factor leads to a worse convergence rate; setting $n:=n_1=n_2$, the size of the SDP becomes
\begin{align}
\exp \Bigg( \mathcal{O}\Bigg( \frac{|T|^{12}|Q|^4\left(\log^2|A||T|+\log|A||T|\log|Q|\right)}{\epsilon^2} \Bigg) \Bigg).
\end{align}
In contrast to free games, \eqref{eq:SDPsize_freegames}, this is exponential in terms of $|Q|$.


\section{General NPA constraints on optimisation variable}\label{app:npa_constraints}

In the main text, we explained how to add the NPA constraints to our SDP relaxations, and we noticed that some of the entries of the NPA matrix $\Gamma_k$ can be expressed in terms of linear combinations of the optimisation variable. Here, we explicitly show how these linear combinations are derived, for an SDP relaxation obtained by extending both the subsystems $A_1Q_1T$ and $A_2Q_2\hat{T}$. This relaxation is defined as
\begin{align} \label{eq:obj_funct_original}
\overline{\sdp}_{n_1,n_2}(V,\pi,T):= |T|^2 \max_{\rho} \sum_{a_1,a_2,q_1,q_2} V(a_1,a_2,q_1,q_2) \Trt{\Phi_{T\hat{T}|S\hat{S}} \, \rho_{T\hat{T}S\hat{S}}(a_1,a_2,q_1,q_2)}
\end{align}
\begin{align}
\text{s.t.} \qquad
&\rho_{T^{n_1}\hat{T}^{n_2}S\hat{S}}(a_1^{n_1}, a_2^{n_2}, q_1^{n_1}, q_2^{n_2}) \geq 0, \quad \sum_{a_1^{n_1}, a_2^{n_2}, q_1^{n_1}, q_2^{n_2}} \Trt{\rho_{T^{n_1}\hat{T}^{n_2}S\hat{S}}(a_1^{n_1}, a_2^{n_2}, q_1^{n_1}, q_2^{n_2})} = 1\\
&\rho_{T^{n_1}\hat{T}^{n_2}S\hat{S}}(a_1^{n_1}, a_2^{n_2}, q_1^{n_1}, q_2^{n_2}) \ \text{perm.~inv.~on} \ (A_1Q_1T)^{n_1} \ \text{and} \ (A_2Q_2\hat{T})^{n_2}\ \text{wrt to other systems}\\
&\sum_{a_1} \, \rho_{T^{n_1}\hat{T}^{n_2}S\hat{S}}(a_1^{n_1}, a_2^{n_2}, q_1^{n_1}, q_2^{n_2}) = \pi_1(q_1) \, \frac{\mathds{1}_T}{|T|} \otimes \rho_{T^{n_1-1}\hat{T}^{n_2}S\hat{S}}(a_1^{n_1-1}, a_2^{n_2}, q_1^{n_1-1}, q_2^{n_2})\\
&\sum_{a_2} \, \rho_{T^{n_1}\hat{T}^{n_2}S\hat{S}}(a_1^{n_1}, a_2^{n_2}, q_1^{n_1}, q_2^{n_2}) = \pi_2(q_2) \, \frac{\mathds{1}_{\hat{T}}}{|T|} \otimes \rho_{T^{n_1}\hat{T}^{n_2-1}S\hat{S}}(a_1^{n_1}, a_2^{n_2-1}, q_1^{n_1}, q_2^{n_2-1}).
\end{align}
It is worth noting that in the above relaxation we have extended the subsystem $A_1Q_1T$ rather than $S\hat{S}$ as in $\sdp_{n_1,n_2}(V,\pi,T)$. This relaxation has slightly worse analytical convergence bounds than the one described in the main text, since we can no more exploit the partial trace condition if we extend the subsystem $A_1Q_1T$. Nonetheless, we consider it here for the following reasons:

\begin{itemize}
\item From the point of view of numerically implementing the SDP, the relaxation $\overline{\sdp}_{n_1,n_2}(V,\pi,T)$ requires a smaller number of variables than $\sdp_{n_1,n_2}(V,\pi,T)$, particularly when we add the NPA constraints. This is because, as we show in the remaining of the appendix, we can rewrite some of the entries of $\Gamma_k$ as linear combinations of the optimisation variable $\rho_{T^{n_1}\hat{T}^{n_2}S\hat{S}}(a_1^{n_1}, a_2^{n_2}, q_1^{n_1}, q_2^{n_2})$. For the relaxation $\sdp_{n_1,n_2}(V,\pi,T)$, only a subset of these entries can be expressed as a linear combination of the corresponding variable, and the remaining entries need to be accounted as new variables.
\item The relation between the entries of $\Gamma_k$ and the optimisation variable of the relaxation $\overline{\sdp}_{n_1,n_2}(V,\pi,T)$ is more general than the one for the relaxation $\sdp_{n_1,n_2}(V,\pi,T)$ used in the main text. For this reason, in this appendix, we explicitly derive the former relation and briefly comment on how to re-purpose it for the SDP hierarchy used in the main text.
\item The relaxations $\overline{\sdp}_{n_1,n_2}(V,\pi,T)$ scale better than $\sdp_{n_1,n_2}(V,\pi,T)$ in numerical implementations. Extending the subsystem $A_1Q_1T$ only increases the total dimension of the quantum system in the variable by $|T|$, but extending the subsystem $S\hat{S}$ increases it by $|T|^2$. Thus, even though $\sdp_{n_1,n_2}(V,\pi,T)$ has a better theoretical convergence speed,  it is practically more advantageous to use $\overline{\sdp}_{n_1,n_2}(V,\pi,T)$.
\end{itemize}

Our goal is to provide a set of instructions that allows us to express the elements of the NPA matrix $\Gamma_k$ as functions of the variable in the new relaxations $\overline{\sdp}_{n_1,n_2}(V,\pi,T)$, where $k \leq \min \left\{ n_1, n_2 \right\}$. Let us first notice that our relaxation aims to approximate the optimal quantum winning probability of a game $(V,\pi)$, which we rewrite here for convenience
\begin{align}
\omega_{Q(T)}(V,\pi) = \max_{E,D,\,\rho\in\mathbb{C}^{T^2}} \sum_{q_1,q_2} \pi(q_1,q_2) \sum_{a_1,a_2} V(a_1,a_2,q_1,q_2) \tr \left[ \rho_{T\hat{T}} \left( E_T(a_1|q_1)\otimes D_{\hat{T}}(a_2|q_2) \right) \right].
\end{align}
By direct comparison with the objective function in \eqref{eq:obj_funct_original}, we see that, for our relaxation to obtain the optimal value $\omega_{Q(T)}(V,\pi)$, the optimisation variable reduced to the system $T\hat{T}S\hat{S}$ needs to be as follow,
\begin{align} \label{eq:opt_red_variable}
\rho_{T\hat{T}S\hat{S}}(a_1,q_1,a_2,q_2)=\frac{\pi(q_1,q_2)}{|T|^2} \, E_T(a_1|q_1)\otimes D_{\hat{T}}(a_2|q_2) \otimes \rho_{S\hat{S}}^T\qquad\forall \, a_1,q_1,a_2,q_2,
\end{align}
where $\left\{ E_T(a_1|q_1) \right\}_{a_1,q_1}$ and $\left\{ D_{\hat{T}}(a_2|q_2) \right\}_{a_2,q_2}$ are the optimal measurements, and $\rho_{S\hat{S}}$ is the optimal state of the assisting system. In order to derive the full optimisation variable $\rho_{T^{n_1}\hat{T}^{n_2}S\hat{S}}(a_1^{n_1}, a_2^{n_2}, q_1^{n_1}, q_2^{n_2})$, we need to first make use of the assumption that $(V,\pi)$ is a free game. Thus, for a fixed value of $a_1$, $q_1$, $a_2$ and $q_2$ we can rewrite \eqref{eq:opt_red_variable} as
\begin{align}
\rho_{T\hat{T}S\hat{S}}(a_1,q_1,a_2,q_2)=E_T(a_1,q_1) \otimes D_{\hat{T}}(a_2,q_2) \otimes \rho_{S\hat{S}}^T,
\end{align}
where $E_T(a_1,q_1) = \frac{\pi_1(q_1)}{|T|} \, E_T(a_1|q_1)$ and
$D_{\hat{T}}(a_2,q_2) = \frac{\pi_2(q_2)}{|T|} \, D_{\hat{T}}(a_2|q_2)$.
\par
We can extend this state by taking $n_1$ i.i.d copies of the system in $T$, and $n_2$ i.i.d copies of the system in $\hat{T}$, obtaining the following assignment for the objective variable which is optimal and satisfies all the constraints in $\overline{\sdp}_{n_1,n_2}$,
\begin{align} \label{eq:opt_var_SDP_NPA}
&\rho_{T^{n_1}\hat{T}^{n_2}S\hat{S}}(a_1^{n_1}, a_2^{n_2}, q_1^{n_1}, q_2^{n_2})\notag\\
&=E_{T_1}(a_1^{(1)},q_1^{(1)})\otimes \ldots \otimes E_{T_{n_1}}(a_1^{(n_1)},q_1^{(n_1)})\otimes D_{\hat{T}_1}(a_2^{(1)},q_2^{(1)})\otimes \ldots \otimes D_{\hat{T}_{n_2}}(a_2^{(n_2)},q_2^{(n_2)})\otimes\rho_{S\hat{S}}^T.
\end{align}
We can now make use of the explicit form of the optimal variable to derive the NPA constraints for our SDP relaxations. The highest level of the NPA hierarchy we can fully implement is given by the minimum between $n_1$ and $n_2$, and in the following we assume without loss of generality that $n_1 \geq n_2$. For a given $k \leq n_2$, the element of the NPA matrix $\Gamma_k$ which we can express as a function of the optimisation variable are of the form,
\begin{align} \label{eq:NPA_element_mat}
p(a_1^m, a_2^{\ell}|q_1^m, q_2^{\ell})=\Trt{\left(E_T(a_1^{(1)}|q_1^{(1)}) \ldots E_T(a_1^{(m)}|q_1^{(m)})\otimes D_{\hat{T}}(a_2^{(1)}|q_2^{(1)}) \ldots D_{\hat{T}}(a_2^{(\ell)}|q_2^{(\ell)})\right)\rho_{T\hat{T}}},
\end{align}
where $m, \ell \leq k$. The other elements of the matrix are either zeros, if we make the additional assumption that $\left\{ E_T(a_1|q_1) \right\}_{a_1}$ and $\left\{ D_{\hat{T}}(a_2|q_2) \right\}_{a_2}$ are projective measurements (PVMs) for each $q_1$ and $q_2$ respectively, or they need to be considered as new variables of the problem.

To rewrite the elements of the NPA matrix in \eqref{eq:NPA_element_mat} in terms of $\rho_{T^{n_1}\hat{T}^{n_2}S\hat{S}}(a_1^{n_1}, a_2^{n_2}, q_1^{n_1}, q_2^{n_2})$ we first need to introduce the following lemma which generalises Lemma~\ref{lem:modSwapTrick} in the main text.

\begin{lemma} \label{lem:general_swap_trick}
Consider a set of operators $\left\{ M_i \right\}_{i=1}^n$, each of them acting over the Hilbert space $\hil$.
Then, it holds that
\begin{align} \label{eq:general_swap_trick}
\Trt{P_{\text{cyclic}}^{n} \, (M_1 \otimes M_2 \otimes \ldots \otimes M_n)} = \Trt{M_1 M_2 \ldots M_n},
\end{align}
where $P_{\text{cyclic}}^{n} \in \mathcal{B}(\hil)$ is the unitary operator associated with the cyclic permutation $\pi$, which acts over the $n$ tuple as
$\pi \, (1,2,3,\ldots,n) = (2,3,\ldots,n,1)$.
\end{lemma}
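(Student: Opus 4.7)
The plan is to prove the identity by a direct computation in an orthonormal basis of $\mathcal{H}$, generalising the $n=2$ swap trick of Lemma~\ref{lem:modSwapTrick}. Fixing a basis $\{\ket{i}\}$ of $\mathcal{H}$, the cyclic permutation unitary takes the explicit form
\begin{align}
P_{\text{cyclic}}^n = \sum_{i_1,\ldots,i_n}\ket{i_2,i_3,\ldots,i_n,i_1}\!\!\bra{i_1,i_2,\ldots,i_n},
\end{align}
consistent with the convention $\pi(1,2,\ldots,n)=(2,3,\ldots,n,1)$ stated in the lemma.

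Expanding each operator as $M_k=\sum_{j,\ell}(M_k)_{j\ell}\ket{j}\!\!\bra{\ell}$ and substituting into the left-hand side of \eqref{eq:general_swap_trick}, the trace evaluates to a sum over $2n$ basis labels in which the tensor-product inner products yield a product of Kronecker deltas. These deltas pin the column index of each $M_k$ to $i_k$ and its row index to $i_{k-1}$ (cyclically modulo $n$), so the surviving expression is
\begin{align}
\sum_{i_1,\ldots,i_n}(M_1)_{i_n i_1}\,(M_2)_{i_1 i_2}\cdots (M_n)_{i_{n-1} i_n},
\end{align}
which, after collapsing the intermediate summations via $(AB)_{ij}=\sum_k A_{ik}B_{kj}$, coincides with $\tr[M_1 M_2 \cdots M_n]$ as claimed.

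An equivalent, more structural route is induction on $n$: decompose $P_{\text{cyclic}}^n$ into a product of $n-1$ adjacent transpositions (swaps between neighbouring tensor factors) and apply Lemma~\ref{lem:modSwapTrick} iteratively, each application absorbing one additional factor $M_k$ into the growing matrix product on the remaining system. There is no genuine technical obstacle in either approach; the only point requiring care is the bookkeeping of conventions, ensuring that the chosen direction of the cycle produces the product in the stated order $M_1 M_2\cdots M_n$ rather than its reverse.
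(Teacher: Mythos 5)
Your proposal is correct and follows essentially the same route as the paper: writing $P_{\text{cyclic}}^{n}$ explicitly in a basis, expanding each $M_k$ in matrix elements, and letting the Kronecker deltas collapse the sum into the cyclic index contraction $\sum_{i_1,\ldots,i_n}(M_1)_{i_n i_1}(M_2)_{i_1 i_2}\cdots(M_n)_{i_{n-1}i_n}=\tr[M_1M_2\cdots M_n]$ (your index labels are a cyclic shift of the paper's, which changes nothing). The alternative inductive argument via adjacent transpositions and Lemma~\ref{lem:modSwapTrick} that you sketch would also work but is not needed.
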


\begin{proof}
The lemma is proven by explicitly computing the RHS and LHS of \eqref{eq:general_swap_trick}. For the RHS we have,
\begin{align}
\Trt{M_1 M_2 \ldots M_n}&=\Trt{\sum_{i_1,j_1,i_2,j_2,\ldots,i_n,j_n}m^{(1)}_{i_1,j_1} m^{(2)}_{i_2,j_2} \ldots m^{(n)}_{i_n,j_n}\ket{i_1}\bra{j_1} \ket{i_2}\bra{j_2} \ldots \ket{i_n}\bra{j_n}} \\
&=\sum_{i_1,j_1,i_2,j_2,\ldots,i_n,j_n}m^{(1)}_{i_1,j_1} m^{(2)}_{i_2,j_2} \ldots m^{(n)}_{i_n,j_n} \,\delta_{j_1,i_2} \delta_{j_2,i_3} \ldots \delta_{j_n,i_1} =\sum_{i_1,i_2,\ldots,i_n}m^{(1)}_{i_1,i_2} m^{(2)}_{i_2,i_3} \ldots m^{(n)}_{i_n,i_1}.
\end{align}
To compute the LHS, we first need the explicit form of the operator $P_{\text{cyclic}}^{n}$,
\begin{align}
P_{\text{cyclic}}^{n}=\sum_{k_1,k_2,k_3,k_4,\ldots,k_n}\ket{k_2}\bra{k_1}_1 \otimes \ket{k_3}\bra{k_2}_2 \otimes \ket{k_4}\bra{k_3}_3 \otimes \ldots \otimes \ket{k_1}\bra{k_n}_n,
\end{align}
which takes a vector on the subsystem $k$ and maps it to the subsystem $k-1$, with the exception of $k=1$ which is mapped into the $n$-th subsystem.
The LHS of \eqref{eq:general_swap_trick} is then
\begin{align}
&\Trt{P_{\text{cyclic}}^{n} \, (M_1 \otimes M_2 \otimes \ldots \otimes M_n)}\\
&=\sum_{i_1,j_1,i_2,j_2,\ldots,i_n,j_n} \,
m^{(1)}_{i_1,j_1} m^{(2)}_{i_2,j_2} \ldots m^{(n)}_{i_n,j_n} \, 
\Trt{P_{\text{cyclic}}^{n} \, 
\ket{i_1}\bra{j_1}_1 \otimes \ket{i_2}\bra{j_2}_2
\otimes \ldots \otimes
\ket{i_n}\bra{j_n}_n} \\
&=\sum_{i_1,j_1,i_2,j_2,\ldots,i_n,j_n} \,
m^{(1)}_{i_1,j_1} m^{(2)}_{i_2,j_2} \ldots m^{(n)}_{i_n,j_n} \,
\Trt{\ket{i_2}\bra{j_1}_1 \otimes \ket{i_3}\bra{j_2}_2
\otimes \ldots \otimes
\ket{i_1}\bra{j_n}_n} \\
&=\sum_{i_1,j_1,i_2,j_2,\ldots,i_n,j_n} \,
m^{(1)}_{i_1,j_1} m^{(2)}_{i_2,j_2} \ldots m^{(n)}_{i_n,j_n} \,
\delta_{i_2,j_1} \delta_{i_3,j_2} \ldots \delta_{i_1,j_n}
=\sum_{i_1,i_2,\ldots,i_n} \,
m^{(1)}_{i_1,i_2} m^{(2)}_{i_2,i_3} \ldots m^{(n)}_{i_n,i_1},
\end{align}
which concludes the proof.
\end{proof}

We can now derive the map between the elements of $\Gamma_k$ and the optimisation variable in our SDP relaxations. 
Without loss of generality, let us assume that $m \geq \ell$ in \eqref{eq:NPA_element_mat}; by reordering the operators we get
\begin{align}
p(a_1^m, a_2^{\ell}|q_1^m, q_2^{\ell})
&=
\Trt{
\left(
\prod_{i=1}^{\ell}
E_T(a_1^{(i)}|q_1^{(i)}) \otimes D_{\hat{T}}(a_2^{(i)}|q_2^{(i)})
\prod_{j=\ell+1}^{m}
E_T(a_1^{(j)}|q_1^{(j)})
\otimes
\id_{\hat{T}}
\right)
\rho_{T\hat{T}}
} \\
&=
\Trt{
P_{\text{cyclic}}^{m+1} \,
\left(
\bigotimes_{i=1}^{\ell}
E_{T_i}(a_1^{(i)}|q_1^{(i)})
\otimes
D_{\hat{T}_i}(a_2^{(i)}|q_2^{(i)})
\right)
\otimes
\left(
\bigotimes_{j=\ell+1}^{m}
E_{T_{j}}(a_1^{(j)}|q_1^{(j)})
\otimes
\id_{\hat{T}_{j}}
\right)
\otimes
\rho_{S\hat{S}}
},
\end{align}
where the second equality follows from Lemma~\ref{lem:general_swap_trick} with $\hil_{T\hat{T}}$. The operator in the above equation is close to the optimisation variable $\rho_{T^{n_1}\hat{T}^{n_2}S\hat{S}}(a_1^{n_1}, a_2^{n_2}, q_1^{n_1}, q_2^{n_2})$, whose dimension can be reduced by summing over the classical variables while tracing out the quantum degrees of freedom
\begin{align}
\rho_{T^{m}\hat{T}^{m}S\hat{S}}(a_1^{m}, a_2^{\ell}, q_1^{m}, q_2^{\ell})
&=
\sum_{\substack{a^{(m+1)}_1,q^{(m+1)}_1,\ldots,a^{(n_1)}_1,q^{(n_1)}_1, \\
a^{(\ell+1)}_2,q^{(\ell+1)}_2,\ldots,a^{(n_2)}_2,q^{(n_2)}_2}}
\Trp{T_{m+1} \ldots T_{n_1} \hat{T}_{m+1} \ldots \hat{T}_{n_2} }{
\rho_{T^{n_1}\hat{T}^{n_2}S\hat{S}}(a_1^{n_1}, a_2^{n_2}, q_1^{n_1}, q_2^{n_2})
} \\
&=
\left(
\bigotimes_{i=1}^{\ell}
E_{T_i}(a_1^{(i)},q_1^{(i)})
\otimes
D_{\hat{T}_i}(a_2^{(i)},q_2^{(i)})
\right)
\otimes
\left(
\bigotimes_{j=\ell+1}^{m}
E_{T_{j}}(a_1^{(j)},q_1^{(j)})
\otimes
\frac{\id_{\hat{T}_j}}{|T|}
\right)
\otimes
\rho_{S\hat{S}}^T,
\end{align}
where we have used the fact that $\sum_{a_1,q_1} E_T(a_1,q_1) = \frac{\id_T}{|T|}$ and $\sum_{a_2,q_2} E_{\hat{T}}(a_2,q_2) = \frac{\id_{\hat{T}}}{|T|}$. By combining together the two equations above, we find that the elements of $\Gamma_k$ can be expressed in terms of the optimisation variable as
\begin{equation} \label{eq:NPA_entries_SDP_variable}
p(a_1^m, a_2^{\ell}|q_1^m, q_2^{\ell})
=
\begin{cases}
\frac{|T|^{2m}}{\prod_{i=1}^m \pi_1(q_1^{(i)}) \, \prod_{j=1}^{\ell} \pi_2(q_2^{(j)})}
\Trt{
\left( P_{\text{cyclic}}^{m+1} \right)^{T_{S\hat{S}}} \,
\rho_{T^{m}\hat{T}^{m}S\hat{S}}(a_1^{m}, a_2^{\ell}, q_1^{m}, q_2^{\ell})
}, & \forall \, \ell, m \ : \ \ell \leq m \leq k \\
\frac{|T|^{2\ell}}{\prod_{i=1}^m \pi_1(q_1^{(i)}) \, \prod_{j=1}^{\ell} \pi_2(q_2^{(j)})}
\Trt{
\left( P_{\text{cyclic}}^{\ell+1} \right)^{T_{S\hat{S}}} \,
\rho_{T^{\ell}\hat{T}^{\ell}S\hat{S}}(a_1^{m}, a_2^{\ell}, q_1^{m}, q_2^{\ell})
}, & \forall \, \ell, m \ : \ m \leq \ell \leq k,
\end{cases}  
\end{equation}
where we have included the case in which $\ell \geq m$, that can be derived analogously.
\par
When the NPA constraints are applied to the original $\sdp_{n_1,n_2}(V,\pi,T)$ in the main text, the optimization variable is given by $\rho_{T(\hat{T})^{n_1}(S\hat{S})^{n_2}}(a_1, a_2^{n_1}, q_1, q_2^{n_1})$. The relation between this variable and the entries of the NPA matrix $\Gamma_k$ can be obtain following the same steps presented in this appendix. The main difference is that the optimisation variable is defined over a single subsystem $T$. As a result, when $\ell \geq 2$ we need to pad the variable with the $\ell-1$ copies of the maximally-mixed state, so as to be able to apply the operator $P_{\text{cyclic}}^{\ell+1}$. The relation between entries of $\Gamma_k$ and the optimisation variable are thus given by,
\begin{align}
p(a_1, a_2^{\ell}|q_1, q_2^{\ell})=\frac{|T|^{\ell+1}}{\pi_1(q_1) \, \prod_{j=1}^{\ell} \pi_2(q_2^{(j)})}
\Trt{\left( P_{\text{cyclic}}^{\ell+1} \right)^{T_{S\hat{S}}} \,
\left( \bigotimes_{i=2}^{\ell} \id_{T_i} \right) \otimes\rho_{T\hat{T}^{\ell}S\hat{S}}(a_1, a_2^{\ell}, q_1^{m}, q_2^{\ell})} \qquad \forall \, \ell \ : \ \ell \leq k.
\end{align}


\section{Non-signalling value}\label{app:sdp11_non_signalling}

Apart from the classical and quantum values defined in \eqref{eq:CWinP} and \eqref{eq:QWinP}, respectively, we can define another quantity called the non-signalling value. This is the optimal success probability achieved by non-signalling correlations such as 
\begin{align}
\omega_{\ns}(V,\pi) := \max_{p \in\ns} \sum_{q_1,q_2} \pi(q_1,q_2) \sum_{a_1,a_2} V(a_1,a_2,q_1,q_2) \, p(a_1,a_2|q_1,q_2),
\end{align}
where $\ns$ denotes the set of all non-signalling correlations
\begin{align}
\text{$\sum_{a_1} p(a_1,a_2|q_1,q_2) = p(a_2|q_2) \; \forall a_2, q_2$ and $\sum_{a_2} p(a_1,a_2|q_1,q_2) = p(a_1|q_1) \; \forall a_1, q_1$.}
\end{align}
As any classical or quantum correlation satisfies the non-signalling condition, the non-signalling value gives an upper bound to the classical and quantum values
\begin{align}
\omega_C(V,\pi) \;\leq\; \omega_{Q(T)}(V,\pi) \;\leq\; \omega_Q(V,\pi) \;\leq\; \omega_{\ns}(V,\pi).
\end{align}
In this section, we show that $\sdp_{1,1}(V,\pi,T)$ is equal to $\omega_{\ns}(V,\pi)$ for any $|T|$.

\begin{lemma}
Let $\sdp_{1,1} (V,\pi,T)$ be the first level SDP relaxation for the game with $V$ and $\pi_1(q_1)\pi_2(q_2)$. Then, we have for all $|T|$ that
\begin{align}
\sdp_{1,1}(V,\pi,T) = \omega_{\ns}(V,\pi).
\end{align}
\end{lemma}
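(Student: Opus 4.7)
The plan is to show the two inequalities $\sdp_{1,1}(V,\pi,T) \geq \omega_{\ns}(V,\pi)$ and $\sdp_{1,1}(V,\pi,T) \leq \omega_{\ns}(V,\pi)$ by translating between feasible states and non-signalling distributions using the relation in \eqref{eq:JointPasRho}. Since the level $n_1=n_2=1$ has no non-trivial permutation constraints, the SDP reduces to a state satisfying the two linear partial-trace constraints and the PPT conditions, and I can safely work in the classical-quantum block form \eqref{eq:CQvariable} (the objective and constraints are invariant under dephasing on $A_1A_2Q_1Q_2$, and $V_{A_1A_2Q_1Q_2}$ is diagonal).

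For the lower bound direction, given any non-signalling distribution $p(a_1,a_2|q_1,q_2)$, I would construct the feasible candidate
\begin{align}
\rho_{(A_1Q_1T)(A_2Q_2\hat{T})(S\hat{S})} = \sum_{a_1,a_2,q_1,q_2} \pi_1(q_1)\pi_2(q_2)\, p(a_1,a_2|q_1,q_2)\, \ket{a_1a_2q_1q_2}\!\!\bra{a_1a_2q_1q_2} \otimes \frac{\mathds{1}_{T\hat{T}S\hat{S}}}{|T|^4}.
\end{align}
Normalisation is immediate, and PPT holds trivially because the quantum part is the maximally mixed state. The two partial-trace constraints \eqref{eq:SDPadditionalLC1} and \eqref{eq:SDPadditionalLC2} follow directly from the non-signalling conditions $\sum_{a_1} p(a_1,a_2|q_1,q_2)=p(a_2|q_2)$ and $\sum_{a_2} p(a_1,a_2|q_1,q_2)=p(a_1|q_1)$. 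Finally, $|T|^2\tr[\Phi_{T\hat{T}|S\hat{S}} \cdot \mathds{1}_{T\hat{T}S\hat{S}}/|T|^4]=1$ gives the objective value $\sum \pi(q_1,q_2) V(a_1,a_2,q_1,q_2) p(a_1,a_2|q_1,q_2)$, as desired.

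For the upper bound direction, I would start from any feasible $\rho$ in block-diagonal form and define
\begin{align}
p(a_1,a_2|q_1,q_2) := \frac{|T|^2}{\pi(q_1,q_2)} \tr\bigl[\Phi_{T\hat{T}|S\hat{S}}\, \rho_{T\hat{T}S\hat{S}}(a_1,a_2,q_1,q_2)\bigr].
\end{align}
This gives the objective $|T|^2 \tr[(V\otimes\Phi)\rho]=\sum \pi(q_1,q_2)V(a_1,a_2,q_1,q_2)p(a_1,a_2|q_1,q_2)$ by construction, and $p\geq 0$ since $\Phi\succeq 0$ and the blocks $\rho_{T\hat{T}S\hat{S}}(a_1,a_2,q_1,q_2)\succeq 0$. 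The key verification is that $p$ is a non-signalling distribution. Using constraint \eqref{eq:SDPadditionalLC1} I would show $\sum_{a_1}\rho_{T\hat{T}S\hat{S}}(a_1,a_2,q_1,q_2) = \pi_1(q_1)(\mathds{1}_T/|T|)\otimes \tilde\rho_{\hat{T}S\hat{S}}(a_2,q_2)$, which upon plugging in makes $\sum_{a_1}p(a_1,a_2|q_1,q_2)$ independent of $q_1$; by symmetry the other non-signalling condition follows from \eqref{eq:SDPadditionalLC2}. Normalisation $\sum_{a_1,a_2}p(a_1,a_2|q_1,q_2)=1$ follows from combining both partial-trace constraints to deduce $\sum_{a_1,a_2}\rho_{T\hat{T}S\hat{S}}(a_1,a_2,q_1,q_2)=\pi(q_1,q_2)(\mathds{1}_{T\hat{T}}/|T|^2)\otimes \rho_{S\hat{S}}$ together with the identity $\tr[\Phi_{T\hat{T}|S\hat{S}}(\mathds{1}_{T\hat{T}}\otimes \rho_{S\hat{S}})]=\tr[\rho_{S\hat{S}}]=1$. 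Hence $p\in\ns$ and the objective is upper bounded by $\omega_{\ns}(V,\pi)$.

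No single step looks like a serious obstacle: the lower bound construction essentially uses the maximally mixed state, and the upper bound uses only the two partial-trace constraints (neither the permutation nor the PPT constraints are required). The only minor care needed is verifying that the block-diagonal reduction is without loss of generality and that the trace identity $\tr[\Phi_{T\hat{T}|S\hat{S}}(\mathds{1}_{T\hat{T}}\otimes \rho_{S\hat{S}})]=1$ holds, both of which are direct.
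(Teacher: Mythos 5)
Your proof is correct and follows essentially the same route as the paper's: the lower bound via the maximally-mixed Ansatz state weighted by the non-signalling distribution (the paper allows an arbitrary $\rho_{S\hat{S}}$ where you fix the maximally mixed one, which changes nothing), and the upper bound by reading off $p(a_1,a_2|q_1,q_2)$ from a feasible block-diagonal state via \eqref{eq:JointPasRho} and deriving non-signalling and normalisation from the two partial-trace constraints. All the steps you flag as needing care (block-diagonal reduction, the identity $\tr[\Phi_{T\hat{T}|S\hat{S}}(\id_{T\hat{T}}\otimes\rho_{S\hat{S}})]=\tr[\rho_{S\hat{S}}]$) do go through as you expect.
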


\begin{proof}
We first show that $\sdp_{1,1}(V,\pi,T) \leq \omega_{\ns}(V,\pi)$. From $\sdp_{1,1}(V,\pi,T)$ we have the linear constraint 
\begin{align}\label{eq:sdp11}
\sum_{a_2} \rho_{T\hat{T}S\hat{S}} (a_1,a_2,q_1,q_2) = \pi_2(q_2) \frac{\id_{\hat{T}}}{|T|} \otimes \rho_{TS\hat{S}}(a_1,q_1).
\end{align}
Then, using the expression for $p(a_1,a_2|q_1,q_2)$ in \eqref{eq:JointPasRho}, we can write the sum of $p(a_1,a_2|q_1,q_2)$ over $a_2$ as
\begin{equation}
\begin{split}
\sum_{a_2} p(a_1,a_2|q_1,q_2) &= \frac{|T|^2}{\pi_1(q_1)\pi_2(q_2)} \tr\left[\Phi_{T\hat{T}|S\hat{S}} \left( \sum_{a_2} \rho_{T\hat{T}S\hat{S}}(a_1,a_2,q_1,q_2)\right)\right]\\
&= \frac{|T|}{\pi_1(q_1)} \tr\left[\Phi_{T\hat{T}|S\hat{S}} \left( \id_{\hat{T}} \otimes \rho_{TS\hat{S}}(a_1,q_1)\right)\right].
\end{split}
\end{equation}
Using a more general formula for $p(a_1|q_1)$ without assuming the non-signalling, we obtain
\begin{equation}
\begin{split}
p(a_1|q_1) &= \sum_{a_2,q_2}p(q_2)p(a_1,a_2|q_1,q_2) = \frac{|T|^2}{\pi_1(q_1)} \tr\left[\Phi_{T\hat{T}|S\hat{S}} \left( \sum_{a_2,q_2} \rho_{T\hat{T}S\hat{S}}(a_1,a_2,q_1,q_2)\right)\right]\\
&= \frac{|T|^2}{\pi_1(q_1)} \tr\left[\Phi_{T\hat{T}|S\hat{S}} \left( \sum_{q_2} \pi_2(q_2) \frac{\id_{\hat{T}}}{|T|} \otimes \rho_{TS\hat{S}}(a_1,q_1)\right)\right] = \frac{|T|}{\pi_1(q_1)} \tr\left[\Phi_{T\hat{T}|S\hat{S}} \left( \id_{\hat{T}} \otimes \rho_{TS\hat{S}}(a_1,q_1)\right)\right],
\end{split}
\end{equation}
which is same as the expression for $\sum_{a_2} p(a_1,a_2|q_1,q_2)$. Similarly, we can also show the non-signalling condition for $a_1$ using the linear constraint on $A_1$ in $\sdp_{1,1}$. Thus, all the states satisfying the linear constraints of $\sdp_{1,1}(V,\pi,T)$ are non-signalling and hence form a smaller set for optimisation than the set of non-signalling correlations.

Next, we show that $\sdp_{1,1}(V,\pi,T) \geq \omega_{\ns}(V,\pi)$. For any non-signalling correlation $p(a_1,a_2|q_1,q_2)$, we can construct the Ansatz state
\begin{align}
\rho_{T\hat{T}S\hat{S}}(a_1,a_2,q_1,q_2) = \pi_1(q_1)\pi_2(q_2)p(a_1,a_2|q_1,q_2)\frac{\id_{T\hat{T}}}{|T|^2}\otimes\rho_{S\hat{S}},
\end{align}
where $\rho_{S\hat{S}}$ is an arbitrary state. We can easily check that this state is a valid feasible state of $\sdp_{1,1}(V,\pi,T)$, i.e. it satisfies all conditions in $\sdp_{1,1}(V,\pi,T)$. The objective function becomes
\begin{equation}
\begin{split}
&\sum_{a_1,a_2,q_1,q_2} V(a_1,a_2,q_1,q_2) \pi_1(q_1)\pi_2(q_2)p(a_1,a_2|q_1,q_2) \tr \left[\Phi_{T\hat{T}|S\hat{S}} \id_{T\hat{T}}\otimes\rho_{S\hat{S}}\right]\\
&=\sum_{a_1,a_2,q_1,q_2} V(a_1,a_2,q_1,q_2) \pi_1(q_1)\pi_2(q_2)p(a_1,a_2|q_1,q_2), 
\end{split}
\end{equation}
which is a success probability of the game with the strategy $p(a_1,a_2|q_1,q_2)$. This implies that the feasible states of $\sdp_{1,1}(V,\pi,T)$ can cover all non-signalling cases, and hence $\sdp_{1,1}(V,\pi,T) \geq \omega_{\ns}(V,\pi)$.
\end{proof}





\section{Rank-one projective measurements}\label{app:rank1ProjMeas}

\subsection{Adapted SDP hierarchies}\label{app:adapted}

For low dimensional examples and corresponding numerics it is important to take into account any additional structure available in order to achieve dimension reductions. In particular, for measurements that are projective rank-one, we can connect our results to the previous work \cite{navascues2014characterization} covering such cases. To exemplify this, we consider here the restricted case $A = \{0,1\}$ with $|T| = 2$.

Let us recall the expression of the optimal success probability $\omega_{Q(T)}(V,\pi)$ after the swap trick
\begin{align}\label{eq:pwinQT_classical}
\omega_{Q(2)}(V,\pi) = \max_{(E,D,\rho)} \sum_{a_1,a_2,q_1,q_2} V(a_1,a_2,q_1,q_2) \tr\left[\Phi_{T\hat{T}|S\hat{S}}\left(E_T(a_1,q_1)\otimes D_{\hat{T}}(a_2,q_2)\otimes\rho_{S\hat{S}}\right)\right].
\end{align}
Recalling that $E_T(a_1,q_1) = \pi_1(q_1) E_T(a_1|q_1)$ and similarly for $D_{\hat{T}}(a_2,q_2)$, as well as exploiting that
$E_T(1|q_1) = \id_T - E_T(0|q_1)$ and $D_{\hat{T}}(1|q_2) = \id_{\hat{T}} - D_{\hat{T}}(0|q_2)$, we have
\begin{align}
\omega_{Q(2)}(V,\pi) = \max_{(E,D,\rho)} \sum_{a_1,a_2,q_1,q_2} V(a_1,a_2,q_1,q_2) \tr\left[\Lambda_{T\hat{T}S\hat{S}}(a_1,a_2)\left(E_T(q_1)\otimes D_{\hat{T}}(q_2)\otimes\rho_{S\hat{S}}\right)\right],
\end{align}
where $E_T(q_1):=E_T(0,q_1)$ and $D_{\hat{T}}(q_2):=D_{\hat{T}}(0,q_2)$, and we defined the matrices
\begin{align}
\Lambda_{T\hat{T}S\hat{S}}(a_1,a_2):= \left(a_1\id_{TS}+(-1)^{a_1}\Phi_{T|S}\right)\otimes\left(a_2\id_{\hat{T}\hat{S}}+(-1)^{a_2}\Phi_{\hat{T}|\hat{S}}\right).
\end{align}
When converting the $Q_1, Q_2$ systems to diagonal quantum systems, we obtain
\begin{align}
\omega_{Q(2)}&(V,\pi) =
\max_{(E,D,\rho)} \sum_{a_1,a_2} \tr\left[\left(V_{Q_1Q_2}(a_1,a_2)\otimes\Lambda_{T\hat{T}S\hat{S}}(a_1,a_2)\right)\left(E_{Q_1T}\otimes D_{Q_2\hat{T}}\otimes\rho_{S\hat{S}}\right)\right] \\
\text{s.t.}\quad &\rho_{S\hat{S}} \geq 0\, , \quad \tr\left[\rho_{S\hat{S}}\right] = 1\\
& E_{Q_1T} = \sum_{q_1}\pi_1(q_1)\ket{q_1}\!\!\bra{q_1}_{Q_1}\otimes E_T(0|q_1)\, , \quad \tr_{T}\left[E_{Q_1T}\right] = \sum_{q_1}\pi_1(q_1)\ket{q_1}\!\!\bra{q_1}_{Q_1} \\
& D_{Q_2\hat{T}} = \sum_{q_2} \pi_2(q_2)\ket{q_2}\!\!\bra{q_2}_{Q_2}\otimes D_{\hat{T}}(0|q_2)\, , \quad \tr_{\hat{T}}\left[D_{Q_2\hat{T}}\right] = \sum_{q_2} \pi_2(q_2) \ket{q_2}\!\!\bra{q_2}_{Q_2},
\end{align}
where $V_{Q_1Q_2}(a_1,a_2):=\sum_{q_1,q_2}V(a_1,a_2,q_1,q_2)\ket{q_1,q_2}\!\!\bra{q_1,q_2}_{Q_1,Q_2}$. This then motivates the SDP hierarchy
\begin{align}
\overline{\sdp}_{n_1,n_2}^{\operatorname{proj}}&(V,\pi,2) := \max_{\rho} \sum_{a_1,a_2} \tr\left[\left(V_{Q_1Q_2}(a_1,a_2)\otimes\Lambda_{T\hat{T}S\hat{S}}(a_1,a_2)\right)\rho_{Q_1TQ_2\hat{T}S\hat{S}}\right] \\
\text{s.t.}\quad &\rho_{(Q_1T)^{n_1}(Q_2\hat{T})^{n_2}S\hat{S}} \geq 0\, , \quad \tr\left[\rho_{(Q_1T)^{n_1}(Q_2\hat{T})^{n_2}S\hat{S}}\right] = 1 \\
& \rho_{(Q_1T)^{n_1}(Q_2\hat{T})^{n_2}S\hat{S}} \text{ perm. inv. on } (Q_1T)^{n_1} \text{ w.r.t. } (Q_2\hat{T})^{n_2}S\hat{S} \\
& \rho_{(Q_1T)^{n_1}(Q_2\hat{T})^{n_2}S\hat{S}} \text{ perm. inv. on } (Q_2\hat{T})^{n_2} \text{ w.r.t. } (Q_1T)^{n_1}S\hat{S} \\
&\tr_T\left[\rho_{(Q_1T)^{n_1}(Q_2\hat{T})^{n_2}S\hat{S}}\right] = \sum_{q_1}\pi(q_1)\ket{q_1}\!\!\bra{q_1}_{Q_1}\otimes\tr_{Q_1T}\left[\rho_{(Q_1T)^{n_1}(Q_2\hat{T})^{n_2}S\hat{S}}\right]\\
&\tr_{\hat{T}}\left[\rho_{(Q_1T)^{n_1}(Q_2\hat{T})^{n_2}S\hat{S}}\right] = \sum_{q_2}\pi(q_2)\ket{q_2}\!\!\bra{q_2}_{Q_2}\otimes\tr_{Q_2\hat{T}}\left[\rho_{(Q_1T)^{n_1}(Q_2\hat{T})^{n_2}S\hat{S}}\right] \, \\
& \rho^{T_{(Q_1T)^{n_1}}}_{(Q_1T)^{n_1}(Q_2\hat{T})^{n_2}S\hat{S}} \geq 0\, , \quad \rho^{T_{(Q_2\hat{T})^{n_2}}}_{(Q_1T)^{n_1}(Q_2\hat{T})^{n_2}S\hat{S}} \geq 0\, , \quad \rho^{T_{S\hat{S}}}_{(Q_1T)^{n_1}(Q_2\hat{T})^{n_2}S\hat{S}}\geq 0,\,\ldots\;.
\end{align}
We note that we do no longer have the $|T|^2$ pre-factor and that the size of the optimisation variable is smaller compared to $\overline{\sdp}_{n_1,n_2}(V,\pi,T)$ in \eqref{eq:obj_funct_original}. This allows for more efficient numerical implementations.

In fact, again writing out the block-diagonal structure 
\begin{align}
\rho_{(Q_1T)^{n_1}(Q_2\hat{T})^{n_2}(S\hat{S})}=:\sum_{q_1^{n_1},q_2^{n_2}} \ket{q_1^{n_1},q_2^{n_2}}\!\!\bra{q_1^{n_1},q_2^{n_2}}\otimes\rho_{T^{n_1}\hat{T}^{n_2}S\hat{S}}\big(q_1^{n_1},q_2^{n_2}\big),
\end{align}
and taking $n_1 = n_2 = |Q|$, one can check that the objective function can be expressed in terms of the single renormalised block
\begin{align}
W_{T^{|Q|}\hat{T}^{|Q|}S\hat{S}}
:=
\frac{\rho_{T^{|Q|}\hat{T}^{|Q|}S\hat{S}}(1,2,\cdots,|Q|,1,2,\cdots,|Q|)}
{\pi_1(1)\pi_1(2)\ldots\pi_1(|Q|)\pi_2(1)\pi_2(2)\ldots\pi_2(|Q|)},
\end{align}
where $\pi_1(q_1) \pi_2(q_2) = \pi(q_1,q_2)$. Then, ignoring all other blocks and only enforcing the positivity, normalisation, and PPT constraints gives for $|Q|$ questions on each side and, e.g., the uniform distribution $\pi_{\text{unif}} = \frac{1}{|Q|^2}$ for the questions, the outer relaxation
\begin{align}
\overline{\sdp}_{|Q|}^{\operatorname{proj}}\left(V,\pi_{\text{unif}},2\right)\leq\frac{1}{|Q|^2}\cdot\sdp_{PPT}(V)
\end{align}
with the SDP from \cite[Equation (6)]{navascues2014characterization}
\begin{align}
\sdp_{PPT}(V)&:=\max_W \sum_{a_1,a_2,q_1,q_2} V(a_1,a_2,q_1,q_2)
\tr\left[\Lambda_{T\hat{T}S\hat{S}}(a_1,a_2) W_{T_{q_1}\hat{T}_{q_2}S\hat{S}}\right] \\
\text{s.t.}\quad &W_{T^{|Q|}\hat{T}^{|Q|}S\hat{S}} \geq 0\, , \quad \tr\left[W_{T^{|Q|}\hat{T}^{|Q|}S\hat{S}}\right] = 1 \\
& W_{T^{|Q|}\hat{T}^{|Q|}S\hat{S}}^{T_{T^{|Q|}}}\geq 0,\,W_{T^{|Q|}\hat{T}^{|Q|}S\hat{S}}^{T_{\hat{T}^{|Q|}}}\geq 0,\,\ldots\;\text{(PPT constraints),}
\end{align}
where the $|Q|^{-2}$ coefficient arises as we are considering a joint probability distribution $p(a_1,a_2,q_1,q_2)$ rather than a conditional one $p(a_1,a_2|q_1,q_2)$. Hence, in this scenario for $|Q|$ questions on each side, the $|Q|$-th level of our hierarchy is never a worse upper bound than the previously studied $\sdp_{PPT}(V)$.


\subsection{Asymptotic convergence analysis}\label{app:asymptotic}

Even in this specific setting with $|A| = |T| = 2$, an advantage of our techniques compared to the previous $\sdp_{PPT}(V)$ and its extendibility extensions discussed in \cite{navascues2014characterization}, is that for a slight variation we can give an approximation scheme scaling polynomially in $|Q|$. Namely, when extending as $(Q_1T)(Q_2\hat{T})^{n_1}(S\hat{S})^{n_2}$ instead of $(Q_1T)^{n_1}(Q_2\hat{T})^{n_2}(S\hat{S})$, we get for the resulting SDP hierarchy denoted by $\sdp_{n_1,n_2}^{\operatorname{proj}}(V,\pi,2)$ that
\begin{align}
\Big|&\sdp_{n_1,n_2}^{\operatorname{proj}}(V,\pi,2) - \omega_{Q(2)}(V,\pi)\Big|\leq128\cdot\sqrt{\frac{ 17}{n_1} + \frac{1}{n_2}}.
\end{align}
The asymptotic convergence analysis is done similarly as for the general $\sdp_{n_1,n_2}(V,\pi,T)$ in the main text.


\subsection{Adding constraints to the general SDP hierarchy}

So far, we have discussed how the assumption of rank-one projective measurements can be used to rewrite
the SDP relaxations of the main text. The variable of the obtained program is smaller than the one in the original
relaxation, thus allowing for a more efficient numerical implementation. Alternatively, we can also enforce the rank-one projective assumption directly into our SDP relaxations, by suitably strengthening the non-signalling linear constraints.

In the following, we introduce additional constraints for a variation of the SDP hierarchy $\sdp_{n_1,n_2}(V,\pi,T)$, which follow from the rank-one projective assumption. As we did in Appendix \ref{app:adapted}, instead of extending the subsystem associated with the quantum assistance $S\hat{S}$, we extend $A_1Q_1T$ and $A_2Q_2\hat{T}$ and explicitly make use of the fact that $A_1Q_1$ and $A_2Q_2$ are classical registers. That is, the program's variable can be reduced into diagonal blocks
\begin{align}
&\rho_{(A_1Q_1T)^{n_1}(A_2Q_2\hat{T})^{n_2}S\hat{S}}\\
&=\sum_{a_1^{n_1},a_2^{n_2},q_1^{n_1},q_2^{n_2}}\, \ket{a_1^{n_1},a_2^{n_2},q_1^{n_1},q_2^{n_2}}\bra{a_1^{n_1},a_2^{n_2},q_1^{n_1},q_2^{n_2}}_{(A_1Q_1)^{n_1}(A_2Q_2)^{n_2}}\otimes\rho_{T^{n_1}\hat{T}^{n_2}S\hat{S}}
\left(a_1^{n_1},a_2^{n_2},q_1^{n_1},q_2^{n_2}\right).
\end{align}
Now, when Alice and Bob can only perform measurements composed by rank-$1$ projectors, we can replace the non-signalling linear constraints with the following strengthened conditions. For a fixed value of the indices $(a_1^{n_1},a_2^{n_2},q_1^{n_1},
q_2^{n_2})$, the new constraints are
\begin{align}\label{eqs:stronger_NS}
&\forall\,\tilde{a}^{(1)}_1, \tilde{q}^{(1)}_1:\quad\sum_{a^{(1)}_1} \,\rho_{T^{n_1}\hat{T}^{n_2}S\hat{S}}
\left(
a^{(1)}_1,\ldots,a^{(n_1)}_1,
a^{(1)}_2,\ldots,a^{(n_2)}_2,
q^{(1)}_1,\ldots,q^{(n_1)}_1,
q^{(1)}_2,\ldots,q^{(n_2)}_2
\right)= \\
&\frac{\pi(q^{(1)}_1)}{\pi(\tilde{q}^{(1)}_1)} \, \mathds{1}_{T_1}\otimes\Trp{T_1}{\rho_{T^{n_1}\hat{T}^{n_2}S\hat{S}}\left(\tilde{a}^{(1)}_1,a^{(2)}_1,\ldots,a^{(n_1)}_1,a^{(1)}_2,\ldots,a^{(n_2)}_2,\tilde{q}^{(1)}_1,q^{(2)}_1,\ldots,q^{(n_1)}_1,q^{(1)}_2,\ldots,q^{(n_2)}_2\right)}
\end{align}
as well as
\begin{align}
&\forall \, \tilde{a}^{(1)}_2, \tilde{q}^{(1)}_2:\quad\sum_{a^{(1)}_2} \, \rho_{T^{n_1}\hat{T}^{n_2}S\hat{S}}
\left(
a^{(1)}_1,\ldots,a^{(n_1)}_1,
a^{(1)}_2,\ldots,a^{(n_2)}_2,
q^{(1)}_1,\ldots,q^{(n_1)}_1,
q^{(1)}_2,\ldots,q^{(n_2)}_2
\right)=\\
&\frac{\pi(q^{(1)}_2)}{\pi(\tilde{q}^{(1)}_2)} \, \mathds{1}_{\hat{T}_1}\otimes\Trp{\hat{T}_1}{\rho_{T^{n_1}\hat{T}^{n_2}S\hat{S}}\left(a^{(1)}_1,\ldots,a^{(n_1)}_1,\tilde{a}^{(1)}_2,a^{(2)}_2,\ldots,a^{(n_2)}_2,q^{(1)}_1,\ldots,q^{(n_1)}_1,\tilde{q}^{(1)}_2,q^{(2)}_2,\ldots,q^{(n_2)}_2\right)}.
\end{align}
It is easy to see, using the form of the optimal variable for our SDP relaxations, that the above constraints
are satisfied only when the measurement operators are composed by rank-one projectors. When this is not the
case, we would obtain an unknown normalisation coefficient coming from the trace in the RHS of the above equations. Similarly as in Appendix \ref{app:adapted}, the resulting SDPs lead to bounds at least as good as $\frac{1}{|Q|^2}\cdot\sdp_{PPT}(V)$ from \cite{navascues2014characterization}, when going to the $|Q|$-level for games with $|Q|$ questions on each side. For example, when $|A|=|Q|=|T|=2$ this is achieved by expressing the objective function in terms of the renormalised single block variable
\begin{align}
Z_{T_1T_2\hat{T}_1\hat{T}_2S\hat{S}}:=\frac{|T|^4}{\pi_1(0)\pi_1(1)\pi_2(0)\pi_2(1)}\cdot\rho_{T_1T_2\hat{T}_1\hat{T}_2S\hat{S}}(0,0,0,0,1,2,1,2)
\end{align}
and then forgetting about all other blocks in the constraints.

\end{appendices}


\bibliographystyle{abbrv}
\bibliography{bibliography}

\end{document}